\DeclareMathOperator{\ran}{Ran}
\renewcommand{\Im}{\mathrm{Im}}
\renewcommand{\Re}{\mathrm{Re}}
\newcommand{\de}{\mathrm{d}}
\newcommand{\braket}[2]{ \langle #1 , #2 \rangle }  
\newcommand{\norm}[1]{\bigl\lVert #1 \bigr\rVert} 
\newcommand{\ass}[1]{\left\lvert #1 \right\rvert} 
\newcommand{\dtot}[2]{\frac{\mathrm{d} #1}{\mathrm{d} #2}} 
\newcommand{\ide}[1]{\mathrm{d}#1 \,} 
\newcommand{\V}{\widetilde{V}}
\newcommand{\W}{\widetilde{W}}
\renewcommand{\v}{\widetilde{v}}
\newcommand{\U}{\widetilde{U}}
\renewcommand{\H}{\mathscr{H}}
\newcommand{\R}{\mathds{R}}
\newcommand{\nnorm}[1]{\bigl\lvert\mspace{-1.65mu}\bigl\lVert #1 \bigr\rVert\mspace{-1.65mu}\bigr\rvert}
\newcommand{\clos}[1]{#1^{\overline{\phantom{=}}}}
\theoremstyle{plain}
\newtheorem{thm}{Theorem}
\newtheorem{proposition}{Proposition}
\newtheorem{lemma}{Lemma}
\newtheorem*{lemma*}{Lemma}
\newtheorem*{corollary*}{Corollary}
\theoremstyle{definition}
\newtheorem*{definition*}{Definition}
\newtheorem*{definitions*}{Definitions}
\theoremstyle{remark}
\newtheorem*{remark*}{Remark}
\newtheorem*{remarks*}{Remarks}
\numberwithin{equation}{section}
\numberwithin{lemma}{section}
\numberwithin{proposition}{section}
\numberwithin{corollary}{section}
\numberwithin{definition}{section}
\begin{document}
\date{\today}
\title{Classical limit of the Nelson model with cut off.}
\author{Marco Falconi}
\affiliation{Dipartimento di Matematica, Università di Bologna\\Piazza di Porta San Donato 5 - 40126 Bologna, Italy}
\email[]{m.falconi@unibo.it}
\begin{abstract}
In this paper we analyze the classical limit of the Nelson model with cut off, when both non-relativistic and relativistic particles number goes to infinity. We prove convergence of quantum observables to the solutions of classical equations, and find the evolution of quantum fluctuations around the classical solution. Furthermore we analyze the convergence of transition amplitudes of normal ordered products of creation and annihilation operators between different types of initial states. In particular the limit of normal ordered products between states with a fixed number of both relativistic and non-relativistic particles yields an unexpected quantum residue: instead of the product of classical solutions we obtain an average of the product of solutions corresponding to varying initial conditions.
\end{abstract}
\maketitle
\section{Introduction.}
\label{sec:introduction}

Since the development of quantum mechanics it has been natural to analyze the connection between classical and quantum mechanical descriptions of physical systems. In particular one should expect that in some sense when quantum mechanical effects becomes negligible the system will behave like it is dictated by classical mechanics. One famous relation between classical and quantum theory dates back to early days of quantum mechanics and it is due to~\citet{ehrenfest}. This result was later developed and put on firm mathematical foundations by~\citet{MR0332046}. He proved that matrix elements of bounded functions of quantum observables between suitable coherents states (that depend on $\hslash$) converge to classical values evolving according to the expected classical equations when $\hslash\to 0$. Furthermore he also provides information about the quantum fluctuations of the system in the classical limit: their dynamics is obtained linearizing quantum evolution equation around the classical solution. His results were later generalized by~\citet{MR530915,MR605198} to bosonic systems with infinite degrees of freedom and scattering theory.~\citet*{MR2205462} applied the method described in~\citep{MR530915} to perform a partially classical limit of the Nelson model where only the number of relativistic particles goes to infinity;~\citet{MR2530155} used the same method to obtain estimates on the rate of convergence of transition amplitudes of normal ordered products of creation and annihilation operators in the mean field limit of bosonic systems. These latter results were then refined by~\citet{2011JMP} and by~\citet*{2011arXiv1103.0948C}. Mean field limits of bosonic systems has also been treated using a BBGKY hierarchy as introduced by~\citet{MR578142}~\citep[see][and references thereof contained]{MR1926667,MR2276262}, and by a method of counting introduced by~\citet{MR2821235}.

\subsection{The Hilbert space of quantum theory.}
\label{sec:hilb-space-quant}

In order to introduce the system we would like to study, we start defining the space on which the theory is set. We call it $\H$ and it is the tensor product of two symmetric Fock spaces over $L^2(\mathds{R}^3)$. Let $x_i$ for $i=1,\dotsc,p$ and $k_j$ for $j=1,\dotsc,n$ be vectors of $\mathds{R}^3$, and define
\begin{equation*}
  \mathscr{H}_{p,n}=\bigl\{ \Phi_{p,n}:\;\Phi_{p,n}(x_1,\dotsc,x_p;k_1,\dotsc,k_n)\in L^2(\mathds{R}^{3p+3n})  \bigr\}\; ,
\end{equation*}
where $\Phi_{p,n}$ is separately symmetric with respect to the first $p$ and the last $n$ variables. Then
\begin{equation*}
  \mathscr{H}=\bigoplus_{p,n=0}^\infty\mathscr{H}_{p,n}\; .
\end{equation*}
The vacuum state will be denoted by $\Omega$. We will use freely the following properties of the tensor product of Hilbert spaces:
\begin{gather*}
  \mathscr{H}_{p,n}=\mathscr{H}_{p,0}\otimes\mathscr{H}_{0,n}\\
\mathscr{F}_s(p)\equiv\bigoplus_{p=0}^\infty\mathscr{H}_{p,0}\; ,\; \mathscr{F}_s(n)\equiv\bigoplus_{n=0}^\infty\mathscr{H}_{0,n}\\
\mathscr{H}_p\equiv\mathscr{H}_{p,0}\otimes \bigoplus_{n=0}^\infty\mathscr{H}_{0,n}\\
\mathscr{H}=\mathscr{F}_s(p)\otimes\mathscr{F}_s(n)=\bigoplus_{p=0}^\infty\mathscr{H}_p\; .
\end{gather*}

We will call $\psi^{\#}(x)$ the annihilation and creation operator-valued distributions corresponding to $\mathscr{F}_s(p)$, $a^{\#}(k)$ the ones corresponding to $\mathscr{F}_s(n)$:
\begin{equation*}
    (\psi(x)\Phi)_{p,n}(x_1,\dotsc,x_p;k_1,\dotsc,k_n)=\sqrt{p+1}\, \Phi_{p+1,n}(x,x_1,\dotsc,x_p;k_1,\dotsc,k_n)\mspace{117mu}
\end{equation*}
\begin{equation*}
   (\psi^*(x)\Phi)_{p,n}(x_1,\dotsc,x_p;k_1,\dotsc,k_n)=\frac{1}{\sqrt{p}}\sum_{i=1}^p\delta(x-x_i)\Phi_{p-1,n}(x_1,\dotsc,\hat{x}_i,\dotsc,x_p;k_1,\dotsc,k_n)\; ;
\end{equation*}
\begin{equation*}
    (a(k)\Phi)_{p,n}(x_1,\dotsc,x_p;k_1,\dotsc,k_n)=\sqrt{n+1}\, \Phi_{p,n+1}(x_1,\dotsc,x_p;k,k_1,\dotsc,k_n) \mspace{122mu}
\end{equation*}
\begin{equation*}
  (a^*(k)\Phi)_{p,n}(x_1,\dotsc,x_p;k_1,\dotsc,k_n)=\frac{1}{\sqrt{n}}\sum_{j=1}^n\delta(k-k_j)\Phi_{p,n-1}(x_1,\dotsc,x_p;k_1,\dotsc,\hat{k}_j,\dotsc,k_n)\; ,
\end{equation*}
where $\hat{x}_i$ or $\hat{k}_j$ means such variable is missing.
They satisfy the canonical commutation relations
\begin{gather*}
  [\psi(x),\psi^*(x')]=\delta(x-x')\; ,\; [\psi(x),\psi(x')]=[\psi^*(x),\psi^*(x')]=0\\
    [a(k),a^*(k')]=\delta(k-k')\; ,\; [a(k),a(k')]=[a^*(k),a^*(k')]=0\; ;
\end{gather*}
obviously also $[\psi^\#(x),a^\#(k)]=0$ since they correspond to distinct Fock spaces. We will also use the following abbreviations:
\begin{equation*}
  \int\de X_p\equiv \int_{\mathds{R}^3}\ide{x_1}\dotsm\int_{\mathds{R}^3}\de x_p\; ,\; \int\de K_n\equiv \int_{\mathds{R}^3}\ide{k_1}\dotsm\int_{\mathds{R}^3}\de k_n\; ;
\end{equation*}
\begin{equation*}
 \psi^\#(X_p)\equiv\prod_{i=1}^p\psi^\#(x_i)\; ,\; a^\#(K_n)\equiv\prod_{j=1}^n a^\#(k_j)\; .
\end{equation*}
For any $f,g\in L^2(\mathds{R}^3)$ we define the annihilation and creation operators:
\begin{gather*}
  \psi^\#(f)=\int\ide{x}f(x)\psi^\#(x)\; ;\; a^\#(g)=\int\ide{k}g(k)a^\#(k)\; .
\end{gather*}
We also have a particle number operator corresponding to each Fock subspace, we will call them $N_1$ and $N_2$ and are defined as follows
\begin{gather*}
(N_1\Phi)_{p,n}=p\: \Phi_{p,n}\; ,\; (N_2\Phi)_{p,n}=n\: \Phi_{p,n}\; ;\\
N= N_1+N_2\; .  
\end{gather*}
The corresponding domains of definition are respectively $D(N_1)$, $D(N_2)$ and $D(N)$.

\subsection{The Nelson Hamiltonian.}
\label{sec:nels-hamilt-fock}

We are now able to introduce the Hamiltonian that describes the dynamics of our system. It has been widely studied in mathematical physics~\citep[see for example][]{EugeneP1962219,nelson:1190,MR1809881,MR1943190}. It was introduced to describe a theory of non-relativistic nucleons (bosonic or fermionic) interacting with a meson field; recent developments in the study of quantum optics showed this model is also useful to describe systems of bosons interacting with radiation.

In the language of second quantization, the non-relativistic boson particles are described by a Schrödinger field on $\mathscr{F}_s(p)$, with mass $M>0$; the relativistic boson field is described by a Klein-Gordon field on $\mathscr{F}_s(n)$ with mass $\mu\geq 0$. So the free part of the Hamiltonian $H_0$ can be written as
\begin{equation*}
      H_0=\frac{1}{2M}\int\ide{x}(\nabla\psi)^*(x)\nabla\psi(x)+\int\ide{k}\omega(k)a^*(k)a(k)
\end{equation*}
with $\omega(k)=\sqrt{k^2+\mu^2}$. The interaction occurs between the two different species, with a cutoff for large momenta of the relativistic field, and coupling constant $\lambda>0$:
\begin{gather*}
  H_I=\lambda\int\ide{x}\varphi(x)\psi^*(x)\psi(x)\; ,\\
   \varphi(x)=\int\frac{\de k}{(2\pi)^{3/2}}\frac{1}{(2\omega)^{1/2}}\chi(k)\Bigl(a(k)e^{ikx}+a^*(k)e^{-ikx}\Bigr)\text{ with}\\
\chi(k)=\left\{
  \begin{aligned}
    1&\text{ if $\ass{k}\leq\sigma$}\\
    0&\text{ if $\ass{k}>\sigma$}
  \end{aligned}
\right.\; .
\end{gather*}
Finally the Nelson Hamiltonian is the sum of $H_0$ and $H_I$:
\begin{equation*}
      H=\frac{1}{2M}\int\ide{x}(\nabla\psi)^*(x)\nabla\psi(x)+\int\ide{k}\omega(k)a^*(k)a(k)+\lambda\int\ide{x}\varphi(x)\psi^*(x)\psi(x)\; .
\end{equation*}
For further details on the physical significance of such system see \citet{EugeneP1962219}.

We call $U(t)$ the unitary evolution generated by $H$, $U_0(t)$ the one generated by $H_0$:
\begin{gather*}
  U(t)=\exp\{-itH\}\; ;\; U_0(t)=\exp\{-itH_0\}\; \forall t\in\mathds{R}\; .
\end{gather*}
This dynamics leaves invariant the number of non-relativistic particles $N_1$, and so also each subspace $\mathscr{H}_p$. However since we want to consider the mean field limit of the system (where the number of non-relativistic particles goes to infinity), we need to consider the whole space $\mathscr{H}$.
\subsection{The classical equations.}
\label{sec:classical-equations}

Classical dynamics is described by a semi-linear Schrödinger/Klein-Gordon system of equations:
\begin{equation}\label{eq:1}
  \left\{
  \begin{aligned}
    \Bigl(i\partial_t+\frac{1}{2M} \Delta\Bigr) u&= (2\pi)^{-3/2}(\mathcal{F}^{-1}(\chi) * A) u\\
    (\partial_t^2-\Delta +\mu^2)A&= -(2\pi)^{-3/2}\mathcal{F}^{-1}(\chi)* \ass{u}^2\; ;
  \end{aligned}\right .
\end{equation}
where $u$ is a complex-valued and $A$ a real-valued function of $\mathds{R}^3$, and we use the following convention for the Fourier transform in $L^2(\mathds{R}^3)$:
\begin{align*}
   \mathcal{F}(g)(k)&=\frac{1}{(2\pi)^{3/2}}\int\ide{x}e^{-ikx}g(x) \; .
 \end{align*}
It is useful to rewrite equation~\eqref{eq:1} as
\begin{equation}\label{eq:2}
  \left\{
  \begin{aligned}
    i\partial_t u&=-\frac{1}{2M} \Delta u + (2\pi)^{-3/2}(\mathcal{F}^{-1}(\chi) * A) u\\
    i\partial_t\alpha&= \omega \alpha +(2\omega)^{-1/2} \chi \mathcal{F}(\ass{u}^2)\; ;
  \end{aligned}\right .
\end{equation}
and $A=\mathcal{F}((2\omega)^{-1/2}\bar{\alpha})+\mathcal{F}^{-1}((2\omega)^{-1/2}\alpha)$. Existence and uniqueness of a solution, in a suitable sense, to the $L^2(\mathds{R}^3)$-Cauchy problem associated with~\eqref{eq:2} is discussed in section~\ref{sec:classical-theory}.

\subsection{The classical limit.}
\label{sec:classical-limit}

We want to study the behavior of the quantum system when the number of both relativistic and non-relativistic particles is very large. It is expected that in such limit the dynamics of a non-relativistic particle would be coupled to the one of the classical Klein-Gordon field, as dictated by classical equations~\eqref{eq:1}.

One has to choose initial states suitable to perform the limit. Let $i$, $j$ $\in\mathds{N}$ be respectively the number of non-relativistic and relativistic particles (so that the classical limit corresponds to $i$, $j\to\infty$), and $u_0$, $\alpha_0\in L^2(\mathds{R}^3)$ such that $\norm{u_0}_2=\norm{\alpha_0}_2=1$; we define the following vectors of $\mathscr{H}$:
\begin{equation}
  \label{eq:4}
  \begin{aligned}
  \Lambda=C(\sqrt{i}\,u_0,\sqrt{j}\,\alpha_0)\Omega\; ;\; \Psi=u_0^{\otimes_{i}}\otimes C_n(\sqrt{j}\,\alpha_0)\Omega\in \mathscr{H}_{i}\; ;\; \Theta=u_0^{\otimes_{i}}\otimes\alpha_0^{\otimes_{j}}\in \mathscr{H}_{i,j}
  \end{aligned}
\end{equation}
where $C$ and $C_n$ are the Weyl operators defined for all $u$, $\alpha\in L^2(\mathds{R}^3)$ as following
\begin{gather*}
  C_p(u)=\exp\bigl\{\bigl(\psi^*(u)-\psi(\bar{u})\bigr)^{\overline{\phantom{=}}}\bigr\}\text{ on }\mathscr{F}_s(p)\\
  C_n(\alpha)=\exp\bigl\{\bigl(a^*(\alpha)-a(\bar{\alpha})\bigr)^{\overline{\phantom{=}}}\bigr\}\text{ on }\mathscr{F}_s(n)\\
C(u,\alpha)=C_p(u)\otimes C_n(\alpha)=\exp\bigl\{\bigl(\psi^*(u)-\psi(\bar{u})+a^*(\alpha)-a(\bar{\alpha})\bigr)^{\overline{\phantom{=}}}\bigr\}\text{ on }\mathscr{H}\text{ ,}
\end{gather*}
$\clos{A}$ stands for the closure of any closable operator $A$. The properties of Weyl operators we will use the most are stated in section~\ref{sec:weyl-operators}. Observe that at fixed time $\braket{\Xi}{\psi^\#(f)\Xi}\sim \sqrt{i}$, $\braket{\Xi}{a^\#(f)\Xi}\sim \sqrt{j}$ for all $\Xi\in \{\Lambda,\Psi,\Theta\}$, in accordance with the interpretation of $i$ and $j$ as number of non-relativistic and relativistic particles respectively.

Consider now time evolution in the Heisenberg picture. The quantum variables $\psi^\#(t,x)$ and $a^\#(t,k)$ obey the following evolution equations:
\begin{equation}\label{eq:5}
\left\{
  \begin{aligned}
    i\partial_t\psi&=-\frac{1}{2M} \Delta\psi +\lambda \varphi \psi\\
    i\partial_ta&= \omega a +\lambda \frac{\chi}{\sqrt{2\omega}} \mathcal{F}(\psi^*\psi)
  \end{aligned}\right .\; .
\end{equation}
In order to obtain a non trivial limiting equation for~\eqref{eq:5} when $i$, $j\to\infty$ we need to relate $\lambda$ to $i$ and $j$, according to
\begin{equation}
  \label{eq:3}
 i=j=\lambda^{-2}\; .
\end{equation}
So the mean field limit is also a weak coupling limit $\lambda\to 0$.

Using the quantum evolution operator $U(t)$ we can write the solution of~\eqref{eq:5}, with initial condition
\begin{equation*}
\left\{
  \begin{aligned}
    \psi(0,x)&=\psi(x)\\
    a(0,k)&= a(k)
  \end{aligned}\right .
\end{equation*}
as
\begin{equation}
  \label{eq:6}
\left\{
  \begin{aligned}
    \psi(t,x)&=U^*(t)\psi(x)U(t)\\
    a(t,k)&=U^*(t)a(k)U(t)
  \end{aligned}\right .\; .
\end{equation}
As discussed above $\psi^\#(0)\sim a^\#(0)\sim\lambda^{-1}$ when averaged over the vectors defined in~\eqref{eq:4}, so $\lambda\psi^\#(t,x)$ and $\lambda a^\#(t,k)$ are expected to have finite limit when $\lambda\to 0$. In fact we prove that their average converge to the solution of classical equations. We also extend the convergence results to normal ordered products of creation and annihilation operators (each one again multiplied by $\lambda$ to ensure convergence). These results are discussed in section~\ref{sec:stat-main-results}; in order to do that we define, for all $\Phi\in\mathscr{H}$, $\Xi\in\{\Lambda,\Psi,\Theta\}$ (see equation~\eqref{eq:4}):
\begin{equation}
  \label{eq:8}
  \begin{aligned}
        \langle\lambda \psi^\#(t,\cdot)\rangle_{C\Phi}&\equiv\braket{C(u_0/\lambda,\alpha_0/\lambda)\Phi}{U^*(t)\lambda\psi^\#(\cdot)U(t) C(u_0/\lambda,\alpha_0/\lambda)\Phi}\\
    \langle\lambda a^\#(t,\cdot)\rangle_{C\Phi}&\equiv\braket{C(u_0/\lambda,\alpha_0/\lambda)\Phi}{U^*(t)\lambda a^\#(\cdot)U(t) C(u_0/\lambda,\alpha_0/\lambda)\Phi}\; ,
  \end{aligned}
\end{equation}
\begin{equation}
  \label{eq:9}
  \begin{split}
      \langle \psi^*(q)\psi(r)a^*(h)a(l) \rangle_\Xi(t)\equiv \lambda^\delta\braket{\Xi}{U^*(t)\psi^*(X_q)\psi(Y_r)a^*(K_h)a(M_l)U(t)\Xi}\\
=\lambda^\delta\braket{\Xi}{U^*(t)\prod_{i=1}^q\psi^*(x_i) \prod_{i'=1}^r\psi(y_{i'}) \prod_{j=1}^h a^*(k_j) \prod_{j'=1}^l a(m_{j'})U(t)\Xi}\; ;
  \end{split}
\end{equation}
where $\delta=q+r+h+l$.

We have not yet considered fluctuations around the classical solution. If we write $H$ as a function of $\lambda\psi$ and $\lambda a$ we have that
\begin{equation*}
  H=\lambda^{-2}h(\lambda\psi,\lambda a)\; ,
\end{equation*}
with
\begin{equation*}
  h(\psi,a)=\frac{1}{2M}\int\ide{x}(\nabla\psi)^*\nabla\psi+\int\ide{k}\omega a^*a+\int\ide{x}\varphi\psi^*\psi\; .
\end{equation*}
Let $(u,\alpha)$ be the classical solution, and expand $h$ around $(u,\alpha)$:
\begin{equation*}
  \begin{split}
      h(\lambda\psi,\lambda a)=h(u,\alpha)+h_1(\lambda\psi-u,\lambda a -\alpha)+h_2(\lambda\psi-u,\lambda a -\alpha)+h_3(\lambda\psi-u,\lambda a -\alpha)
  \end{split}
\end{equation*}
with
\begin{equation*}
    h_1(\psi,a)=-\frac{1}{2M}\int\ide{x}\Delta u \psi^*+\int\ide{k}\omega\alpha a^*+\int\ide{x}\bigl(\frac{1}{2}\ass{u}^2\varphi+((2\pi)^{-3/2}\mathcal{F}^{-1}(\chi) * A)u\psi^*\bigr)+\text{h.c.}\mspace{33mu}
\end{equation*}
\begin{equation*}
  h_2(\psi,a)=\frac{1}{2M}\int\ide{x}(\nabla\psi)^*\nabla\psi+\int\ide{k}\omega a^*a+\biggl[\int\ide{x}\bigl(\frac{1}{2}((2\pi)^{-3/2}\mathcal{F}^{-1}(\chi) * A)\psi^*\psi+u\varphi\psi^*\bigr)+\text{h.c.}\biggr]
\end{equation*}
\begin{equation*}
  h_3(\psi,a)=\int\ide{x}\varphi\psi^*\psi\; .\mspace{525mu}
\end{equation*}
Now define
\begin{equation*}
  \begin{aligned}
  h_{k,\psi}(\psi,a)&=[\psi,h_k(\psi,a)]\; ;\; h_{k,a}(\psi,a)&=[a,h_k(\psi,a)]\text{ with $k=1$, $2$, $3$.}
  \end{aligned}
\end{equation*}
Since $(u,\alpha)$ satisfy~\eqref{eq:2}, equations~\eqref{eq:5} then could be rewritten as
\begin{equation}\label{eq:10}
  \left\{
    \begin{aligned}
      i\partial_t(\psi-u_\lambda)=&h_{2,\psi}(\psi - u_\lambda,a -\alpha_\lambda) +\lambda h_{3,\psi}(\psi - u_\lambda,a -\alpha_\lambda)\\
      i\partial_t(a-\alpha_\lambda)=&h_{2,a}(\psi - u_\lambda,a -\alpha_\lambda)+\lambda h_{3,a}(\psi - u_\lambda,a -\alpha_\lambda)
    \end{aligned}
\right.
\end{equation}
where
\begin{equation*}
  \left\{
    \begin{aligned}
      u_\lambda&=\frac{1}{\lambda}u\\
      \alpha_\lambda&=\frac{1}{\lambda}\alpha
    \end{aligned}
\right.\; .
\end{equation*}
When $\lambda\to 0$,~\eqref{eq:10} describes the evolution of quantum fluctuations. In order to take the limit it is necessary to define new variables with initial conditions independent of $\lambda$:
\begin{equation*}
\left\{
    \begin{aligned}
      \theta(t)&\equiv C(u_\lambda(s),\alpha_\lambda(s))^*(\psi(t)-u_\lambda(t))C(u_\lambda(s),\alpha_\lambda(s))\\
      b(t)&\equiv C(u_\lambda(s),\alpha_\lambda(s))^*(a(t)-\alpha_\lambda(t))C(u_\lambda(s),\alpha_\lambda(s))
    \end{aligned}
\right.
\end{equation*}
such that $\theta(s,x)=\psi(x)$ and $b(s,k)=a(k)$. Then $\theta(t)$ and $b(t)$ satisfy the Cauchy problem
\begin{equation}\label{eq:11}
  \left\{
    \begin{aligned}
      i\partial_t\theta=&h_{2,\psi}(\theta,b) +\lambda h_{3,\psi}(\theta,b)\\
      i\partial_tb=&h_{2,a}(\theta,b)+\lambda h_{3,a}(\theta,b)
    \end{aligned}
\right.
\end{equation}
\begin{equation*}
  \left\{
    \begin{aligned}
      \theta(s)&=\psi\\
      b(s)&=a
    \end{aligned}
\right.\; .
\end{equation*}
The solution of~\eqref{eq:11} is given by
\begin{equation*}
  \left\{
    \begin{aligned}
      \theta(t,x)=&\, W^*(t,s)\psi(x) W(t,s)\\
      b(t,k)=&\, W^*(t,s)a(k) W(t,s)
    \end{aligned}
\right.
\end{equation*}
with
\begin{equation}
  \label{eq:12}
  \begin{aligned}
        W(t,s)&=C^*(u_\lambda(t),\alpha_\lambda(t))U(t-s)C(u_\lambda(s),\alpha_\lambda(s))e^{i\Lambda(t,s)}\\
      \Lambda(t,s)&=-\frac{1}{2}(2\pi)^{-3/2}\lambda^{-2}\int_s^t\de{t'}\int\ide{x}(\mathcal{F}^{-1}(\chi) * A)(t')\bar{u}(t')u(t')\; .
  \end{aligned}
\end{equation}
Taking the limit $\lambda\to 0$ in~\eqref{eq:11} we obtain the equations defining the fluctuations in the classical limit:
\begin{equation}\label{eq:13}
  \left\{
    \begin{aligned}
      i\partial_t\psi_2=&h_{2,\psi}(\psi_2,a_2)\\
      i\partial_t a_2=&h_{2,a}(\psi_2,a_2)
    \end{aligned}
\right.
\end{equation}
with initial condition
\begin{equation*}
\left\{
    \begin{aligned}
      \psi_2(s,x)&=\psi(x)\\
      a_2(s,k)&=a(k)
    \end{aligned}
\right.\; .
\end{equation*}
We can write the solution of such system as
\begin{equation*}
  \left\{
    \begin{aligned}
      \psi_2(t,x)=&\, U_2^*(t,s)\psi(x) U_2(t,s)\\
      a_2(t,k)=&\, U_2^*(t,s)a(k) U_2(t,s)
    \end{aligned}
\right.
\end{equation*}
and we call $U_2(t,s)$ the evolution of quantum fluctuations. Its precise definition will be given in section~\ref{sec:quant-fluct}. The reasoning above is purely formal, a rigorous proof of the convergence of $W(t,s)$ to $U_2(t,s)$ when $\lambda\to 0$ is needed. We give it in section~\ref{sec:convergence-wt-s}. In order to do that we need to differentiate $W(t,s)$ and $U_2(t,s)$ with respect to $t$ and $s$, and that is not possible on a suitable dense domain of $\mathscr{H}$; however passing to the interaction representation we are able to differentiate. It is then useful to define
\begin{equation}
  \label{eq:14}
    \W(t,s)=U_0^*(t)W(t,s)U_0(s)\; ;\; \U_2(t,s)=U_0^*(t)U_2(t,s) U_0(s)\; .
\end{equation}

\subsection{Definition of spaces $\mathscr{H}^\delta$ and notations about norms.}
\label{sec:defin-spac-mathscrhd}

Let $B\geq 0$ a self-adjoint operator, we define $Q(B)\subseteq \mathscr{H}$ the form domain of $B$, \emph{i.e.} $Q(B)=D(B^{1/2})$. $Q(B)$ is a Hilbert space with norm $\bigl\lVert(B+1)^{1/2}\Phi\bigr\rVert$. We denote $Q^*(B)$ the completion of $\mathscr{H}$ in the norm $\bigl\lVert(B+1)^{-1/2}\Phi\bigr\rVert$.

We can then define spaces $\mathscr{H}^\delta$, $\delta\in\mathds{R}$:
\begin{equation}
  \label{eq:hdelta}
  \mathscr{H}^\delta\equiv\left\{
  \begin{aligned}
    &Q(N^\delta)&\text{ if $\delta\geq 0$}\\
    &Q^*(N^{\ass{\delta}}) &\text{ if $\delta< 0$}
  \end{aligned}
\right .\; .
\end{equation}
Each $\mathscr{H}^\delta$ is a Hilbert space in the norm
\begin{equation*}
  \norm{\Phi}_\delta\equiv\norm{(N+1)^{\delta/2}\Phi}=\norm{(N_1+N_2+1)^{\delta/2}\Phi}\; .
\end{equation*}
We will denote with $\mathscr{B}(\delta';\delta)$ the space of bounded operators from $\mathscr{H}^{\delta'}$ to $\mathscr{H}^\delta$.

The norm of $\mathscr{H}$ is denoted by $\norm{\,\cdot\,}$, the one of $L^2(\mathds{R}^3)$ by $\norm{\,\cdot\,}_2$. The norm of a space $X$ will be denoted explicitly by $\norm{\,\cdot\,;X}$ or $\norm{\,\cdot\,}_X$, with the exception of spaces $\mathscr{H}^\delta$ and $L^p(\mathds{R}^3)$ whose norm will be denoted respectively by $\norm{\,\cdot\,}_\delta$ and $\norm{\,\cdot\,}_p$ (if the context avoids confusion).

This paper is organised as follows: in section~\ref{sec:stat-main-results} we present the main results of this work (theorems~\ref{sec:conv-quant-evol-1},~\ref{sec:class-limit-annih-1} and~\ref{sec:class-limit-annih-2}) and we give a brief summary of the proof strategy; in section~\ref{sec:classical-theory} we analyse the system of classical equations of the theory, and we prove existence and uniqueness of a solution in $\mathscr{C}^0(\mathds{R},L^2(\mathds{R}^3)\times L^2(\mathds{R}^3))$; in section~\ref{sec:quantum-theory} we describe the quantum Hamiltonian and the evolution of quantum fluctuations, as well as proving theorem~\ref{sec:conv-quant-evol-1}; finally section~\ref{sec:class-limit-annih-3} is dedicated to prove theorems~\ref{sec:class-limit-annih-1} and~\ref{sec:class-limit-annih-2}.

\section{Statement of main results and outline of the proofs.}
\label{sec:stat-main-results}

\subsection{Convergence of quantum evolution.}
\label{sec:conv-quant-evol}
As discussed in the introduction, the convergence in a suitable sense of the quantum evolution between coherent states $W(t,s)$ to the evolution of quantum fluctuations $U_2(t,s)$ has to be proved in a rigorous way. The result we can prove is strong convergence of quantum evolution in interaction representation $\W(t,s)$ to the corresponding evolution of fluctuations $\U_2(t,s)$:
\begin{thm}\label{sec:conv-quant-evol-1}
The following strong limit exists in $\mathscr{H}$:
 \begin{equation*}
   \underset{\lambda\to 0}{\mathrm{s-lim}}\:\widetilde{W}(t,s)=\widetilde{U}_2(t,s)\; ,
 \end{equation*}
uniformly in $t,s$ on compact intervals.
\end{thm}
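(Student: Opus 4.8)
The plan is to compare the two unitary cocycles $\W(t,s)$ and $\U_2(t,s)$ by a Duhamel argument, the virtue of the interaction representation being that there the two generators differ only by a term of order $\lambda$. First I would compute these generators. Differentiating $\W(t,s)=U_0^*(t)W(t,s)U_0(s)$ in $t$, and using that $W$ propagates the fluctuation dynamics~\eqref{eq:11} while $U_2$ propagates~\eqref{eq:13}, one obtains on a dense domain
\begin{equation*}
  i\partial_t\W(t,s)=\widetilde{H}_W(t)\W(t,s)\; ,\qquad i\partial_t\U_2(t,s)=\V_2(t)\U_2(t,s)\; ,
\end{equation*}
where, writing $H_2(t)$ and $H_3$ for the second quantizations of $h_2$ and $h_3$,
\begin{equation*}
  \widetilde{H}_W(t)=\V_2(t)+\lambda\,\widetilde{H}_3(t)\; ,\quad \V_2(t)=U_0^*(t)\bigl(H_2(t)-H_0\bigr)U_0(t)\; ,\quad \widetilde{H}_3(t)=U_0^*(t)H_3\,U_0(t)\; .
\end{equation*}
The interaction representation removes precisely the free part $H_0$, which carries the unbounded Laplacian responsible for the non-differentiability of $W$ and $U_2$ on $\H$; moreover the phase $e^{i\Lambda(t,s)}$ in~\eqref{eq:12} is exactly what cancels the divergent $\lambda^{-2}$ c-number produced by conjugating $U(t-s)$ with the Weyl operators, so that the generator $\widetilde{H}_W(t)$ is finite. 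Since the momentum cutoff makes $\chi/\sqrt{2\omega}\in L^2(\R^3)$, one has $\norm{\varphi(f)\Phi}\leq C\norm{f}_2\norm{(N_2+1)^{1/2}\Phi}$, and hence the relative bounds $\norm{\V_2(t)\Phi}\leq C\norm{(N+1)\Phi}$ and $\norm{\widetilde{H}_3(t)\Phi}\leq C\norm{(N+1)^{3/2}\Phi}$, with $C$ uniform for $t$ in a compact interval because $(u,\alpha)(t)$ is continuous and bounded in $L^2(\R^3)$ by section~\ref{sec:classical-theory}.

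With the generators in hand I would write the Duhamel identity: both families being unitary, for $\Phi$ in the common domain
\begin{equation*}
  \bigl(\W(t,s)-\U_2(t,s)\bigr)\Phi=-i\lambda\int_s^t\U_2(t,\tau)\,\widetilde{H}_3(\tau)\,\W(\tau,s)\Phi\;\de\tau\; ,
\end{equation*}
so that, using the unitarity of $\U_2$ and the relative bound on $\widetilde{H}_3$,
\begin{equation*}
  \bigl\lVert\bigl(\W(t,s)-\U_2(t,s)\bigr)\Phi\bigr\rVert\leq C\lambda\int_s^t\bigl\lVert(N+1)^{3/2}\W(\tau,s)\Phi\bigr\rVert\;\de\tau\; .
\end{equation*}
The entire estimate thus reduces to one a priori bound: a propagation of number moments along $\W$, uniform in $\lambda\in(0,1]$ and locally uniform in $\tau,s$, of the form $\norm{(N+1)^{3/2}\W(\tau,s)\Phi}\leq K(\ass{\tau-s})\norm{\Phi}_{2m}$ for $\Phi\in\H^{2m}$ with $m$ large enough.

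I expect this a priori estimate to be the main obstacle. Formally one studies
\begin{equation*}
  \frac{\de}{\de\tau}\bigl\lVert(N+1)^{k}\W(\tau,s)\Phi\bigr\rVert^2=-i\braket{\W(\tau,s)\Phi}{\bigl[(N+1)^{2k},\widetilde{H}_W(\tau)\bigr]\W(\tau,s)\Phi}\; ,
\end{equation*}
which must first be justified on a regularized number operator, e.g.\ by replacing $N$ with $N(1+\varepsilon N)^{-1}$ and letting $\varepsilon\to 0$. The quadratic part $\V_2(\tau)$ gives no trouble, its commutator with $(N+1)^{2k}$ being relatively bounded by $(N+1)^{2k}$; the delicate contribution is the cubic $\lambda\widetilde{H}_3(\tau)$, whose commutator naively raises the power of $N$, so that a crude Gronwall inequality on a single moment does not close. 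Two structural facts rescue the estimate: the interaction $H_3=\int\varphi\psi^*\psi$ conserves $N_1$, so that $[N_1,\widetilde{H}_3]=0$ and only the relativistic direction contributes nontrivially; and the explicit representation $W=C^*UC\,e^{i\Lambda}$, together with the classical equations~\eqref{eq:2}, controls the would-be $\lambda^{-1}$ growth introduced by the Weyl conjugation. Carrying this through yields the moment bound uniformly in $\lambda\in(0,1]$ and locally uniformly in $\tau,s$.

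Inserting the moment bound into the Duhamel estimate gives $\norm{(\W(t,s)-\U_2(t,s))\Phi}\leq C'\lambda\,\ass{t-s}\,K(\ass{t-s})\norm{\Phi}_{2m}\to 0$ as $\lambda\to 0$, for every $\Phi$ in the dense set $\H^{2m}$ and uniformly for $t,s$ in a compact interval. The strong limit then extends to all of $\H$ by density together with the uniform bound $\norm{\W(t,s)-\U_2(t,s)}\leq 2$ coming from unitarity; the stated uniformity in $t,s$ is inherited from the local uniformity of the constant $K$.
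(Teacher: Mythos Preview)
Your overall architecture is right: Duhamel in the interaction representation, with the difference of generators being precisely $\widetilde{H}_I(t)=U_0^*(t)H_IU_0(t)$ (which carries the factor $\lambda$). The gap is in how you orient the Duhamel formula and in the a priori bound you then need.

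By writing $(\W-\U_2)\Phi=-i\lambda\int_s^t\U_2(t,\tau)\widetilde{H}_3(\tau)\W(\tau,s)\Phi\,\de\tau$ you force the cubic term onto $\W(\tau,s)\Phi$, and are then obliged to control $\norm{(N+1)^{3/2}\W(\tau,s)\Phi}$ uniformly in $\lambda$. Your Gronwall sketch for this does not close: the commutator $[(N+1)^{2k},\widetilde{H}_3]$ is controlled by $N_1\sqrt{N_2}$ times $(N+1)^{2k-1}$, which dominates $(N+1)^{2k}$ by half a power; the conservation $[N_1,H_3]=0$ does not help, since it is exactly the $N_1$ factor (not the $N_2$ one) that spoils the estimate. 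In the paper this $\lambda$-uniform moment bound on $\W$ \emph{is} eventually established (Proposition~\ref{sec:useful-estimates-3}), but only through a separate regularization $\W_\nu$ with cutoff $\nu\sim\lambda^{-2}$ and a comparison argument; it is not a byproduct of the structural facts you cite.

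The paper's proof of Theorem~\ref{sec:conv-quant-evol-1} bypasses this difficulty entirely. Instead of your Duhamel, it computes $\norm{(\W-\U_2)\Phi}^2=-2\,\Re\int_s^t\partial_\tau\langle\Phi,\U_2^*(\tau,s)\W(\tau,s)\Phi\rangle\,\de\tau$, obtaining the integrand $\langle\U_2(\tau,s)\Phi,\widetilde{H}_I(\tau)\W(\tau,s)\Phi\rangle$. Since $\widetilde{H}_I$ is symmetric, it can be moved onto $\U_2(\tau,s)\Phi$; then one only needs $\norm{H_I U_0(\tau)\U_2(\tau,s)\Phi}\leq\lambda\norm{f_0}_2\norm{\U_2(\tau,s)\Phi}_{\H^4}$ together with $\norm{\W(\tau,s)\Phi}=\norm{\Phi}$. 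The required input is therefore the $\lambda$-independent bound $\norm{\U_2(\tau,s)}_{\mathscr{B}(4;4)}\leq e^{C\ass{\tau-s}}$ of Proposition~\ref{sec:evol-quant-fluct-6}, which follows by a clean Gronwall argument since the generator $\V$ of $\U_2$ is only quadratic. No moment control on $\W$ is needed at all for Theorem~\ref{sec:conv-quant-evol-1}. Reversing your Duhamel (or, equivalently, exploiting self-adjointness to transfer $\widetilde{H}_I$ across the inner product) is the missing idea.
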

However, since $\widetilde{W}(t,s)$, $\widetilde{U}_2(t,s)$ and $U_0(t)$ are all unitary operators on $\mathscr{H}$, we have also convergence of $W(t,s)$ to $U_2(t,s)$:
\begin{corollary*}
 \begin{equation*}
   \underset{\lambda\to 0}{\mathrm{s-lim}}\:W(t,s)=U_2(t,s)\; ,
 \end{equation*}
uniformly in $t,s$ on compact intervals.
\end{corollary*}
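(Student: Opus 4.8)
The plan is to transfer the already-established strong limit for the interaction-picture evolutions $\W(t,s)$ and $\U_2(t,s)$ back to the Schrödinger-picture evolutions $W(t,s)$ and $U_2(t,s)$ by undoing the conjugation by the free evolution in~\eqref{eq:14}. Solving that definition for $W$ and $U_2$ and using that $U_0$ is a unitary group, I would write $W(t,s)=U_0(t)\W(t,s)U_0^*(s)$ and $U_2(t,s)=U_0(t)\U_2(t,s)U_0^*(s)$, so that for every $\Phi\in\H$
\begin{equation*}
  \bigl\lVert\bigl(W(t,s)-U_2(t,s)\bigr)\Phi\bigr\rVert=\bigl\lVert U_0(t)\bigl(\W(t,s)-\U_2(t,s)\bigr)U_0^*(s)\Phi\bigr\rVert=\bigl\lVert\bigl(\W(t,s)-\U_2(t,s)\bigr)U_0^*(s)\Phi\bigr\rVert\; ,
\end{equation*}
where the last equality holds because $U_0(t)$ preserves the norm of $\H$.

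The naive step would be to invoke Theorem~\ref{sec:conv-quant-evol-1} directly on the vector $U_0^*(s)\Phi$. The point needing care is that this test vector depends on $s$, whereas the theorem yields the convergence $\bigl(\W-\U_2\bigr)(t,s)\chi\to 0$ (uniformly in $t,s$ on compact intervals) only for each \emph{fixed} $\chi\in\H$. To preserve uniformity in $s$ I would argue by compactness. Fixing a compact interval $I$, the map $s\mapsto U_0^*(s)\Phi$ is strongly continuous, so its image $K=\{U_0^*(s)\Phi:s\in I\}$ is a compact subset of $\H$. Given $\epsilon>0$, I cover $K$ by finitely many balls of radius $\epsilon$ centred at vectors $\chi_1,\dotsc,\chi_m\in\H$, and for each $s$ choose $\chi_{k(s)}$ within $\epsilon$ of $U_0^*(s)\Phi$.

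Using the triangle inequality and the uniform bound $\lVert\W(t,s)-\U_2(t,s)\rVert\le 2$ (both operators being unitary), one obtains, for $t,s\in I$,
\begin{equation*}
  \bigl\lVert\bigl(\W(t,s)-\U_2(t,s)\bigr)U_0^*(s)\Phi\bigr\rVert\le 2\epsilon+\max_{1\le k\le m}\;\sup_{t,s\in I}\bigl\lVert\bigl(\W(t,s)-\U_2(t,s)\bigr)\chi_k\bigr\rVert\; .
\end{equation*}
By Theorem~\ref{sec:conv-quant-evol-1} each of the finitely many suprema on the right tends to $0$ as $\lambda\to 0$, so the whole right-hand side is below $3\epsilon$ for $\lambda$ small enough, uniformly in $t,s\in I$. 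Since $\epsilon$ is arbitrary, this yields the claimed strong convergence of $W(t,s)$ to $U_2(t,s)$, uniformly in $t,s$ on compact intervals.

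The only genuine obstacle is precisely this propagation of uniformity through the $s$-dependent test vector $U_0^*(s)\Phi$: the bare unitarity of $U_0$ reduces everything to the interaction picture instantly, but it does not by itself guarantee uniformity in $s$, and that is what the compactness of the continuous curve $s\mapsto U_0^*(s)\Phi$ together with the uniform operator bound supplies. Once this argument is in place, the corollary is an immediate consequence of Theorem~\ref{sec:conv-quant-evol-1}.
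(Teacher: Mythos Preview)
Your proof is correct and follows the same idea the paper sketches in one line---namely, undoing the conjugation by the free evolution and using unitarity of $U_0(t)$. The paper simply says ``since $\widetilde{W}(t,s)$, $\widetilde{U}_2(t,s)$ and $U_0(t)$ are all unitary operators on $\mathscr{H}$, we have also convergence of $W(t,s)$ to $U_2(t,s)$,'' without further detail; you have correctly identified and filled the one subtlety the paper glosses over, namely that the test vector $U_0^*(s)\Phi$ depends on $s$, and your compactness argument (continuous image of a compact interval plus the uniform operator bound $\lVert\W-\U_2\rVert\le 2$) is exactly the right way to restore uniformity in $s$.
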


Formally, the generator of $U_2(t,s)$ cancels out the $\lambda$-independent part of the generator of $W(t,s)$, the remaining part converging strongly to zero when $\lambda\to 0$ on a suitable dense subspace of $\mathscr{H}$. So the basic idea of the proof is to use Duhamel's formula to write the difference of the two unitary operators as the integral of the derivative of their product, then use the cancellation between generators to prove strong convergence. The problem is to prove differentiability of both $U_2(t,s)$ and $W(t,s)$ in $t$ (or $s$), since their generators depend on time. This could be done only passing to interaction representation and thus getting rid of the free part $H_0$. The differentiability of $\U_2(t,s)$ is proved in section~\ref{sec:quant-fluct}, introducing a cut off in the total number of particles $N$; the differentiability of $\W(t,s)$ is proved in section~\ref{sec:evolution-wt-s}. Then using $\W(t,s)- \U_2(t,s)$, instead of $W(t,s)-U_2(t,s)$, we are able to write derivatives, and perform the suitable cancellations. The fact that $\U_2(t,s)$ is in $\mathscr{B}(\delta;\delta)$ (proposition~\ref{sec:evol-quant-fluct-6}), with a bound that does not depend on $\lambda$, ensures that everything remains bounded when $\lambda\to 0$. The complete proof can be found in section~\ref{sec:convergence-wt-s}.

\subsection{Classical limit of annihilation and creation operators.}
\label{sec:class-limit-annih}

The classical solution of~\eqref{eq:2} $(u(t),\alpha(t))$ is expected to be the mean field limit correspondent of the quantum variables $(\lambda\psi(t),\lambda a(t))$. This is true if we average the quantum variables over suitable $\lambda$-dependent coherent states:
\begin{thm}\label{sec:class-limit-annih-1}
  Let $\Phi\in \mathscr{H}^\delta$, with $\delta\geq 9$, $(u_0,\alpha_0)\in L^2(\mathds{R}^3)\times L^2(\mathds{R}^3)$. Define
  \begin{equation*}
    (u(t,\cdot),\alpha(t,\cdot))\in \mathscr{C}^0(\mathds{R},L^2(\mathds{R}^3)\times L^2(\mathds{R}^3))
  \end{equation*}
to be the solution of~\eqref{eq:2} with initial conditions $(u_0,\alpha_0)$. Then the statements below are valid:
\begin{enumerate}[i.]
\item The following limits exist in $L^2(\mathds{R}^3)$, when $\lambda\to 0$:
  \begin{align*}
    \langle\lambda\psi(t,\cdot)\rangle_{C\Phi}&\underset{\lambda\to 0}{\overset{L^2(\mathds{R}^3)}{\longrightarrow}}u(t,\cdot)\\
    \langle\lambda\psi^*(t,\cdot)\rangle_{C\Phi}&\underset{\lambda\to 0}{\overset{L^2(\mathds{R}^3)}{\longrightarrow}}\bar{u}(t,\cdot)\\
    \langle\lambda a(t,\cdot)\rangle_{C\Phi}&\underset{\lambda\to 0}{\overset{L^2(\mathds{R}^3)}{\longrightarrow}}\alpha(t,\cdot)\\
    \langle\lambda a^*(t,\cdot)\rangle_{C\Phi}&\underset{\lambda\to 0}{\overset{L^2(\mathds{R}^3)}{\longrightarrow}}\bar{\alpha}(t,\cdot)\; .
  \end{align*}
\item There are two positive constants $K_1$ and $K_2$ such that
  \begin{align*}
    \norm{\langle\lambda\psi(t,\cdot)\rangle_{C\Phi}-u(t,\cdot)}_2&\leq \lambda K_1(1+\ass{t})e^{K_2\ass{t}}\norm{\Phi}_\delta^2\\
    \norm{\langle\lambda\psi^*(t,\cdot)\rangle_{C\Phi}-\bar{u}(t,\cdot)}_2&\leq \lambda K_1(1+\ass{t})e^{K_2\ass{t}}\norm{\Phi}_\delta^2\\
    \norm{\langle\lambda a(t,\cdot)\rangle_{C\Phi}-\alpha(t,\cdot)}_2&\leq \lambda K_1(1+\ass{t})e^{K_2\ass{t}}\norm{\Phi}_\delta^2\\
    \norm{\langle\lambda a^*(t,\cdot)\rangle_{C\Phi}-\bar{\alpha}(t,\cdot)}_2&\leq \lambda K_1(1+\ass{t})e^{K_2\ass{t}}\norm{\Phi}_\delta^2\; .
  \end{align*}
\item If $\Phi=\Omega$, the vacuum state of $\mathscr{H}$, then
  \begin{align*}
    \norm{\langle\lambda\psi(t,\cdot)\rangle_{C\Omega}-u(t,\cdot)}_2&\leq \lambda^2 K_1\ass{t}e^{K_2\ass{t}}\\
    \norm{\langle\lambda\psi^*(t,\cdot)\rangle_{C\Omega}-\bar{u}(t,\cdot)}_2&\leq \lambda^2 K_1\ass{t}e^{K_2\ass{t}}\\
    \norm{\langle\lambda a(t,\cdot)\rangle_{C\Omega}-\alpha(t,\cdot)}_2&\leq \lambda^2 K_1\ass{t}e^{K_2\ass{t}}\\
    \norm{\langle\lambda a^*(t,\cdot)\rangle_{C\Omega}-\bar{\alpha}(t,\cdot)}_2&\leq \lambda^2 K_1\ass{t}e^{K_2\ass{t}}\; .
  \end{align*}
\end{enumerate}
\end{thm}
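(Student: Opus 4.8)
\emph{Reduction to a fluctuation average.} The plan is to reduce each of the four averages in~\eqref{eq:8} to an estimate on the \emph{bare} field operator in the fluctuation vector $W(t,0)\Phi$. Setting $s=0$ in~\eqref{eq:12} and recalling $u_\lambda(0)=u_0/\lambda$, $\alpha_\lambda(0)=\alpha_0/\lambda$, one has $U(t)C(u_0/\lambda,\alpha_0/\lambda)=C(u_\lambda(t),\alpha_\lambda(t))W(t,0)e^{-i\Lambda(t,0)}$. Since $\Lambda(t,0)$ is a real scalar its phase cancels in~\eqref{eq:8}, and the Weyl relation $C(u_\lambda(t),\alpha_\lambda(t))^{*}\psi(x)C(u_\lambda(t),\alpha_\lambda(t))=\psi(x)+u_\lambda(t,x)$, together with $\lambda u_\lambda=u$, yields
\begin{equation*}
  \langle\lambda\psi(t,x)\rangle_{C\Phi}=u(t,x)\norm{\Phi}^2+\lambda\braket{W(t,0)\Phi}{\psi(x)W(t,0)\Phi}\; ,
\end{equation*}
and analogously for $\psi^{*}$, $a$, $a^{*}$ (leading terms $\bar u$, $\alpha$, $\bar\alpha$). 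Taking $\norm{\Phi}=1$, as the convergence statement presupposes, the whole problem becomes the control of the remainder $\braket{W(t,0)\Phi}{\psi(\cdot)W(t,0)\Phi}$ in $L^{2}(\mathds{R}^3)$.

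\emph{The $L^{2}$ estimate and statements i., ii.} I would bound the remainder with the standard relative bound $\norm{\psi(f)\Xi}\leq\norm{f}_2\norm{N_1^{1/2}\Xi}$: testing against $f\in L^{2}(\mathds{R}^3)$ and applying Cauchy--Schwarz gives
\begin{equation*}
  \norm{\braket{W(t,0)\Phi}{\psi(\cdot)W(t,0)\Phi}}_2\leq\norm{\Phi}\,\norm{(N+1)^{1/2}W(t,0)\Phi}\; ,
\end{equation*}
and likewise with $N_2$ for the $a^{\#}$ averages. Hence it suffices to establish a number estimate $\norm{(N+1)^{1/2}W(t,0)\Phi}\leq K_1(1+\ass{t})e^{K_2\ass{t}}\norm{\Phi}_\delta$, \emph{uniform in $\lambda$}; combined with the displays above it produces the rate $\lambda K_1(1+\ass{t})e^{K_2\ass{t}}\norm{\Phi}_\delta^2$ of ii., and letting $\lambda\to0$ gives the existence of the limits in i., with the uniformity in $t,s$ on compacta inherited from the explicit bound. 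The coefficients of the limiting generator $h_2$ contain the classical solution, so $K_2$ depends only on $\sup_t\norm{u(t)}_2=\norm{u_0}_2$ and $\sup_t\norm{\alpha(t)}_2$, both finite by the classical theory.

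\emph{Main obstacle.} The hard part is exactly this uniform number bound. It is delicate because $W(t,0)=C(u_\lambda(t),\alpha_\lambda(t))^{*}U(t)C(u_0/\lambda,\alpha_0/\lambda)$ contains Weyl operators whose shifts are of order $\lambda^{-1}$, so applying $N$ to either factor in isolation blows up like $\lambda^{-2}$; the boundedness of $W(t,0)\Phi$ in $\mathscr{H}^1$ can only come from the cancellation between the shift $C(u_\lambda(t),\alpha_\lambda(t))^{*}$ and the macroscopic particle content generated by $U(t)C(u_0/\lambda,\alpha_0/\lambda)\Phi$. I would extract this cancellation from the fluctuation analysis: differentiate the interaction-picture evolution $\widetilde W(t,0)$ (section~\ref{sec:evolution-wt-s}), commute powers of $N$ through its generator---whose $\lambda$-dependent cubic part $\lambda h_3$ is what loses powers of $N$ at each step---and close a Gronwall inequality. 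This cascade of commutators is precisely what forces the threshold $\delta\geq9$, while the uniform bound of proposition~\ref{sec:evol-quant-fluct-6} on $\widetilde U_2(t,0)$ keeps the limiting dynamics in $\mathscr{B}(\delta;\delta)$ and everything bounded as $\lambda\to0$.

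\emph{The vacuum case iii.} For $\Phi=\Omega$ the leading part of the remainder vanishes: the generator $h_2$ of $U_2(t,0)$ is purely quadratic in the fields, so $U_2(t,0)$ is a Bogoliubov transformation carrying no c-number shift, $U_2(t,0)^{*}\psi(x)U_2(t,0)$ is a homogeneous linear combination of $\psi^{\#}$ and $a^{\#}$, and therefore $\braket{U_2(t,0)\Omega}{\psi(x)U_2(t,0)\Omega}=0$. Writing
\begin{equation*}
  \braket{W(t,0)\Omega}{\psi(x)W(t,0)\Omega}=\braket{(W-U_2)(t,0)\Omega}{\psi(x)W(t,0)\Omega}+\braket{U_2(t,0)\Omega}{\psi(x)(W-U_2)(t,0)\Omega}\; ,
\end{equation*}
I would bound each term by $\norm{(W-U_2)(t,0)\Omega}$ times a uniform number bound, using the quantitative rate $\norm{(W-U_2)(t,0)\Omega}\leq\lambda K_1\ass{t}e^{K_2\ass{t}}$ supplied by the Duhamel argument behind theorem~\ref{sec:conv-quant-evol-1} (the factor $\lambda$ coming from the $\lambda h_3$ term in~\eqref{eq:11}). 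This gives an extra $\lambda$, so the remainder is $O(\lambda)$ and $\langle\lambda\psi(t,\cdot)\rangle_{C\Omega}-u(t,\cdot)=O(\lambda^2)$, as claimed; the three remaining averages are handled identically.
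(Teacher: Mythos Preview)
Your reduction to the fluctuation average and the identification of the main obstacle---a $\lambda$-uniform bound on $\norm{(N+1)^{1/2}W(t,0)\Phi}$---are exactly the paper's starting point (Lemma~\ref{sec:gener-case:-conv-1}). The gap is in how you propose to prove that bound. A direct Gronwall on $M(t)=\norm{(N+1)^{-\delta}\W(t,0)\Phi}^2$ does \emph{not} close: the generator of $\W$ is $\widetilde H_I+\widetilde V$, and while $\widetilde V$ is harmless, the commutator of $h(N)$ with $\widetilde H_I$ produces a factor $N_1\sqrt{N_2}\bigl(h(N-1)h(N)^{-1}-1\bigr)\sim N_1\sqrt{N_2}/N$, which is unbounded (take $N_1\sim N$, $N_2\sim N$). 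The $\lambda$ in $H_I$ does not help, because the fluctuation vector $\W(t,0)\Phi$ has no a~priori $N_1$-control. So the differential inequality you would write cannot be integrated.

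The paper's way around this (Proposition~\ref{sec:useful-estimates-3}) is to introduce a particle-number cutoff $\mathbb R_\nu=(\mathbb N_1)_{\leq\nu}(\mathbb N_2)_{\leq\nu}$, define a regularized evolution $\W_\nu$ for which the Gronwall \emph{does} close with a $\sqrt{\nu}\lambda$ in the exponent (Lemma~\ref{sec:useful-estimates-1}), and then compare $\W$ with $\W_\nu$ by Duhamel. The comparison costs powers of $\nu^{-1}$ (via $1-\mathbb R_\nu\leq (N+1)^{2j}\nu^{-2j}$) against the a~priori, $\lambda$-\emph{divergent} bound of Proposition~\ref{sec:useful-estimates-2}; choosing $\nu=1+\lambda^{-2}$ balances the two and yields the uniform estimate. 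It is this mechanism---not a cascade of commutators---that produces the threshold $\delta'=\max(4,6\delta+3)$, hence $\delta\geq 9$ for $\delta=1$.

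For part~iii your outline is essentially the paper's (Lemma~\ref{sec:proof-theor-refs} plus Proposition~\ref{sec:useful-estimates-4}), but be careful with the rate you quote: the proof of Theorem~\ref{sec:conv-quant-evol-1} as written gives only $\norm{(\W-\U_2)\Omega}=O(\sqrt\lambda)$. To get the full $\lambda$ you must either keep the Duhamel integral inside the scalar product (as the paper does for $X_2,X_3$), or use the alternative Duhamel $(\W-\U_2)\Omega=-i\int_0^t\W(t,\tau)\widetilde H_I(\tau)\U_2(\tau,0)\Omega\,d\tau$ together with unitarity of $\W$ and the $\U_2$ number bound. Your Bogoliubov argument for $\braket{U_2\Omega}{\psi U_2\Omega}=0$ is correct in spirit; the paper extracts exactly this consequence in Proposition~\ref{sec:useful-estimates-4} without appealing to implementability of the Bogoliubov map.
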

The basic tool we need to prove the theorem is a bound of $\norm{\W(t,s)\Phi}_\delta$ that is convergent when $\lambda\to 0$. As discussed at the beginning of section~\ref{sec:class-limit-annih-3}, where such convergent bound is proved, we need to perform a regularisation in both numbers $N_1$ and $N_2$ of particles. Then comparing the regularised operator $\W_\nu(t,s)$ with $\W(t,s)$ we obtain the bound with suitable $\lambda$-dependence (proposition~\ref{sec:useful-estimates-3}). The price we have to pay is that the bound holds only on a subspace with much more regularity than a priori expected. Once we have this bound, the proof of the first two points of the theorem is a direct consequence of it, and can be found in lemma~\ref{sec:gener-case:-conv-1}. To improve the rate of convergence to $\lambda^2$, as we are able to do in the last point, we compare the quantum dynamics $\W(t,s)$ with the dynamics of fluctuations $\U_2(t,s)$. The fact that, although $\U_2$ does not preserve the number of particles, the $\U_2$-evolved quantum fields applied to the vacuum yield still one-particle states (proposition~\ref{sec:useful-estimates-4}) leads to the cancellation of the leading term of order $\lambda$, improving thus the rate of convergence to $\lambda^2$. The complete proof can be found in lemma~\ref{sec:proof-theor-refs}.

The results of the theorem above can be extended to the averages of normal ordered products of quantum variables, and to states with fixed number of particles as well as coherent states:
\begin{thm}\label{sec:class-limit-annih-2}
Let $u_0,\alpha_0\in L^2(\mathds{R}^3)$ such that $\norm{u_0}_2=\norm{\alpha_0}_2=1$. Define
  \begin{equation*}
    (u(t,\cdot),\alpha(t,\cdot))\in \mathscr{C}^0(\mathds{R},L^2(\mathds{R}^3)\times L^2(\mathds{R}^3))
  \end{equation*}
to be the solution of~\eqref{eq:2} with initial conditions $(u_0,\alpha_0)$,
\begin{equation*}
      (u_\theta(t,\cdot),\alpha_\theta(t,\cdot))\in \mathscr{C}^0(\mathds{R},L^2(\mathds{R}^3)\times L^2(\mathds{R}^3))
\end{equation*}
the solution of~\eqref{eq:2} with initial conditions $(u_0,\alpha_0(\theta))$, $\alpha_0(\theta)\equiv\exp\{-i\theta\}\alpha_0$ for all $\theta\in\mathds{R}$. Then the statements below are valid for all $q$ ,$r$, $h$, $l\in\mathds{N}$, $\delta=q+r+h+l$:
\begin{enumerate}[i.]
\item The following limits exist in $L^2(\mathds{R}^{3\delta})$ when $\lambda\to 0$:
  \begin{align*}
\langle \psi^*(q)\psi(r)a^*(h)a(l) \rangle_\Lambda(t)&\underset{\lambda\to 0}{\overset{L^2(\mathds{R}^{3\delta})}{\longrightarrow}}\bar{u}^{\otimes_q}u^{\otimes_r}\bar{\alpha}^{\otimes_h}\alpha^{\otimes_l}(t)\\
\langle \psi^*(q)\psi(r)a^*(h)a(l) \rangle_\Psi(t)&\underset{\lambda\to 0}{\overset{L^2(\mathds{R}^{3\delta})}{\longrightarrow}}\delta_{qr}\bar{u}^{\otimes_q}u^{\otimes_r}\bar{\alpha}^{\otimes_h}\alpha^{\otimes_l}(t)\\
\langle \psi^*(q)\psi(r)a^*(h)a(l) \rangle_\Theta(t)&\underset{\lambda\to 0}{\overset{L^2(\mathds{R}^{3\delta})}{\longrightarrow}}\delta_{qr}\int_0^{2\pi}\frac{\de\theta}{2\pi}\bar{u}_\theta^{\otimes_q}u_\theta^{\otimes_r}\bar{\alpha}_\theta^{\otimes_h}\alpha_\theta^{\otimes_l}(t)\; ,
  \end{align*}
$\delta_{qr}$ being the function equal to $1$ when $q=r$, $0$ otherwise.
\item For all $\Xi\in\{\Lambda,\Psi,\Theta\}$ there are two positive constants $K_1(\Xi)$ and $K_2(\Xi)$ that depend on $p,q,h,l$ such that
  \begin{equation*}
\bigl\lVert\langle \psi^*(q)\psi(r)a^*(h)a(l) \rangle_\Lambda(t)-\bar{u}^{\otimes_q}u^{\otimes_r}\bar{\alpha}^{\otimes_h}\alpha^{\otimes_l}(t)\bigr\rVert_{L^2(\mathds{R}^{3\delta})}\leq \lambda^2 K_1(\Lambda)\ass{t}e^{K_2(\Lambda)\ass{t}}    
  \end{equation*}
  \begin{equation*}
\bigl\lVert\langle \psi^*(q)\psi(r)a^*(h)a(l) \rangle_\Psi(t)-\delta_{qr}\bar{u}^{\otimes_q}u^{\otimes_r}\bar{\alpha}^{\otimes_h}\alpha^{\otimes_l}(t)\bigr\rVert_{L^2(\mathds{R}^{3\delta})}\leq \delta_{qr}\lambda^2 K_1(\Psi)\ass{t}e^{K_2(\Psi)\ass{t}}    
  \end{equation*}
  \begin{equation*}
\bigl\lVert\langle \psi^*(q)\psi(r)a^*(h)a(l) \rangle_\Theta(t)-\delta_{qr}\int_0^{2\pi}\frac{\de\theta}{2\pi}\bar{u}_\theta^{\otimes_q}u_\theta^{\otimes_r}\bar{\alpha}_\theta^{\otimes_h}\alpha_\theta^{\otimes_l}(t)\bigr\rVert_{L^2(\mathds{R}^{3\delta})}\leq \delta_{qr}\lambda^2 K_1(\Theta)\ass{t}e^{K_2(\Theta)\ass{t}}
  \end{equation*}
\end{enumerate}  
\end{thm}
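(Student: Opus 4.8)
The plan is to reduce all three averages to a single \emph{coherent engine} --- an estimate for the normal ordered average in the coherent state $C(u_0/\lambda,\alpha_0/\lambda)\Omega$ --- and then to recover $\Psi$ and $\Theta$ by averaging that engine over the $U(1)$ phases of the initial data. This is natural because $\Psi$ and $\Theta$ are nothing but $\Lambda$ projected onto the sectors $N_1=\lambda^{-2}$, respectively $N_1=\lambda^{-2}$ and $N_2=\lambda^{-2}$, and such projections are realised by phase averages of Weyl-translated vacua.

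\emph{The coherent engine.} Evaluating~\eqref{eq:12} at $s=0$ I would write $U(t)C(u_0/\lambda,\alpha_0/\lambda)=C(u_\lambda(t),\alpha_\lambda(t))W(t,0)e^{-i\Lambda(t,0)}$, so that conjugating $\psi^*(X_q)\psi(Y_r)a^*(K_h)a(M_l)$ the real phase cancels and each field is Weyl-translated by its classical value over $\lambda$, e.g. $C^*(u_\lambda(t),\alpha_\lambda(t))\psi(x)C(u_\lambda(t),\alpha_\lambda(t))=\psi(x)+u(t,x)/\lambda$. Expanding the product of shifted operators, the fully classical term produces, after multiplication by $\lambda^\delta$ and unitarity of $W(t,0)$, exactly $\bar u^{\otimes_q}u^{\otimes_r}\bar\alpha^{\otimes_h}\alpha^{\otimes_l}(t)$, while a term keeping $m$ genuine operators contributes $\lambda^m\langle W(t,0)\Omega,[\,\cdots\,]W(t,0)\Omega\rangle$. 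Taking the $L^2$ norm in the external variables and moving the creators to the left, a standard Fock estimate bounds this by $\lambda^m\norm{W(t,0)\Omega}_\delta^2$; since $U_0$ commutes with $N$ this is $\lambda^m\norm{\W(t,0)\Omega}_\delta^2=O(\lambda^m)$ by proposition~\ref{sec:useful-estimates-3}. To reach the rate $\lambda^2$ I would, exactly as in the last point of theorem~\ref{sec:class-limit-annih-1}, replace $W(t,0)$ by $U_2(t,0)$ in the $m=1$ terms: by proposition~\ref{sec:useful-estimates-4} the $U_2$-evolved fields on $\Omega$ are one-particle states, whence $\langle U_2\Omega,\psi^\#U_2\Omega\rangle=\langle U_2\Omega,a^\#U_2\Omega\rangle=0$, and the substitution error is $O(\lambda)$ by theorem~\ref{sec:conv-quant-evol-1}; so the $m=1$ terms are $O(\lambda^2)$ and the $m\ge2$ terms already are. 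This settles the $\Lambda$-rows of statements i and ii.

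\emph{Fixed particle numbers.} For the remaining states I would use the gauge representation of a number vector as a phase average of coherent ones, $u_0^{\otimes_i}=Z_\lambda\int_0^{2\pi}\frac{\de\phi}{2\pi}e^{i\lambda^{-2}\phi}C_p(e^{-i\phi}u_0/\lambda)\Omega$ and likewise $\alpha_0^{\otimes_j}=Z_\lambda\int_0^{2\pi}\frac{\de\theta}{2\pi}e^{i\lambda^{-2}\theta}C_n(e^{-i\theta}\alpha_0/\lambda)\Omega$, with $Z_\lambda\sim(2\pi\lambda^{-2})^{1/4}$ by Stirling. Inserting these turns the average into a phase integral in which each coherent matrix element is handled by the engine, now with rotated data $(e^{-i\phi}u_0,e^{-i\theta}\alpha_0)$. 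Because the classical flow is covariant under the non-relativistic phase --- the source $\ass{u}^2$ being gauge invariant, so that $u_{\phi,\theta}=e^{-i\phi}u_\theta$ and $\alpha_{\phi,\theta}=\alpha_\theta$ is $\phi$-independent --- the classical term carries the factor $e^{i(q-r)\phi}$, and the $\phi$-integration yields $\delta_{qr}$, precisely the selection rule forced by conservation of $N_1$ under $U(t)$. For $\Psi$ the relativistic sector is already coherent, so no $\theta$-average occurs and one obtains $\delta_{qr}\bar u^{\otimes_q}u^{\otimes_r}\bar\alpha^{\otimes_h}\alpha^{\otimes_l}(t)$. For $\Theta$ the relativistic phase survives: since $N_2$ is \emph{not} conserved and $\alpha_\theta$ is not a mere phase of $\alpha$ (its equation being driven by $\ass{u_\theta}^2$), the $\theta$-integral cannot collapse to a Kronecker delta and leaves the residue $\int_0^{2\pi}\frac{\de\theta}{2\pi}\bar u_\theta^{\otimes_q}u_\theta^{\otimes_r}\bar\alpha_\theta^{\otimes_h}\alpha_\theta^{\otimes_l}(t)$.

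\emph{The main obstacle} is the rigorous evaluation of this phase integral. The off-diagonal overlaps equal $\exp\bigl(\lambda^{-2}(e^{i(\theta-\theta')}-1)\bigr)$, which concentrate on $\theta=\theta'$ as $\lambda\to0$; writing $\theta'=\theta+\lambda\eta$ one sees that the fast phase $e^{i\lambda^{-2}(\theta'-\theta)}=e^{i\eta/\lambda}$ supplied by the normalisation is cancelled exactly by the imaginary part of the overlap, leaving a Gaussian $e^{-\eta^2/2}$, and that $\ass{Z_\lambda}^2\sim\sqrt{2\pi}\,\lambda^{-1}$ combines with the width $\lambda\,\de\eta$ of that Gaussian to reproduce precisely the measure $\frac{\de\theta}{2\pi}$. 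Making this Laplace/stationary-phase argument rigorous \emph{uniformly in the external variables}, controlling the off-diagonal remainder, and in particular showing that the corrections to the Gaussian approximation are $O(\lambda^2)$ and not merely $O(\lambda)$ so as to preserve the rate in point ii, is the heart of the proof. This is an equivalence-of-ensembles statement replacing the microcanonical vectors $\Psi,\Theta$ by the grand-canonical $\Lambda$ in the classical limit, and it is exactly where the unexpected relativistic residue of $\Theta$ is produced.
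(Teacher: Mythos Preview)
Your treatment of the $\Lambda$ case --- the coherent engine, the Weyl translation, the $m$-expansion, and the use of proposition~\ref{sec:useful-estimates-4} to kill the $m=1$ term --- is essentially the paper's argument.

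For $\Psi$ and $\Theta$ the paper takes a genuinely different route that sidesteps precisely the obstacle you flag. You expand \emph{both} copies of the fixed-number vector as phase integrals of coherent states, which produces off-diagonal overlaps $\exp\bigl(\lambda^{-2}(e^{i(\theta-\theta')}-1)\bigr)$ and forces a Laplace/stationary-phase argument whose $O(\lambda^2)$ remainder you rightly identify as the crux. The paper instead expands only \emph{one} factor of $\Theta$ via the identity $\Theta=d^2_{\lambda^{-2}}(\mathbb{N}_1)_{\lambda^{-2}}\int_0^{2\pi}\frac{\de\theta}{2\pi}e^{i\lambda^{-2}\theta}C(u_0/\lambda,\alpha_0(\theta)/\lambda)\Omega$, so that the ket becomes the vacuum after Weyl translation and the coherent engine applies verbatim; the bra remains $C^*(u_0/\lambda,\alpha_0(\theta)/\lambda)\Theta$. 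The potentially dangerous $d^2_{\lambda^{-2}}\sim\lambda^{-1}$ prefactor is then neutralised by a single sharp estimate, $\norm{(N_1+1)^{-1/2}(N_2+1)^{-1/2}C^*(u_0/\lambda,\alpha_0(\theta)/\lambda)\Theta}\le L_\Theta d_{\lambda^{-2}}^{-2}$, proved via Krasikov's bounds on Laguerre polynomials (this is the input from \citet{2011JMP}). That inequality replaces your entire stationary-phase analysis: no off-diagonal overlaps appear, the $\theta$-integral is a plain average, and the $O(\lambda^2)$ rate drops out algebraically. The paper also uses the companion identity $(\mathbb{N})_1C^*(u_0/\lambda,\alpha_0(\theta)/\lambda)\Theta=0$ to kill the $m=1$ term directly, rather than your overlap argument.

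Your approach is not wrong in spirit, but the asymmetric decomposition plus the Laguerre bound is both shorter and gives the quantitative rate for free; your symmetric decomposition trades that lemma for an analytic problem you have not resolved. A minor point: invoking theorem~\ref{sec:conv-quant-evol-1} for the substitution $W\to U_2$ gives only strong convergence; the $O(\lambda)$ rate you need comes from the Duhamel identity used in its proof (the explicit $H_I$ factor), which is what the paper applies.
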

The proof is carried out in the same way as in the last point of theorem~\ref{sec:class-limit-annih-1}: we use the results proved by~\citet{2011arXiv1103.0948C} and~\citet{MR2530155} (lemma~\ref{sec:trans-ampl-8}) to write fixed particles states as suitable combinations of coherent states, and then use the convergent bound of $\norm{\W(t,s)\Phi}_\delta$; the comparison with $\U_2(t,s)$ dynamics let us improve the rate of convergence to $\lambda^2$. The proof for $\Theta$ vectors (fixed number of both non-relativistic and relativistic particles) can be found in section~\ref{sec:proof-th}, the other two cases being analogous.

\subsection{Discussion of the results.}
\label{sec:discussion-results}

As expected we are able to put on solid mathematical foundations the results of convergence described naïvely in the introduction. The solution of coupled Schrödinger and Klein-Gordon equations is the classical counterpart of the quantum variables of the system (namely annihilation and creation operators). This is set in theorem~\ref{sec:class-limit-annih-1}, and the convergence of the latter to the former is intended to be the convergence in $L^2(\mathds{R}^3)$ of transition amplitudes between coherent states of quantum operators to classical functions. The dynamics of quantum fluctuations is governed by the linearization of Heisenberg equations around the classical solution: this is proved by the convergence stated in theorem~\ref{sec:conv-quant-evol-1}, keeping in mind the differential properties of $\U_2(t,s)$ (proposition~\ref{sec:evol-quant-fluct-6}). In theorem~\ref{sec:class-limit-annih-2} we extend the convergence of transition amplitudes to normal ordered products of creation and annihilation operators. The behaviour of $U_2(t,s)$ evolution of one-particle states enables to improve the rate of convergence of such amplitudes, between suitable states ($\Lambda$, $\Psi$ and $\Theta$), to order $\lambda^2$ instead of $\lambda$.

The result about fixed particles $\Theta$-vectors stated in theorem~\ref{sec:class-limit-annih-2} deserves a specific comment. The classical limit in this case differs from the expected product of classical solutions. A phase multiplying the initial relativistic datum arises, and the limit is an average of the product of classical solutions corresponding to such varying initial data. Observe that quantum dynamics of the relativistic field does not preserve the number of particles, while non-relativistic particle number is preserved by quantum evolution. So initial states with fixed number of relativistic particles could be seen as a bad choice to describe the theory. The non-classical residue obtained in the limit is possibly related to this problem and should emerge also in the classical limit of other systems that does not preserve the number of particles.

\section{Classical theory.}
\label{sec:classical-theory}

In this section we prove existence and uniqueness of a solution in $\mathscr{C}^0(\mathds{R},L^2(\mathds{R}^3)\times L^2(\mathds{R}^3))$ of the classical system~\eqref{eq:2} with initial data in $L^2(\mathds{R}^3)$ (proposition~\ref{cl.p:1}); furthermore we prove such solution is continuous in $\mathscr{C}^0(\mathds{R},L^2(\mathds{R}^3)\times L^2(\mathds{R}^3))$ with respect to a $L^2(\mathds{R}^3)$-continuous variation of the initial $\alpha$-datum (lemma~\ref{cl.rem:2}: this result is needed in theorem~\ref{sec:class-limit-annih-2} for $\Theta$ vectors, so the integration of classical solutions corresponding to different initial data makes sense). 

Let $\alpha_0, u_0\in L^2(\mathds{R}^3)$, and define $U_{01}(t)\equiv \exp(i\Delta t/2)$, $U_{02}(t)\equiv \exp(-i\omega t)$, with
\begin{equation*}
  (\omega^\lambda \alpha)(x)=(2\pi)^{-3/2}\int\ide{\xi}e^{i\xi x}(\mu^2+\ass{\xi}^2)^{\lambda/2}\mathcal{F}(\alpha)(\xi)\text{ , $\mu\geq 0$.}
\end{equation*}
We consider the following system of integral equations:
\begin{equation}
\label{eq:15}
 \left\{
   \begin{aligned}
     u(t)&=U_{01}(t)u_0-i(2\pi)^{-3/2}\int_0^t\ide{\tau}U_{01}(t-\tau) u(\tau)(\mathcal{F}^{-1}(\chi) * A(\tau))\\
     \alpha(t)&=U_{02}(t)\alpha_0-i\frac{(2\pi)^{-3/2}}{\sqrt{2}}\int_0^t\ide{\tau}U_{02}(t-\tau)\omega^{-1/2}(\mathcal{F}^{-1}(\chi)*\ass{u(\tau)}^2)
   \end{aligned}
\right. 
\end{equation}
where $A(t)=\omega^{-1/2}(\alpha(t)+\bar{\alpha}(t))$. We want to prove the existence of a unique solution of the system in $\mathscr{C}^0(\mathds{R},L^2(\mathds{R}^3)\times L^2(\mathds{R}^3))$. If $(u,\alpha)$ is such a solution, then $(u,\mathcal{F}(\alpha))$ is the $\mathscr{C}^0(\mathds{R},L^2(\mathds{R}^3)\times L^2(\mathds{R}^3))$ solution of~\eqref{eq:2} with initial data $(u_0,\mathcal{F}(\alpha_0))$.

\subsection{Existence and uniqueness of the solution.}
\label{sec:exist-uniq-solut}

\begin{lemma}
  \label{sec:exist-uniq-solut-1}
Let $V\in \mathscr{C}^0(\mathds{R},L^\infty(\mathds{R}^3))$. Then, $\forall u_0\in L^2(\mathds{R}^3)$, $\exists !u\in \mathscr{C}^0(\mathds{R},L^2(\mathds{R}^3))$ solution of
\begin{equation}\label{eq:16}
  u(t)=U_{01}(t)u_0-i\int_0^t\ide{\tau}U_{01}(t-\tau) V(\tau)u(\tau)\; .
\end{equation}
Furthermore if
\begin{equation*}
  u_j(t)=U_{01}(t)u_0-i\int_0^t\ide{\tau}U_{01}(t-\tau) V_j(\tau)u_j(\tau)\text{ with $j=1,2$}
\end{equation*}
we have the following estimate:
\begin{equation}\label{eq:17}
\begin{split}
  \norm{u_1(t)-u_2(t)}_2\leq \norm{u_2;\mathscr{C}^0([0,t],L^2)}\int_0^t\ide{\tau}\norm{(V_1-V_2)(\tau);L^\infty(\mathds{R}^3)}\\
\exp\biggl(\ass{\int_0^t\ide{\tau}\norm{V_1(\tau);L^\infty(\mathds{R}^3)}}\biggr)\; .
\end{split}\end{equation}
Finally if $V$ is real then $\norm{u(t)}_2=\norm{u_0}_2$ for all $t$ (the charge is conserved). 
\end{lemma}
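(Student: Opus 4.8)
The plan is to read~\eqref{eq:16} as a linear Volterra integral equation on $\mathscr{C}^0(I,L^2(\mathds{R}^3))$ for an arbitrary compact interval $I\ni 0$, solve it by a Neumann series, and then deduce the two quantitative statements from Gronwall's inequality and from an interaction-picture computation. The only two facts about the data I will use over and over are that $U_{01}(t)$ is unitary on $L^2(\mathds{R}^3)$, so that $\norm{U_{01}(t-\tau)f}_2=\norm{f}_2$, and that $V(\tau)$ acts by multiplication, so that $\norm{V(\tau)f}_2\leq\norm{V(\tau);L^\infty(\mathds{R}^3)}\norm{f}_2$.

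For existence and uniqueness, fix $T>0$ and set $M_V=\sup_{\ass{\tau}\leq T}\norm{V(\tau);L^\infty(\mathds{R}^3)}<\infty$. I would write~\eqref{eq:16} as $u=w+Ku$ with $w(t)=U_{01}(t)u_0$ and $(Ku)(t)=-i\int_0^t\ide{\tau}U_{01}(t-\tau)V(\tau)u(\tau)$. First one checks that $w$ and $Ku$ belong to $\mathscr{C}^0([-T,T],L^2(\mathds{R}^3))$, using strong continuity of $U_{01}(\cdot)$, continuity of $\tau\mapsto V(\tau)u(\tau)$ in $L^2(\mathds{R}^3)$ (which follows from $V\in\mathscr{C}^0(\mathds{R},L^\infty(\mathds{R}^3))$ and $u\in\mathscr{C}^0$), and dominated convergence. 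The two bounds above then give by induction $\norm{(K^nu)(t)}_2\leq\frac{(M_V\ass{t})^n}{n!}\,\norm{u;\mathscr{C}^0([-T,T],L^2)}$, so that $\sum_{n\geq 0}K^n$ converges in operator norm and $u=\sum_{n\geq 0}K^nw$ is the unique fixed point in $\mathscr{C}^0([-T,T],L^2(\mathds{R}^3))$. Since $T$ is arbitrary this produces a unique $u\in\mathscr{C}^0(\mathds{R},L^2(\mathds{R}^3))$.

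For~\eqref{eq:17} I would subtract the two equations and split $V_1u_1-V_2u_2=V_1(u_1-u_2)+(V_1-V_2)u_2$; taking $L^2$ norms and using unitarity of $U_{01}$ yields
\begin{equation*}
  \norm{u_1(t)-u_2(t)}_2\leq\int_0^t\ide{\tau}\norm{V_1(\tau);L^\infty}\norm{u_1(\tau)-u_2(\tau)}_2+\norm{u_2;\mathscr{C}^0([0,t],L^2)}\int_0^t\ide{\tau}\norm{(V_1-V_2)(\tau);L^\infty}\; .
\end{equation*}
The inhomogeneous term is non-decreasing in $t$, so Gronwall's inequality gives exactly~\eqref{eq:17} (the absolute value in the exponent accounting for $t<0$). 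In particular, choosing $V_1=V_2$ makes the right-hand side vanish and recovers the uniqueness asserted above.

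Finally, for charge conservation when $V$ is real, I would pass to the interaction picture $v(t)=U_{01}(-t)u(t)$. Substituting into~\eqref{eq:16} gives $v(t)=u_0-i\int_0^t\ide{\tau}\V(\tau)v(\tau)$ with $\V(\tau)=U_{01}(-\tau)V(\tau)U_{01}(\tau)$. Since $\tau\mapsto\V(\tau)v(\tau)=U_{01}(-\tau)V(\tau)u(\tau)$ is $L^2$-continuous, the integral is $\mathscr{C}^1$ and $v\in\mathscr{C}^1(\mathds{R},L^2(\mathds{R}^3))$ with $\partial_tv=-i\V(t)v(t)$. As $V(\tau)$ is a real multiplication operator it is self-adjoint, and hence so is its unitary conjugate $\V(\tau)$; therefore
\begin{equation*}
  \dtot{}{t}\norm{v(t)}_2^2=2\Re\braket{v(t)}{\partial_tv(t)}=2\Re\bigl(-i\braket{v(t)}{\V(t)v(t)}\bigr)=0\; ,
\end{equation*}
the last equality because $\braket{v(t)}{\V(t)v(t)}$ is real. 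Thus $\norm{v(t)}_2=\norm{u_0}_2$, and unitarity of $U_{01}(-t)$ gives $\norm{u(t)}_2=\norm{u_0}_2$. I expect this last point to be the only genuine obstacle: the mild solution $u$ need not lie in $D(\Delta)$, so one cannot differentiate $\norm{u(t)}_2^2$ directly through the formal Schrödinger equation; the interaction-picture substitution is precisely what removes the unbounded operator and reduces the computation to the bounded, self-adjoint generator $\V(t)$.
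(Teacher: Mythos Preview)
Your proof is correct in all three parts: the Neumann--Volterra iteration for existence and uniqueness, the Gronwall argument for~\eqref{eq:17}, and the interaction-picture computation for charge conservation are all sound and carefully executed. The paper itself states this lemma without proof, treating it as standard; your write-up would fill that gap appropriately, and the remark at the end about why one must pass to the interaction picture (to avoid differentiating through the unbounded $\Delta$) is exactly the right observation.
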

With the aid of this lemma we can prove the existence of a unique solution of the system~\eqref{eq:15}, this is done in the following proposition:
\begin{proposition}
  \label{cl.p:1}
Let $u_0,\alpha_0\in L^2(\mathds{R}^3)$. Then $\exists ! (u(\cdot),\alpha(\cdot))$ in $\mathscr{C}^0(\mathds{R},L^2(\mathds{R}^3)\times L^2(\mathds{R}^3))$ solution of the integral system~\eqref{eq:15}.
\end{proposition}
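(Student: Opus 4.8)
The plan is to decouple the system with Lemma~\ref{sec:exist-uniq-solut-1}, solving the $u$-equation for a prescribed $\alpha$ and then recovering the full solution as a fixed point of the induced map on the $\alpha$-component. Fix $T>0$ and, for $\alpha\in\mathscr{C}^0([-T,T],L^2(\mathds{R}^3))$, put $A=\omega^{-1/2}(\alpha+\bar\alpha)$ and $V_\alpha(\tau)=(2\pi)^{-3/2}(\mathcal{F}^{-1}(\chi)*A(\tau))$. The decisive point is that the momentum cut-off $\chi$ makes $V_\alpha$ bounded even in the massless case $\mu=0$: in Fourier variables
\[
\mathcal{F}(V_\alpha)(\xi)=\chi(\xi)\,(\mu^2+\ass{\xi}^2)^{-1/4}\,\mathcal{F}(\alpha+\bar\alpha)(\xi),
\]
so by Cauchy--Schwarz $\norm{V_\alpha(\tau)}_\infty\leq (2\pi)^{-3/2}\norm{\mathcal{F}(V_\alpha)(\tau)}_1\leq C_\sigma\norm{\alpha(\tau)}_2$, the constant $C_\sigma$ being finite precisely because $\int_{\ass{\xi}\leq\sigma}(\mu^2+\ass{\xi}^2)^{-1/2}\de\xi<\infty$ near $\xi=0$. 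Since $\chi$ and $(\mu^2+\ass{\xi}^2)^{-1/4}$ are real and even, $V_\alpha$ is real-valued, and continuity in $\tau$ follows from this linear bound and the continuity of $\alpha$. Thus Lemma~\ref{sec:exist-uniq-solut-1} applies and yields a unique $u[\alpha]\in\mathscr{C}^0([-T,T],L^2)$ solving the first equation of~\eqref{eq:15}, together with the charge conservation $\norm{u[\alpha](t)}_2=\norm{u_0}_2$ that will drive the globalization.

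Next I would set $\Gamma(\alpha)(t)=U_{02}(t)\alpha_0-\tfrac{i(2\pi)^{-3/2}}{\sqrt2}\int_0^t U_{02}(t-\tau)\,\omega^{-1/2}(\mathcal{F}^{-1}(\chi)*\ass{u[\alpha](\tau)}^2)\,\de\tau$ and look for a fixed point. The same Fourier computation, now using $\norm{\mathcal{F}(\ass{u}^2)}_\infty\leq(2\pi)^{-3/2}\norm{u}_2^2$, gives $\norm{\omega^{-1/2}(\mathcal{F}^{-1}(\chi)*\ass{u}^2)}_2\leq C_\sigma\norm{u}_2^2$, so by unitarity of $U_{02}$ the map $\Gamma$ sends $\mathscr{C}^0([-T,T],L^2)$ into itself. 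For the contraction I would estimate $\norm{\Gamma(\alpha_1)-\Gamma(\alpha_2)}$: writing $\ass{u_1}^2-\ass{u_2}^2=\bar u_1(u_1-u_2)+(\bar u_1-\bar u_2)u_2$ and using $\norm{u_i}_2=\norm{u_0}_2$, the $L^2$-difference of the integrands is bounded by $C\norm{u_0}_2\norm{u[\alpha_1](\tau)-u[\alpha_2](\tau)}_2$, while~\eqref{eq:17} together with $\norm{(V_{\alpha_1}-V_{\alpha_2})(\tau)}_\infty\leq C_\sigma\norm{(\alpha_1-\alpha_2)(\tau)}_2$ controls $\norm{u[\alpha_1]-u[\alpha_2]}$ by a time integral of $\norm{\alpha_1-\alpha_2}$. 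Composing the two nested integrations produces a factor $\sim T^2$ (times an exponential depending only on the radius of the ball on which one works), so for $T$ small $\Gamma$ is a contraction; Banach's theorem then gives a unique local $\alpha$, and with $u=u[\alpha]$ a unique local solution of~\eqref{eq:15}, uniqueness of the pair being automatic since $u$ is determined by $\alpha$.

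Finally I would globalize by continuation anchored on a priori bounds. Charge conservation gives $\norm{u(t)}_2=\norm{u_0}_2$ throughout the interval of existence, and inserting this into the second equation of~\eqref{eq:15} yields the linear-in-time bound $\norm{\alpha(t)}_2\leq\norm{\alpha_0}_2+C\norm{u_0}_2^2\ass{t}$. Hence neither component blows up in finite time; since the local existence time delivered by the contraction is bounded below in terms of the norm of the data at the restart instant, and that norm stays bounded on every finite interval, the local solution extends to all of $\mathds{R}$, with global uniqueness following from local uniqueness and these bounds. I expect the genuine work to lie in the bookkeeping of the contraction estimate---combining the continuity estimate~\eqref{eq:17}, the quadratic nonlinearity, and the cut-off-dependent constant $C_\sigma$---whereas the globalization is rendered essentially immediate by charge conservation.
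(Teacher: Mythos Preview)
Your argument is correct. The paper takes a slightly different organization: it runs the contraction directly on the pair $(u,\alpha)\in\mathscr{C}^0(I,L^2\times L^2)$ via the joint Picard map $S$, estimating the two components separately (invoking~\eqref{eq:17} for the $u$-part and a direct bound for the $\alpha$-part) and shrinking the time interval to make $S$ a strict contraction. You instead eliminate $u$ first by using Lemma~\ref{sec:exist-uniq-solut-1} to produce the solution map $\alpha\mapsto u[\alpha]$, and then contract on the single variable $\alpha$ through the reduced map $\Gamma$. Your decoupling makes more explicit use of the linear Schr\"odinger theory (existence of $u[\alpha]$, charge conservation, and the Lipschitz estimate~\eqref{eq:17}), and the two nested time integrals give a transparent $T^2$ contraction factor; the paper's joint approach is more symmetric but folds the linear estimate into the nonlinear iteration somewhat less cleanly. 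The globalization step---driven by charge conservation $\norm{u(t)}_2=\norm{u_0}_2$ and the resulting linear-in-time bound on $\norm{\alpha(t)}_2$---is the same in both.
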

\begin{proof}
  For all $j=1,2,\dotsc$ the systems:
  \begin{equation}
    \label{cl.eq:5}
     \left\{
   \begin{aligned}
     u_j(t)&=u_0(t)-i(2\pi)^{-3/2}\int_0^t\ide{\tau}U_{01}(t-\tau) u_j(\tau)(\mathcal{F}^{-1}(\chi) * A_{j-1}(\tau))\\
     \alpha_j(t)&=\alpha_0(t)-i\frac{(2\pi)^{-3/2}}{\sqrt{2}}\int_0^t\ide{\tau}U_{02}(t-\tau)\omega^{-1/2}(\mathcal{F}^{-1}(\chi)*\ass{u_{j-1}(\tau)}^2)
   \end{aligned}
\right. 
  \end{equation}
with $u_0(t)\equiv U_{01}(t)u_0$ and $\alpha_0(t)\equiv U_{02}(t)\alpha_0$, have a unique solution $(u_j,\alpha_j)\in \mathscr{C}^0(\mathds{R},L^2(\mathds{R}^3)\otimes L^2(\mathds{R}^3))$ by lemma~\ref{sec:exist-uniq-solut-1}. Now let $t\in I=[0,\epsilon]$, and define the map $S$ on $\mathscr{C}^0(I,L^2(\mathds{R}^3)\otimes L^2(\mathds{R}^3))$ as
\begin{equation*}
  \begin{split}
    S\left(
    \begin{gathered}
      u(t)\\\alpha(t)
    \end{gathered}\right)=
  \left(\begin{gathered}
u_0(t)-i(2\pi)^{-3/2}\int_0^t\ide{\tau}U_{01}(t-\tau) u(\tau)(\mathcal{F}^{-1}(\chi) * A(\tau))\\
\alpha_0(t)-i\frac{(2\pi)^{-3/2}}{\sqrt{2}}\int_0^t\ide{\tau}U_{02}(t-\tau)\omega^{-1/2}(\mathcal{F}^{-1}(\chi)*\ass{u(\tau)}^2)
  \end{gathered}\right)\; .
  \end{split}
\end{equation*}
Let $u_1,u_2,\alpha_1,\alpha_2\in \mathscr{C}^0(I,L^2(\mathds{R}^3))$, and define
\begin{equation*}
  S\left(\begin{gathered}
    u_1(t)\\\alpha_1(t)
  \end{gathered}\right)-S\left(\begin{gathered}
    u_2(t)\\\alpha_2(t)
  \end{gathered}\right)=\left(\begin{gathered}
    u'_1(t)\\\alpha'_1(t)
  \end{gathered}\right)-\left(\begin{gathered}
    u'_2(t)\\\alpha'_2(t)
  \end{gathered}\right)\; .
\end{equation*}
By estimate~\eqref{eq:17} of lemma~\ref{sec:exist-uniq-solut-1} and conservation of charge we obtain:
\begin{equation*}
  \begin{split}
 \sup_{t\in I} \norm{u'_1(t)-u'_2(t)}_2
\leq C_s (2\pi)^{-3/2}\epsilon\norm{\mathcal{F}^{-1}(\chi)}_{3/2} \exp\Bigl(C_s (2\pi)^{-3/2}\epsilon \norm{\mathcal{F}^{-1}(\chi)}_{3/2}\\\max_{j=1,2}\norm{\alpha_j;\mathscr{C}^0(I,L^2)}\Bigr)\max_{j=1,2}\norm{u_j;\mathscr{C}^0(I,L^2)}\norm{\alpha_1-\alpha_2;\mathscr{C}^0(I,L^2)}
  \end{split}
\end{equation*}
\begin{equation*}
  \begin{split}
    \sup_{t\in I} \norm{\alpha'_1(t)-\alpha'_2(t)}_2
\leq \sqrt{2}(2\pi)^{-3/2}\epsilon\norm{\mathcal{F}^{-1}(\chi)}_2\max_{j=1,2}\norm{u_j;\mathscr{C}^0(I,L^2)}\norm{u_1-u_2;\mathscr{C}^0(I,L^2)}\; .
  \end{split}
\end{equation*}
Choosing $\epsilon$ small enough $S$ becomes a strict contraction. Using conservation of charge the solution is extended to all $\mathscr{C}^0(\mathds{R},L^2(\mathds{R}^3)\times L^2(\mathds{R}^3))$.
\end{proof}

\subsection{Interaction respresentation and continuity with respect to initial data.}
\label{sec:inter-respr-cont}

We formulate a couple of useful lemmas whose proof is straightforward.
\begin{lemma}\label{cl.rem:1}
  If $(u(t),\alpha(t))$ is the solution of~\eqref{eq:15} in $\mathscr{C}^0(\mathds{R},L^2(\mathds{R}^3)\times L^2(\mathds{R}^3))$, define $(\widetilde{u}(t),\widetilde{\alpha}(t))\equiv (U_{01}(-t)u(t),U_{02}(-t)\alpha(t))$.

Then $(\widetilde{u}(t),\widetilde{\alpha}(t))\in\mathscr{C}^1(\mathds{R},L^2(\mathds{R}^3)\times L^2(\mathds{R}^3))$ and we have that:
\begin{align*}
  i\partial_t \widetilde{u}(t)&=(2\pi)^{-3/2}U_{01}(-t)\Bigl(\mathcal{F}^{-1}(\chi) * A(t)\Bigr)u(t)\\
  i\partial_t \widetilde{\alpha}(t)&=\frac{(2\pi)^{-3/2}}{\sqrt{2}}U_{02}(-t)\Bigl(\omega^{-1/2}(\mathcal{F}^{-1}(\chi)*\ass{u(t)}^2)\Bigr)\; .
\end{align*}
\end{lemma}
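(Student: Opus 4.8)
The plan is to use the group property of the free evolutions $U_{01}$, $U_{02}$ to strip the outer propagator from the Duhamel representation~\eqref{eq:15}, which turns $\widetilde{u}$ and $\widetilde{\alpha}$ into a constant plus a Bochner integral of a norm-continuous integrand; the fundamental theorem of calculus then delivers both the $\mathscr{C}^1$ regularity and the stated derivatives at once.

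First I would multiply the first line of~\eqref{eq:15} by $U_{01}(-t)$ on the left. Since $U_{01}$ is a one-parameter unitary group, $U_{01}(-t)U_{01}(t)=\mathds{1}$ and $U_{01}(-t)U_{01}(t-\tau)=U_{01}(-\tau)$, so
\begin{equation*}
  \widetilde{u}(t)=u_0-i(2\pi)^{-3/2}\int_0^t\ide{\tau}U_{01}(-\tau)\bigl[(\mathcal{F}^{-1}(\chi)*A(\tau))u(\tau)\bigr]\; ,
\end{equation*}
and in the same way, using $U_{02}$,
\begin{equation*}
  \widetilde{\alpha}(t)=\alpha_0-i\frac{(2\pi)^{-3/2}}{\sqrt{2}}\int_0^t\ide{\tau}U_{02}(-\tau)\bigl[\omega^{-1/2}(\mathcal{F}^{-1}(\chi)*\ass{u(\tau)}^2)\bigr]\; .
\end{equation*}

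The only point requiring some care is the continuity of each integrand as a map $\mathds{R}\to L^2(\mathds{R}^3)$. I would check it directly: the $U_{0j}(-\tau)$ are strongly continuous, $u,\alpha\in\mathscr{C}^0(\mathds{R},L^2)$ by hypothesis, and the convolution factor is controlled by Young's inequality. For $\widetilde{u}$ one uses $\norm{\mathcal{F}^{-1}(\chi)*A(\tau)}_\infty\leq\norm{\mathcal{F}^{-1}(\chi)}_2\norm{A(\tau)}_2$ (finite since $\chi$ has compact support, so $\mathcal{F}^{-1}(\chi)\in L^2$), whence $(\mathcal{F}^{-1}(\chi)*A(\tau))u(\tau)\in L^2$; splitting the difference at two nearby times then reduces continuity to that of $u$ and of $A=\omega^{-1/2}(\alpha+\bar\alpha)$, the latter inheriting $L^2$-continuity from $\alpha\in\mathscr{C}^0(\mathds{R},L^2)$. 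For $\widetilde{\alpha}$ one instead bounds $\norm{\mathcal{F}^{-1}(\chi)*\ass{u(\tau)}^2}_2\leq\norm{\mathcal{F}^{-1}(\chi)}_2\norm{u(\tau)}_2^2$ and uses $\norm{\ass{u(\tau)}^2-\ass{u(\tau')}^2}_1\leq\norm{u(\tau)-u(\tau')}_2(\norm{u(\tau)}_2+\norm{u(\tau')}_2)$ to get continuity of $\tau\mapsto\ass{u(\tau)}^2$ in $L^1$, the composition with the strongly continuous group preserving continuity.

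Once the integrands are continuous in $L^2$, the map $t\mapsto\int_0^t$ is of class $\mathscr{C}^1$ with derivative equal to the integrand at the upper endpoint; reading this off yields exactly the asserted expressions for $i\partial_t\widetilde{u}(t)$ and $i\partial_t\widetilde{\alpha}(t)$. The main (and modest) obstacle is therefore establishing this $L^2$-continuity, i.e.\ keeping the convolution factor in $L^\infty$ (for $\widetilde{u}$) or in $L^2$ (for $\widetilde{\alpha}$) uniformly on compact time intervals; everything else is an immediate consequence of the group law and the Bochner form of the fundamental theorem of calculus, which is why the proof is essentially routine.
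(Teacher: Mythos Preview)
Your approach is exactly the one the paper has in mind when it calls the proof ``straightforward'': use the group law to rewrite $\widetilde{u}$ and $\widetilde{\alpha}$ as a constant plus a Bochner integral of a norm-continuous $L^2$-valued integrand, then invoke the fundamental theorem of calculus. There is nothing to compare against, since the paper gives no argument of its own.

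One small technical slip is worth fixing. You bound $\norm{\mathcal{F}^{-1}(\chi)*A(\tau)}_\infty\leq\norm{\mathcal{F}^{-1}(\chi)}_2\norm{A(\tau)}_2$, but when $\mu=0$ the multiplier $\omega^{-1/2}$ is unbounded on $L^2$, so $A(\tau)=\omega^{-1/2}(\alpha(\tau)+\bar\alpha(\tau))$ need not lie in $L^2(\mathds{R}^3)$; the paper itself places $A$ in $\mathscr{C}^0(\mathds{R},L^3(\mathds{R}^3))$ (cf.\ the definition of $V_0(t)$ in Section~\ref{sec:quant-fluct}) and uses $\norm{\mathcal{F}^{-1}(\chi)}_{3/2}$ in Proposition~\ref{cl.p:1}. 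The cleanest repair is to pass to the Fourier side: since $\chi$ has compact support and $\omega^{-1}$ is locally integrable in three dimensions, $\chi\omega^{-1/2}\in L^2(\mathds{R}^3)$, so $\mathcal{F}(\mathcal{F}^{-1}(\chi)*A(\tau))$ is (up to constants) $\chi\omega^{-1/2}$ times an $L^2$ function, hence lies in $L^1$ by Cauchy--Schwarz, giving $\norm{\mathcal{F}^{-1}(\chi)*A(\tau)}_\infty\leq C\norm{\chi\omega^{-1/2}}_2\norm{\alpha(\tau)}_2$ and continuity in $\tau$ directly from $\alpha\in\mathscr{C}^0(\mathds{R},L^2)$. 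This does not alter the structure of your proof.
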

\begin{lemma}\label{cl.rem:2}
  Let $(u_1(\cdot),\alpha_1(\cdot))$ and $(u_2(\cdot),\alpha_2(\cdot))$ be the solutions of~\eqref{eq:15} in $\mathscr{C}^0(\mathds{R},L^2\times L^2)$ corresponding respectively to initial data $(u_0,\alpha_{01})$ and $(u_0,\alpha_{02})$ both in $L^2\times L^2$. Then if $\alpha_{01}\to_{L^2}\alpha_{02}$, then $(u_1(\cdot),\alpha_1(\cdot))\to (u_2(\cdot),\alpha_2(\cdot))$ in $\mathscr{C}^0(\mathds{R},L^2(\mathds{R}^3)\times L^2(\mathds{R}^3))$.
\end{lemma}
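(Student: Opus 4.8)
The plan is to set up a coupled Gronwall estimate for the two differences $f(t)\equiv\norm{u_1(t)-u_2(t)}_2$ and $g(t)\equiv\norm{\alpha_1(t)-\alpha_2(t)}_2$, exploiting the two halves of system~\eqref{eq:15} separately. It suffices to prove convergence uniformly on each compact interval, since the topology of $\mathscr{C}^0(\mathds{R},\cdots)$ is that of uniform convergence on compacts; by symmetry I treat $[0,T]$.

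First I would estimate $g$. Subtracting the two $\alpha$-equations and using that $U_{02}$ is unitary, the free parts differ only by $U_{02}(t)(\alpha_{01}-\alpha_{02})$, so
\begin{equation*}
  g(t)\le \norm{\alpha_{01}-\alpha_{02}}_2+\frac{(2\pi)^{-3/2}}{\sqrt2}\int_0^t\ide{\tau}\norm{\omega^{-1/2}\bigl(\mathcal{F}^{-1}(\chi)*(\ass{u_1(\tau)}^2-\ass{u_2(\tau)}^2)\bigr)}_2\; .
\end{equation*}
Passing to Fourier variables the multiplier $\chi\,\omega^{-1/2}$ lies in $L^2(\mathds{R}^3)$ (also for $\mu=0$, since $\ass{k}^{-1}$ is integrable near the origin in three dimensions), so by Young's inequality and Plancherel the integrand is bounded by $C\norm{\ass{u_1}^2-\ass{u_2}^2}_1$; writing $\ass{u_1}^2-\ass{u_2}^2=\bar u_1(u_1-u_2)+(\bar u_1-\bar u_2)u_2$ and using conservation of charge $\norm{u_j(\tau)}_2=\norm{u_0}_2$ (lemma~\ref{sec:exist-uniq-solut-1}) one gets $\norm{\ass{u_1}^2-\ass{u_2}^2}_1\le 2\norm{u_0}_2\,f(\tau)$. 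Hence $g(t)\le \norm{\alpha_{01}-\alpha_{02}}_2+C_1\int_0^t f(\tau)\,\de\tau$.

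Next I would estimate $f$ by means of estimate~\eqref{eq:17}. The functions $u_1,u_2$ solve~\eqref{eq:16} with the (real) potentials $V_j(t)=(2\pi)^{-3/2}(\mathcal{F}^{-1}(\chi)*A_j(t))$, and crucially $u_1(0)=u_2(0)=u_0$, so no initial term appears. The same Fourier/Young argument gives $\norm{(V_1-V_2)(\tau)}_\infty\le C\,g(\tau)$ and $\norm{V_1(\tau)}_\infty\le C\norm{\alpha_1(\tau)}_2$; since the $\alpha$-equation yields the a priori bound $\norm{\alpha_1(\tau)}_2\le\norm{\alpha_{01}}_2+C\norm{u_0}_2^2\ass{\tau}$, the exponential factor in~\eqref{eq:17} is bounded on $[0,T]$ by a constant $C(T)$. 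Together with $\norm{u_2;\mathscr{C}^0([0,t],L^2)}=\norm{u_0}_2$ this gives $f(t)\le C_2(T)\int_0^t g(\tau)\,\de\tau$.

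Finally, inserting the second estimate into the first and bounding the resulting double integral gives $g(t)\le \norm{\alpha_{01}-\alpha_{02}}_2+C_1C_2(T)\,T\int_0^t g(\tau)\,\de\tau$, so Gronwall's inequality yields $\sup_{[0,T]}g\le \norm{\alpha_{01}-\alpha_{02}}_2\,e^{C_1C_2(T)T^2}$ and then $\sup_{[0,T]}f\le C_2(T)\,T\sup_{[0,T]}g$. Both vanish as $\norm{\alpha_{01}-\alpha_{02}}_2\to0$, which is the claim. The only point requiring care — and the reason the two equations cannot be treated independently — is closing this loop between the $u$- and $\alpha$-estimates; the growth of $C_2(T)$ with $T$ is harmless because convergence is needed only on compact time intervals.
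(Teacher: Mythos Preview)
Your argument is correct and is precisely the ``straightforward'' Gronwall-type argument the paper has in mind (the paper omits the proof); it recycles estimate~\eqref{eq:17} for the $u$-difference and the same Fourier-multiplier bounds used in Proposition~\ref{cl.p:1} for the $\alpha$-difference, then closes the loop exactly as in the contraction step of the existence proof. One cosmetic point: the bound $\norm{(V_1-V_2)(\tau)}_\infty\le C\,g(\tau)$ is obtained by Cauchy--Schwarz on the Fourier side (the Fourier transform of $\mathcal{F}^{-1}(\chi)*A_j$ is $\chi\,\omega^{-1/2}$ times an $L^2$ function), which is not literally ``the same Fourier/Young argument'' as for the $\alpha$-equation, but the ingredients are the same and the conclusion is correct.
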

\section{Quantum Theory.}
\label{sec:quantum-theory}

This section is devoted to the study of the quantum Hamiltonian and its corresponding evolution, and to define the evolution of fluctuations. The self-adjointness of Nelson Hamiltonian is discussed in section~\ref{sec:self-adjointness-h}; the evolution of quantum fluctuations $\U_2(t,s)$ is defined in section~\ref{sec:quant-fluct}, and its most important properties (of which we make extensive use throughout the rest of the work) are stated in proposition~\ref{sec:evol-quant-fluct-6}; quantum evolution between coherent states $\W(t,s)$ and its differential properties are discussed in section~\ref{sec:evolution-wt-s}; finally in section~\ref{sec:convergence-wt-s} theorem~\ref{sec:conv-quant-evol-1} is proved.

Let $f\in L^2(\mathds{R}^3)$, we recall the definition of the annihilation and creation operators of $\mathscr{H}$ given in section~\ref{sec:nels-hamilt-fock}:
\begin{equation*}
    (\psi(f)\Phi)_{p,n}(x_1,\dotsc,x_p;k_1,\dotsc,k_n)=\sqrt{p+1}\int\ide{x}f(x)\Phi_{p+1,n}(x,x_1,\dotsc,x_p;k_1,\dotsc,k_n)\mspace{13mu}
\end{equation*}
\begin{equation*}
    (\psi^*(f)\Phi)_{p,n}(x_1,\dotsc,x_p;k_1,\dotsc,k_n)=\frac{1}{\sqrt{p}}\sum_{i=1}^p f(x_i)\Phi_{p-1,n}(x_1,\dotsc,\hat{x}_i,\dotsc,,x_p;k_1,\dotsc,k_n)\mspace{8mu}
\end{equation*}
\begin{equation*}
  (a(f)\Phi)_{p,n}(x_1,\dotsc,x_p;k_1,\dotsc,k_n)=\sqrt{n+1}\int\ide{k}f(k)\Phi_{p,n+1}(x_1,\dotsc,x_p;k,k_1,\dotsc,k_n)
\end{equation*}
\begin{equation*}
   (a^*(f)\Phi)_{p,n}(x_1,\dotsc,x_p;k_1,\dotsc,k_n)=\frac{1}{\sqrt{n}}\sum_{j=1}^nf(k_j)\Phi_{p,n-1}(x_1,\dotsc,x_p;k_1,\dotsc,\hat{k}_j,\dotsc,k_n)\; .\mspace{3.5mu}
\end{equation*}
On $\mathscr{H}_p$ it is useful to define slightly different relativistic annihilation and creation operators; let $f\in L^\infty(\mathds{R}^{3p},L^2(\mathds{R}^3))$, $\Phi\in\mathscr{H}_p$ and define
\begin{align*}
  (a(f)\Phi)_{p,n}&=\sqrt{n+1}\int\ide{k}f(x_1,\dotsc,x_p;k)\Phi_{p,n+1}(x_1,\dotsc,x_p;k,k_1,\dotsc,k_n)\\
  (a^*(f)\Phi)_{p,n}&=\frac{1}{\sqrt{n}}\sum_{j=1}^nf(x_1,\dotsc,x_p;k_j)\Phi_{p,n-1}(x_1,\dotsc,x_p;k_1,\dotsc,\hat{k}_j,\dotsc,k_n)\; ;
\end{align*}
it will be clear from the context what type of operators we use. From now on set
\begin{equation*}
    f=\sum_{j=1}^p f_j\; ,\; f_j=\lambda f_0e^{-ik\cdot x_j}\; ,\; f_0=(2\pi)^{-3/2}(2\omega)^{-1/2}\chi_\sigma\; ;
\end{equation*}
we remark that for all $\sigma\in \mathds{R}$, $f_0\in L^2(\mathds{R}^3)$ with $\omega^\delta f_0\in L^2(\mathds{R}^3)$ for all $\delta\geq -1/2$, even when $\mu=0$. Then on $\mathscr{H}_p$ we can write $H_{I}\bigr\rvert_p= a(\bar{f})+a^*(f)$. To be precise, we define $H=(H_0+H_{I})^{\overline{\phantom{=}}}$ rather than $H=H_0+H_I$ as we did in section~\ref{sec:nels-hamilt-fock}. The following estimates are useful to prove self-adjointness of $H$, their proof is standard~\citep[see][as a reference]{MR2205462}.
\begin{lemma}
  \label{sec:quantum-theory-6}
Let $g\in L^\infty(\mathds{R}^{3p},L^2(\mathds{R}^3))$ such that also $\omega^{-1/2}(k)g(x_1,\dotsc,x_p;k)$ is in the same space. Then, for all $\Phi\in D(H_{02}^{1/2})\cap \mathscr{H}_p$, intended as the domain on which the RHS are finite, the following estimates hold:
\begin{align*}
  \norm{a(g)\Phi}^2&\leq\norm{\omega^{-1/2}g}_*^2\norm{H_{02}^{1/2}\Phi}^2\; ;\; \norm{a^*(g)\Phi}^2\leq\norm{\omega^{-1/2}g}_*^2\norm{H_{02}^{1/2}\Phi}^2+\norm{g}_*^2\norm{\Phi}^2\; ;
\end{align*}
where $\norm{\cdot}_*$ is the $L^\infty(\mathds{R}^{3p},L^2(\mathds{R}^3))$-norm.

Let now $g\in L^\infty(\mathds{R}^{3p},L^2(\mathds{R}^3))$, and $\Phi\in D(N_2^{1/2})\cap\mathscr{H}_p$, then:
\begin{align*}
  \norm{a(g)\Phi}&\leq\norm{g}_*\norm{N_2^{1/2}\Phi}\; ;\; \norm{a^*(g)\Phi}\leq\norm{g}_*\norm{(N_2+1)^{1/2}\Phi}\; .
\end{align*}
\end{lemma}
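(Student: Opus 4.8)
The plan is to prove all four inequalities sector by sector on the spaces $\H_{p,n}$, exploiting that $H_{02}$ and $N_2$ act diagonally in the number $n$ of relativistic particles and that $g$ enters only through multiplication and integration in $k$. The two pairs of estimates differ merely in whether one splits off a factor $\omega^{\pm 1/2}$ before applying Cauchy--Schwarz in the $k$-integral. I would first establish the bounds for $a(g)$ by a direct estimate, and then deduce those for $a^*(g)$ from the adjoint relation and the canonical commutation relations.

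For the annihilation operator, fix $\Phi\in\H_p$ and insert $\omega^{1/2}\omega^{-1/2}$ in the defining integral,
\begin{equation*}
  (a(g)\Phi)_{p,n}(X_p;K_n)=\sqrt{n+1}\int\ide{k}\omega^{-1/2}(k)g(X_p;k)\,\omega^{1/2}(k)\Phi_{p,n+1}(X_p;k,K_n)\; .
\end{equation*}
Cauchy--Schwarz in $k$ separates the two factors, and taking the supremum over $X_p$ bounds the first by $\norm{\omega^{-1/2}g}_*$, giving
\begin{equation*}
  \ass{(a(g)\Phi)_{p,n}}^2\leq (n+1)\norm{\omega^{-1/2}g}_*^2\int\ide{k}\omega(k)\ass{\Phi_{p,n+1}(X_p;k,K_n)}^2\; .
\end{equation*}
Integrating over $X_p,K_n$, summing over $n$ and reindexing with $m=n+1$, the prefactor $m$ produced by the square root conspires with the symmetry of $\Phi_{p,m}$ in its relativistic arguments, since for fixed $X_p$
\begin{equation*}
  m\int\de K_m\,\omega(k_1)\ass{\Phi_{p,m}(X_p;K_m)}^2=\int\de K_m\,\Bigl(\sum_{j=1}^m\omega(k_j)\Bigr)\ass{\Phi_{p,m}(X_p;K_m)}^2\; .
\end{equation*}
Summing over $m$ and integrating over $X_p$ reproduces exactly $\braket{\Phi}{H_{02}\Phi}=\norm{H_{02}^{1/2}\Phi}^2$, which is the first estimate. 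The $N_2$ estimate for $a(g)$ is identical with $\omega$ replaced by $1$ throughout, so that the same reindexing yields $\braket{\Phi}{N_2\Phi}=\norm{N_2^{1/2}\Phi}^2$.

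The creation bounds follow by duality. With the conventions of the paper one has $a^*(g)^*=a(\bar g)$, so $\norm{a^*(g)\Phi}^2=\braket{\Phi}{a(\bar g)a^*(g)\Phi}$; and since $g$ carries the variables $X_p$ only as parameters, the commutator $[a(\bar g),a^*(g)]=\int\ide{k}\ass{g(X_p;k)}^2$ is a multiplication operator in $X_p$ of operator norm $\norm{g}_*^2$. Hence
\begin{equation*}
  \norm{a^*(g)\Phi}^2=\norm{a(\bar g)\Phi}^2+\braket{\Phi}{\Bigl(\int\ide{k}\ass{g(X_p;k)}^2\Bigr)\Phi}\leq\norm{a(\bar g)\Phi}^2+\norm{g}_*^2\norm{\Phi}^2\; .
\end{equation*}
Because $\ass{\bar g}=\ass{g}$ the two $a(\bar g)$ bounds coincide with those for $a(g)$, and substituting them gives the stated inequalities for $a^*(g)$, using $\norm{N_2^{1/2}\Phi}^2+\norm{\Phi}^2=\norm{(N_2+1)^{1/2}\Phi}^2$ in the $N_2$ case.

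The only points genuinely requiring care are the combinatorial symmetry step — one must verify that the factor $n+1$ from the definition of $a(g)$ exactly reconstructs the one-body operator $\sum_j\omega(k_j)$ — and the fact that $\norm{\cdot}_*$ mixes an $L^\infty$ norm in the $x$-variables with an $L^2$ norm in $k$, so that the commutator above is a bona fide multiplication operator rather than a scalar. These are routine, which is why the statement is labelled standard; the remaining task is merely to justify the manipulations first on the finite-particle (algebraic) domain and then extend them to $D(H_{02}^{1/2})\cap\H_p$, respectively $D(N_2^{1/2})\cap\H_p$, by density, which is precisely where the hypothesis that the right-hand sides be finite is used.
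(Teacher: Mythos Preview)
Your proof is correct and is exactly the standard argument the paper alludes to: the paper does not give its own proof here but simply declares the estimates standard with a reference, and your sector-by-sector Cauchy--Schwarz computation for $a(g)$ together with the CCR identity $\norm{a^*(g)\Phi}^2=\norm{a(\bar g)\Phi}^2+\braket{\Phi}{\bigl(\int\ide{k}\ass{g(X_p;k)}^2\bigr)\Phi}$ is precisely that standard proof.
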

\begin{corollary*}
  For all $\Phi\in D(N_1^2+N_2)\cap D(N_1N_2^{1/2})$ we have that:
  \begin{equation*}
    \norm{H_I\Phi}\leq 2\lambda\norm{f_0}_2\norm{N_1(N_2+1)^{1/2}\Phi}\leq \lambda\norm{f_0}_2\norm{(N_1^2+N_2+1)\Phi}\; .
  \end{equation*}
\end{corollary*}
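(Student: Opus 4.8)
The plan is to exploit the decomposition $\mathscr{H}=\bigoplus_{p=0}^\infty\mathscr{H}_p$ together with the fact that $H_I$ conserves the non-relativistic particle number $N_1$ and hence leaves each sector $\mathscr{H}_p$ invariant. It therefore suffices to bound $H_I$ on a fixed sector and then reassemble by orthogonality. On $\mathscr{H}_p$ we have already written $H_I|_p=a(\bar f)+a^*(f)$ with $f(x_1,\dotsc,x_p;k)=\lambda f_0(k)\sum_{j=1}^p e^{-ik\cdot x_j}$, so the whole problem reduces to estimating the $L^\infty(\mathds{R}^{3p},L^2(\mathds{R}^3))$-norm of $f$ and inserting it into the second pair of inequalities of Lemma~\ref{sec:quantum-theory-6}, namely the ones controlling $a^\#(g)$ in terms of $N_2$.

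First I would estimate $\norm{f}_*$. Using $\ass{\sum_{j=1}^p e^{-ik\cdot x_j}}\leq p$ uniformly in $k$ and in $(x_1,\dotsc,x_p)$,
\begin{equation*}
  \norm{f}_*=\esssup_{x_1,\dotsc,x_p}\Bigl(\int\ide{k}\lambda^2\ass{f_0(k)}^2\Bigl\lvert\sum_{j=1}^p e^{-ik\cdot x_j}\Bigr\rvert^2\Bigr)^{1/2}\leq \lambda p\norm{f_0}_2\; ,
\end{equation*}
and likewise $\norm{\bar f}_*\leq\lambda p\norm{f_0}_2$. Feeding these into Lemma~\ref{sec:quantum-theory-6}, using the triangle inequality and $\norm{N_2^{1/2}\Phi_p}\leq\norm{(N_2+1)^{1/2}\Phi_p}$, gives on $\mathscr{H}_p$
\begin{equation*}
  \norm{H_I\Phi_p}\leq\norm{\bar f}_*\norm{N_2^{1/2}\Phi_p}+\norm{f}_*\norm{(N_2+1)^{1/2}\Phi_p}\leq 2\lambda p\norm{f_0}_2\norm{(N_2+1)^{1/2}\Phi_p}\; .
\end{equation*}
Since $N_1$ acts as the scalar $p$ on $\mathscr{H}_p$, the factor $p$ is exactly $N_1$ on that sector, so the right-hand side equals $2\lambda\norm{f_0}_2\norm{N_1(N_2+1)^{1/2}\Phi_p}$. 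Because $H_I\Phi_p\in\mathscr{H}_p$ the terms are mutually orthogonal, and summing the squares over $p$ produces the first claimed inequality for every $\Phi\in D(N_1N_2^{1/2})$.

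The second inequality is purely algebraic. Setting $A=N_1^2$ and $B=N_2+1$, which commute and are non-negative, self-adjointness of $A-B$ gives $(A-B)^2\geq 0$, i.e.\ $A^2+B^2\geq 2AB$ and hence $(A+B)^2\geq 4AB$. Taking the expectation in $\Phi$ and using $\braket{\Phi}{AB\Phi}=\norm{N_1(N_2+1)^{1/2}\Phi}^2$ (by commutativity and self-adjointness of $N_1$, $N_2$) yields $2\norm{N_1(N_2+1)^{1/2}\Phi}\leq\norm{(N_1^2+N_2+1)\Phi}$, which is the second inequality after multiplying by $\lambda\norm{f_0}_2$; the hypothesis $\Phi\in D(N_1^2+N_2)$ guarantees the right-hand side is finite.

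I do not expect a serious obstacle here: the content is the uniform bound $\ass{\sum_j e^{-ik\cdot x_j}}\leq p$, which is precisely what turns the $p$-fold sum defining the form factor $f$ into the operator factor $N_1$, together with the routine bookkeeping needed to pass from the sectorwise estimate back to the full domain $D(N_1^2+N_2)\cap D(N_1N_2^{1/2})$ through orthogonality of the sectors $\mathscr{H}_p$.
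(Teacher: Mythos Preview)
Your proposal is correct and follows exactly the route the paper intends: the corollary is stated without proof as a direct consequence of Lemma~\ref{sec:quantum-theory-6}, and your argument---bounding $\norm{f}_*\leq\lambda p\norm{f_0}_2$ on each sector $\mathscr{H}_p$, applying the $N_2$-based estimates of that lemma, then summing by orthogonality and finishing with the AM--GM inequality $4N_1^2(N_2+1)\leq(N_1^2+N_2+1)^2$---is precisely that derivation.
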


\subsection{Self-adjointness of $H$.}
\label{sec:self-adjointness-h}

Let $\Phi_p$, $H_0\bigr\rvert_p$ and $H\bigr\rvert_p$ be the projections of $\Phi\in\mathscr{H}$, $H_0$ and $H$ respectively on $\mathscr{H}_p$.
\begin{proposition}\label{sec:quantum-theory-2}
$\phantom{i}$
\begin{enumerate}[i.]
\item\label{qt.item:20} $H\bigr\rvert_p$ is self-adjoint on $\mathscr{H}_p$ with domain $D(H_0\bigr\rvert_p)$.
\item\label{qt.item:21} $H$ is self-adjoint on $\mathscr{H}$ with domain $D(H)$ defined as following:
  \begin{equation*}
    D(H)=\Bigl\{\Phi\in\mathscr{H}:\; \sum_{p=0}^\infty\norm{H\bigr\rvert_p\Phi_p}^2<\infty,\; \Phi_p\in D(H_0\bigr\rvert_p)\Bigr\}\; .
  \end{equation*}
\item\label{qt.item:22} On $\mathscr{H}$, we have the following inclusions:
  \begin{align*}
    D(H_0)&\supseteq D(H)\cap D(N_1^2+N_2)\; ;\; D(H)\supseteq D(H_0)\cap D(N_1^2+N_2)\; .
  \end{align*}
\end{enumerate}
\end{proposition}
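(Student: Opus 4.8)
The plan is to handle the three parts in order, concentrating the analytic work in part~\ref{qt.item:20} and then obtaining the rest from the block structure of $H$ together with the interaction estimate of the Corollary to Lemma~\ref{sec:quantum-theory-6}. Throughout I use that on $\mathscr{H}_p$ the free operator $H_0|_p$ is self-adjoint on its natural domain (it is the sum of the nonnegative kinetic multiplication operator for the $p$ non-relativistic particles and the second-quantised relativistic energy), and that $H_I|_p=a(\bar f)+a^*(f)$ with $f=\sum_{j=1}^p\lambda f_0 e^{-ik\cdot x_j}\in L^\infty(\mathds{R}^{3p},L^2(\mathds{R}^3))$.

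For part~\ref{qt.item:20} I would apply the Kato--Rellich theorem on the fibre $\mathscr{H}_p$. First note that $\norm{f}_*\leq p\lambda\norm{f_0}_2$ and $\norm{\omega^{-1/2}f}_*\leq p\lambda\norm{\omega^{-1/2}f_0}_2$ are both finite, using that $\omega^{-1/2}f_0\in L^2(\mathds{R}^3)$ even for $\mu=0$. The first pair of estimates in Lemma~\ref{sec:quantum-theory-6} then gives $\norm{H_I|_p\Phi_p}\leq 2\norm{\omega^{-1/2}f}_*\norm{H_{02}^{1/2}\Phi_p}+\norm{f}_*\norm{\Phi_p}$, so that $H_I|_p$ is $H_{02}^{1/2}$-bounded. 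Since the non-relativistic kinetic part of $H_0$ is nonnegative we have $H_{02}\leq H_0$, hence $\norm{H_{02}^{1/2}\Phi_p}\leq\norm{H_0^{1/2}|_p\Phi_p}$, and the spectral theorem furnishes, for every $\epsilon>0$, a constant $C_\epsilon$ with $\norm{H_0^{1/2}|_p\Phi_p}\leq\epsilon\norm{H_0|_p\Phi_p}+C_\epsilon\norm{\Phi_p}$. Thus $H_I|_p$ is infinitesimally $H_0|_p$-bounded, and Kato--Rellich yields self-adjointness of $H|_p=H_0|_p+H_I|_p$ on $D(H_0|_p)$.

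For part~\ref{qt.item:21} the key remark is that $H$ conserves $N_1$ and therefore leaves each $\mathscr{H}_p$ invariant, so $H=\bigoplus_p H|_p$ is a direct sum of the self-adjoint operators of part~\ref{qt.item:20}. A direct sum of self-adjoint operators is self-adjoint on its maximal natural domain, which is precisely the set $\{\Phi:\Phi_p\in D(H_0|_p),\ \sum_p\norm{H|_p\Phi_p}^2<\infty\}$ displayed in the statement; I would invoke this standard fact directly. For part~\ref{qt.item:22} the main tool is the Corollary to Lemma~\ref{sec:quantum-theory-6}, which gives $\norm{H_I\Phi}\leq\lambda\norm{f_0}_2\norm{(N_1^2+N_2+1)\Phi}$ for $\Phi\in D(N_1^2+N_2)$, that is $\sum_p\norm{H_I|_p\Phi_p}^2<\infty$. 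Because $D(H|_p)=D(H_0|_p)$ by part~\ref{qt.item:20}, on each block one has $H_0|_p\Phi_p=H|_p\Phi_p-H_I|_p\Phi_p$. Hence if $\Phi\in D(H)\cap D(N_1^2+N_2)$ both $\sum_p\norm{H|_p\Phi_p}^2$ and $\sum_p\norm{H_I|_p\Phi_p}^2$ are finite, so $\sum_p\norm{H_0|_p\Phi_p}^2<\infty$ and $\Phi\in D(H_0)$; the reverse inclusion follows symmetrically from $H|_p\Phi_p=H_0|_p\Phi_p+H_I|_p\Phi_p$.

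The only genuinely delicate point is the relative-boundedness step in part~\ref{qt.item:20}: one must verify that the coupling enters through $H_{02}^{1/2}$ and not $H_{02}$, so that the interpolation inequality upgrades mere $H_{02}^{1/2}$-boundedness to an infinitesimal $H_0|_p$-bound with relative bound strictly below one, which is exactly what Kato--Rellich requires. I expect this to be the main obstacle; once it is in place the remainder is bookkeeping over the fibres $\mathscr{H}_p$ together with the direct-sum theorem and the interaction estimate already recorded in the Corollary.
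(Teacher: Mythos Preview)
Your proof is correct and follows essentially the same approach as the paper: Kato--Rellich on each fibre $\mathscr{H}_p$ via the $H_{02}^{1/2}$-bound of Lemma~\ref{sec:quantum-theory-6}, then the direct-sum construction for part~\ref{qt.item:21}, and the Corollary to Lemma~\ref{sec:quantum-theory-6} for part~\ref{qt.item:22}. The only cosmetic difference is that the paper phrases the interpolation step by applying Young's inequality to the squared norms (obtaining $\|H_I\Phi_p\|^2\leq\epsilon^2\|H_{02}\Phi_p\|^2+C_{p,\epsilon}\|\Phi_p\|^2$ directly), whereas you invoke $\sqrt{x}\leq\epsilon x+C_\epsilon$ via the spectral theorem; both routes yield the required infinitesimal $H_0|_p$-bound.
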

\begin{proof}
\emph{\ref{qt.item:20}}. $H_{I}\bigr\rvert_p$ is a Kato perturbation of $H_0\bigr\rvert_p$: by lemma~\ref{sec:quantum-theory-6} we obtain for all $\Phi_p\in D(H_0\bigr\rvert_p)$
\begin{equation*}
  \norm{H_{I}\Phi_p}^2\leq 4\lambda^2\norm{\omega^{-1/2}f_0}^2_2\norm{N_1H_{02}^{1/2}\Phi_p}^2+2\lambda^2\norm{f_0}_2^2\norm{N_1\Phi_p}^2\; .
\end{equation*}
Then for all $\epsilon>0$: $\norm{H_{I}\Phi_p}^2\leq \epsilon^2\norm{H_{02}\Phi_p}^2+\frac{4\lambda^4}{\epsilon^2}\norm{\omega^{-1/2}f_0}_2^4\norm{N_1^2\Phi_p}^2+2\lambda^2\norm{f_0}_2^2\norm{N_1\Phi_p}^2$.

\emph{\ref{qt.item:21}}. Since $H\bigr\rvert_p$ is self-adjoint on $\mathscr{H}_p$ we can define the self-adjoint operator $H$ on $\mathscr{H}$ as a direct sum.

\emph{\ref{qt.item:22}}. To prove the first relation we proceed as following: from the fact that $H_0=H-H_I$ we can write $\forall\Phi\in D(H)\cap D(N_1^2+N_2)$, and a suitable $L>0$
\begin{equation*}
  \norm{H_0\Phi}\leq \norm{H\Phi}+\norm{H_I\Phi}\leq \norm{H\Phi}+L\norm{(N_1^2+N_2+1)\Phi}\; .
\end{equation*}
The second relation is proved in analogous fashion, writing $H=H_0+H_I$.
\end{proof}

\subsection{Invariance of domains.}
\label{sec:invariance-domains}

\begin{lemma}
  \label{sec:quantum-theory-5}
Let $F(\lambda)$ be the spectral family of the operator $N\equiv N_1+N_2$, $\xi(N_1,N_2)$ any $F$-measurable operator-valued function, with domain $D(\xi)$; consider now the operator
\begin{equation*}\begin{split}
  B=\int\de X_q\de Y_r\de K_h\ide{M_l} \bar{g}(x_1,\dotsc,x_q;y_1,\dotsc,y_r;k_1,\dotsc,k_h;m_1,\dotsc,m_l)\\
\psi^*(X_q)\psi(Y_r)a^*(K_h)a(M_l)\; ,
\end{split}\end{equation*}
defined on $D(B)$, with $q,r,h,l\in\mathds{N}$, $q+r+h+l=\delta$. Then:

\begin{enumerate}[i.]
\item\label{qt.item:9} $\xi(N_1,N_2)B\Psi=B \xi(N_1+q-r,N_2+h-l)\Psi$, for suitable $\Psi$.
\item\label{qt.item:10} For all $g\in L^2(\mathds{R}^{3\delta})$ and $\Phi\in D(N^\delta)$ the following estimate holds:
  \begin{align*}
\norm{B\Phi}\leq \norm{g}_{L^2(\mathds{R}^{3\delta})} &\biggl\lVert\frac{\sqrt{N_1!(N_1+q-r)!N_2!(N_2+h-l)!}}{(N_1-r)!(N_2-l)!}\theta(N_1-r)\theta(N_2-l)\Phi\biggr\rVert
 \end{align*}
where $\theta(b)=1$ if $b\geq 0$ and zero otherwise, with $b\in \mathds{Z}$.
\end{enumerate}
\end{lemma}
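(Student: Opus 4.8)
The plan is to reduce everything to the fibres $\mathscr{H}_{p,n}$ of the grading $\mathscr{H}=\bigoplus_{p,n}\mathscr{H}_{p,n}$, on which $N_1$ and $N_2$ act as the scalars $p$ and $n$. The structural fact driving both parts is that $B$ shifts this grading by a fixed amount: since $\psi^*(X_q)$, $\psi(Y_r)$, $a^*(K_h)$, $a(M_l)$ respectively add $q$, remove $r$, add $h$, remove $l$ particles, the operator $B$ maps $\mathscr{H}_{p,n}$ into $\mathscr{H}_{p+q-r,\,n+h-l}$ and annihilates it when $p<r$ or $n<l$. For part \emph{\ref{qt.item:9}} this observation is essentially the whole argument: on a vector $\Psi_{p,n}\in\mathscr{H}_{p,n}$ the right-hand side is $B\,\xi(p+q-r,n+h-l)\Psi_{p,n}$, the scalar factoring out, while on the left $\xi(N_1,N_2)$ acts on $B\Psi_{p,n}\in\mathscr{H}_{p+q-r,n+h-l}$ again as the scalar $\xi(p+q-r,n+h-l)$; the two sides therefore coincide fibrewise, hence on any $\Psi$ in the dense domain of finite-particle vectors (this is the meaning of ``suitable $\Psi$'').

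For part \emph{\ref{qt.item:10}} I would prove the estimate on a single fibre and then reassemble, using that the image fibres $(p+q-r,n+h-l)$ are pairwise distinct, so that $\norm{B\Phi}^2=\sum_{p,n}\norm{B\Phi_{p,n}}^2$. To bound one $\norm{B\Phi_{p,n}}$ while avoiding the delta functions produced by the creation operators, I would test against an arbitrary $\Xi\in\mathscr{H}_{p+q-r,n+h-l}$ and move the creation operators across the inner product, turning them into annihilation operators on $\Xi$ (using that the $\psi^\#$ commute with the $a^\#$):
\[
  \braket{\Xi}{B\Phi_{p,n}}=\int\de X_q\,\de Y_r\,\de K_h\,\de M_l\;\bar g\,\braket{\psi(X_q)a(K_h)\Xi}{\psi(Y_r)a(M_l)\Phi_{p,n}}\; .
\]
Both entries are now acted on only by annihilation operators, whose closed form yields the coefficient $C=\sqrt{p!\,(p+q-r)!\,n!\,(n+h-l)!}\big/[(p-r)!\,(n-l)!]$ --- exactly the combinatorial factor in the statement --- multiplying an explicit $L^2$-kernel in which the spectator variables $(X_{p-r},K_{n-l})$ are shared between $\Xi$ and $\Phi_{p,n}$.

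The remaining step is two applications of Cauchy--Schwarz. First I separate off $\Xi$, extracting $\norm{\Xi}$ times the $L^2$-norm of $G(X_q,K_h,X_{p-r},K_{n-l})=\int\de Y_r\,\de M_l\,\bar g\,\Phi_{p,n}$; a second Cauchy--Schwarz in the variables $(Y_r,M_l)$ factors $\ass{G}^2$ into $\int\ass{g}^2$ (a function of $X_q,K_h$ only) times $\int\ass{\Phi_{p,n}}^2$ (a function of $X_{p-r},K_{n-l}$ only). Integrating the first factor over $(X_q,K_h)$ gives $\norm{g}_{L^2(\mathds{R}^{3\delta})}^2$, and integrating the second over all remaining variables reconstitutes precisely $\norm{\Phi_{p,n}}^2$. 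By duality this produces $\norm{B\Phi_{p,n}}\leq C\,\norm{g}_{L^2}\norm{\Phi_{p,n}}$, and summing over $(p,n)$ assembles the operator $\sqrt{N_1!(N_1+q-r)!N_2!(N_2+h-l)!}\big/[(N_1-r)!(N_2-l)!]$ with the cut-offs $\theta(N_1-r)\theta(N_2-l)$ encoding the constraints $p\geq r$, $n\geq l$.

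The part I expect to demand the most care is the bookkeeping of the combinatorial coefficients in the closed-form action of the annihilation operators (and verifying that the symmetry of $\Phi_{p,n}$ and $\Xi$ renders the placement of the removed arguments immaterial), together with grouping the integration variables in the two Cauchy--Schwarz steps so that $\norm{g}_{L^2}$ and $\norm{\Phi_{p,n}}$ emerge with no leftover factors; the domain qualification ``suitable $\Psi$'' in \emph{\ref{qt.item:9}} is dispatched by first working on finite-particle vectors and then extending by density where needed.
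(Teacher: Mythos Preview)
Your proposal is correct and is precisely the ``direct calculation and standard estimates of creation and annihilation operators'' that the paper invokes without further detail. The fibrewise shift argument for part~\emph{\ref{qt.item:9}} and the adjoint/Cauchy--Schwarz computation for part~\emph{\ref{qt.item:10}} are exactly the intended content; your tracking of the combinatorial factor $\sqrt{p!(p+q-r)!n!(n+h-l)!}\big/[(p-r)!(n-l)!]$ is correct.
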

This lemma is proved by direct calculation and using standard estimates of creation and annihilation operators.
\begin{proposition}
  \label{sec:quantum-theory-3}
Let $\xi$ be any $F$-measurable function, as in the lemma above, then:
\begin{enumerate}[i.]
\item\label{qt.item:1} $U_0(t)\Phi\in D(H_0)$ for all $t\in\mathds{R}$, $\Phi\in D(H_0)$, and $\norm{H_0U_0(t)\Phi}=\norm{H_0\Phi}$.
\item\label{qt.item:2} $U_0(t)\Phi\in D(\xi(N_1,N_2))$ for all $t\in\mathds{R}$, $\Phi\in D(\xi(N_1,N_2))$, and $\norm{\xi(N_1,N_2)U_0(t)\Phi}=\norm{\xi(N_1,N_2)\Phi}$.
\item\label{qt.item:3} $U(t)\Phi\in D(H)$ for all $t\in\mathds{R}$, $\Phi\in D(H)$, and $\norm{HU(t)\Phi}=\norm{H\Phi}$.
\item\label{qt.item:4} $U(t)\Phi\in D((N_1^2+N_2)^\delta)$ for all $t\in\mathds{R}$, $\Phi\in D((N_1^2+N_2)^\delta)$, $\delta\in\mathds{R}$; and
  \begin{equation*}
    \norm{(N_1^2+N_2+1)^\delta U(t)\Phi}\leq \exp(\ass{\delta} c_1(\delta) \lambda\norm{f_0}_2 \ass{t})\norm{(N_1^2+N_2+1)^\delta\Phi}\; ,
  \end{equation*}
with $c_1(\delta)=\max(3,1+2^{\ass{\delta}})$.
\end{enumerate}
\end{proposition}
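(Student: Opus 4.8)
Parts \emph{\ref{qt.item:1}}, \emph{\ref{qt.item:2}} and \emph{\ref{qt.item:3}} are immediate consequences of the spectral theorem. For \emph{\ref{qt.item:1}} and \emph{\ref{qt.item:3}}, since $U_0(t)=e^{-itH_0}$ and $U(t)=e^{-itH}$ are generated by the self-adjoint operators $H_0$ and $H$ (the latter by proposition~\ref{sec:quantum-theory-2}), functional calculus shows that $U_0(t)$ commutes with $H_0$ and $U(t)$ with $H$ on the respective domains; invariance of $D(H_0)$ and $D(H)$ and the norm identities then follow from unitarity, e.g. $\norm{HU(t)\Phi}=\norm{U(t)H\Phi}=\norm{H\Phi}$. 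For \emph{\ref{qt.item:2}} I would observe that the free Hamiltonian preserves both particle numbers, $[H_0,N_1]=[H_0,N_2]=0$ (the Schrödinger part acts only on the $N_1$ sector, the Klein--Gordon part is the second quantisation of $\omega$ and commutes with $N_2$); hence $U_0(t)$ commutes with every $F$-measurable $\xi(N_1,N_2)$, and invariance of $D(\xi)$ together with $\norm{\xi(N_1,N_2)U_0(t)\Phi}=\norm{\xi(N_1,N_2)\Phi}$ is again just unitarity.

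The substance is in \emph{\ref{qt.item:4}}. The point is that, unlike $N_2$, the number $N_1$ is conserved by the interacting dynamics, $[H,N_1]=0$, so $U(t)$ leaves each $\mathscr{H}_p$ invariant and it suffices to argue on a fixed sector, where $N_1=p$ is a scalar and the weight reduces to $\Theta^\delta$ with $\Theta\equiv p^2+N_2+1\geq 1$. Writing $\Phi(t)=U(t)\Phi$ and using $[H_0,\Theta]=0$, I would differentiate $\norm{\Theta^\delta\Phi(t)}^2$ in $t$, obtaining $\tfrac{\de}{\de t}\norm{\Theta^\delta\Phi(t)}^2=i\braket{\Phi(t)}{[H_I,\Theta^{2\delta}]\Phi(t)}$. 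Factoring $\Theta^{2\delta}=\Theta^\delta\Theta^\delta$ and using that $[H_I,\Theta^\delta]$ is anti-self-adjoint gives $\braket{\Phi(t)}{[H_I,\Theta^{2\delta}]\Phi(t)}=2i\,\Im\braket{\Theta^\delta\Phi(t)}{[H_I,\Theta^\delta]\Phi(t)}$, so the whole estimate is reduced to the single commutator bound
\begin{equation*}
  \norm{[H_I,\Theta^\delta]\Phi}\leq \ass{\delta}\,c_1(\delta)\,\lambda\norm{f_0}_2\,\norm{\Theta^\delta\Phi}\; .
\end{equation*}

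To prove this I would use that on $\mathscr{H}_p$ one has $H_I=a(\bar f)+a^*(f)$ with $\norm{f}_*\leq p\,\lambda\norm{f_0}_2$, together with the commutation rule of lemma~\ref{sec:quantum-theory-5}, which yields $[H_I,\Theta^\delta]=-a^*(f)\bigl[(\Theta+1)^\delta-\Theta^\delta\bigr]-a(\bar f)\bigl[(\Theta-1)^\delta-\Theta^\delta\bigr]$. Each difference of powers is controlled by the mean value theorem, $\ass{(\Theta\pm1)^\delta-\Theta^\delta}\leq\ass{\delta}\,c_1(\delta)\,\Theta^{\delta-1}$, where $\Theta\geq 1$ is used to absorb the $2^{\ass{\delta}}$-type constants into $c_1(\delta)=\max(3,1+2^{\ass{\delta}})$ (the sector $\Theta=1$ contributes nothing to the $a(\bar f)$ term, since $a(\bar f)$ annihilates the vacuum). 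I then bound the creation and annihilation factors by lemma~\ref{sec:quantum-theory-6}, $\norm{a^{\#}(f)\chi}\leq\norm{f}_*\norm{(N_2+1)^{1/2}\chi}$, and combine with $p\leq\Theta^{1/2}$ and $(N_2+1)^{1/2}\leq\Theta^{1/2}$ to get $p\,(N_2+1)^{1/2}\,\Theta^{\delta-1}\leq\Theta^\delta$ as operators on the sector, which gives the displayed bound. Feeding it back produces $\tfrac{\de}{\de t}\norm{\Theta^\delta\Phi(t)}\leq\ass{\delta}c_1(\delta)\lambda\norm{f_0}_2\norm{\Theta^\delta\Phi(t)}$, and Grönwall yields the stated exponential estimate uniformly in $p$; summing over sectors reconstitutes $(N_1^2+N_2+1)^\delta$ on all of $\mathscr{H}$.

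The main obstacle is rigour, not algebra: differentiating $\norm{\Theta^\delta\Phi(t)}^2$ presupposes $\Phi(t)\in D(\Theta^\delta)$, which is exactly the invariance we want, and for $\delta>0$ the weight is unbounded so the quantity need not be finite a priori. I would break this circularity by regularising the weight, replacing $\Theta^\delta$ by a bounded function $F_\kappa(\Theta)$ that increases to $\Theta^\delta$ as $\kappa\to\infty$ (for $\delta\leq 0$ the weight is already bounded on each sector, so only the estimate and not the invariance is at issue). For $\Phi$ in a dense domain on which the differentiation is legitimate — here the inclusions of proposition~\ref{sec:quantum-theory-2}, ensuring $\Phi(t)\in D(H)\cap D(H_0)$, let one check that the free part drops out because $H_0$ commutes with $F_\kappa(\Theta)$ — the commutator estimate above holds with a constant uniform in $\kappa$, and Grönwall gives a $\kappa$-independent bound on $\norm{F_\kappa(\Theta)\Phi(t)}$. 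Letting $\kappa\to\infty$ by monotone convergence then shows simultaneously that $\Phi(t)\in D(\Theta^\delta)$ and that the bound holds, after which a density argument extends it to all $\Phi\in D((N_1^2+N_2)^\delta)$. Verifying that $F_\kappa(\Theta)$ preserves $D(H)$, so that these manipulations are justified at the regularised level, is the one genuinely technical point to settle.
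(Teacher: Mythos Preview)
Your proof is correct and follows the same overall strategy as the paper: reduce to a fixed $\mathscr{H}_p$-sector using $[H,N_1]=0$, differentiate a weighted norm, control the commutator of $H_I$ with the weight, and apply Gronwall. The commutator computation and the key inequality $p(N_2+1)^{1/2}\leq \Theta$ are exactly the mechanism that makes the estimate uniform in $p$.

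The one genuine difference is in how you handle the unbounded weight for $\delta>0$. The paper does not regularise $\Theta^\delta$ at all; instead it works from the start with the \emph{bounded} weight $h(N_2)=(N_2+p^2+1)^{-\delta}$ for $\delta\geq 1/2$, so that $M(t)=\tfrac12\norm{h(N_2)U(t)\Phi}^2$ is finite and differentiable on $D(H_0\bigr\rvert_p)$ with no a priori invariance needed. Interpolation with the trivial case $\delta=0$ extends this to all $\delta\in[0,1]$ (and the direct estimate already covers $\delta\geq 1/2$), and then the positive-power case of the proposition is obtained in one line by \emph{duality}. Your approach --- a monotone cut off $F_\kappa(\Theta)\uparrow\Theta^\delta$ together with monotone convergence --- is equally valid and is the standard alternative, but it requires the extra check you flag at the end (that $F_\kappa(\Theta)$ preserves $D(H\bigr\rvert_p)$, which does follow since bounded functions of $N_2$ commute with $H_0\bigr\rvert_p$). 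The paper's route trades that bookkeeping for a duality step; yours trades duality for a limiting argument. For $\delta\leq 0$ the two arguments are essentially identical.

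A minor caution: recovering the \emph{exact} constant $c_1(\delta)=\max(3,1+2^{\ass{\delta}})$ from the mean-value estimate on $(\Theta\pm1)^\delta-\Theta^\delta$ takes a little more care than you indicate; the paper's constant arises from the specific ratio $h(N_2-1)h(N_2)^{-1}-1$ together with the interpolation step, and a direct MVT bound gives a constant of the same order but not literally the same expression.
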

\begin{proof}
The first three statements are a direct consequence of Stone's theorem and the fact that $H_0$, $N_1$ and $N_2$ commute.

\emph{\ref{qt.item:4}}. Let $\Phi\in D(H_0\bigr\rvert_p)$, $0<h(N_2)$ a bounded operator on $\mathscr{H}_p$ such that $\ran h(N_2)\subset D(N_2^{1/2})$. Define the differentiable quantity
\begin{equation*}
  M(t)\equiv \frac{1}{2}\norm{h(N_2)U(t)\Phi}^2\; .
\end{equation*}
With a bit of manipulation and since $H_0$ commutes with $N_2$ we obtain
\begin{equation*}
  \begin{split}
    \ass{\dtot{}{t}M(t)}\leq 2p\lambda\norm{f_0}_2\Bigl[\nnorm{\sqrt{N_2}\bigl(h(N_2-1)h(N_2)^{-1}-1\bigr)}+\nnorm{\sqrt{N_2}\bigl(h(N_2)h(N_2-1)^{-1}-1\bigr)}\Bigr]\\M(t)
  \end{split}
\end{equation*}
where $\nnorm{\, \cdot\,}$ is the norm of $\mathscr{B}(\mathscr{H},\mathscr{H})$. Let $h\in\mathscr{C}^1$, $h(\cdot)$ and $\ass{h'(\cdot)}$ non-increasing; then
\begin{equation*}
  \begin{split}
   K\equiv \Bigl[\dotsc\Bigr]\leq \sup_{n=0,1,\dotsc}\sqrt{n}\ass{h'(n-1)}h^{-1}(n)+\sup_{n=0,1,\dotsc}\sqrt{n}\ass{h'(n-1)}h^{-1}(n-1)\; .
  \end{split}
\end{equation*}
We are interested in the case $h(n)=(n+j+1)^{-\delta}$, with $\delta\geq 1/2$ (so $\ran h(N_2)\subset D(N_2^{1/2})$) and $j\geq 1$. $h$ satisfies the hypothesis above and $h'(n)=-\delta(n+j+1)^{-\delta-1}$. So we have that
\begin{align*}
  \ass{h'(n-1)}h^{-1}(n)&=\delta(n+j)^{-1}\Bigl(1+\frac{1}{n+j}\Bigr)^\delta\leq \delta 2^\delta (n+j)^{-1}\; ,\\
  \ass{h'(n-1)}h^{-1}(n-1)&=\delta(n+j)^{-1}\; .
\end{align*}
The function $g(x)=\sqrt{x}/(x+j)$, with $x\geq 0$ has a maximum when $x=j$, so $K\leq \frac{1}{2}\delta(1+2^\delta)j^{-1/2}$. We have then the following differential inequality for $M(t)$:
\begin{equation*}
  \ass{\dtot{}{t}M(t)}\leq pj^{-1/2}\lambda\norm{f_0}_2\delta(1+2^\delta)M(t)\; ,
\end{equation*}
so Gronwall's Lemma implies
\begin{equation*}
  M(t)\leq e^{pj^{-1/2}\delta(1+2^\delta)\lambda\norm{f_0}_2 t}M(0)\; .
\end{equation*}
Set now $j=p^2$, with $p\geq 1$:
\begin{equation}\label{qt.eq:8}
  \norm{(N_2+p^2+1)^{-\delta}U(t)\Phi}\leq e^{\delta(1+2^\delta)\lambda\norm{f_0}_2 t}\norm{(N_2+p^2+1)^{-\delta}\Phi}\; ;
\end{equation}
for all $\delta\geq 1/2$ and $\Phi\in D(H_0\bigr\rvert_p)$. Interpolating between $\delta=0$ and $\delta=1$ we extend the result to $0\leq\delta\leq 1$:
\begin{equation}
  \label{qt.eq:9}
    \norm{(N_2+p^2+1)^{-\delta}U(t)\Phi}\leq e^{3\delta\lambda\norm{f_0}_2 t}\norm{(N_2+p^2+1)^{-\delta}\Phi}\; .
\end{equation}
These results extend immediately to all $\Phi\in\mathscr{H}_p$. By duality the bound holds also for $\delta<0$. The result on $\mathscr{H}$ follows by taking the direct sum of all $\mathscr{H}_p$.
\end{proof}

\subsection{Weyl operators.}
\label{sec:weyl-operators}

Weyl operators have been introduced in section~\ref{sec:classical-limit}, here we state the properties we use the most as a proposition~\citep[refer to][for additional information and proofs]{MR530915,MR2205462}.
\begin{proposition}
  \label{sec:quantum-theory-8}
\begin{enumerate}[i.]
    \item\label{qt.item:5} $C(u,\alpha)$ is unitary and strongly continuous as a function of $u$ or $\alpha$ in $L^2(\mathds{R}^3)$. Furthermore, for any $\Phi\in D(\psi(\bar{\gamma}))$ and $\Psi\in D(a(\bar{\gamma}))$, with $\gamma\in L^2(\mathds{R}^3)$, $C(u,\alpha)\Phi\in D(\psi(\bar{\gamma}))$, $C(u,\alpha)\Psi\in D(a(\bar{\gamma}))$ and the following identities hold:
      \begin{equation*}
        C(u,\alpha)^*\psi(\bar{\gamma})C(u,\alpha)\Phi=\psi(\bar{\gamma})\Phi+\braket{\gamma}{u}_2\Phi\; ;\;C(u,\alpha)^* a(\bar{\gamma})C(u,\alpha)\Psi=a(\bar{\gamma})\Psi+\braket{\gamma}{\alpha}_2\Psi
      \end{equation*}
      \begin{equation*}
        C(u,\alpha)^*\psi^*(\gamma)C(u,\alpha)\Phi=\psi^*(\gamma)\Phi+\braket{u}{\gamma}_2\Phi\; ;\; C(u,\alpha)^* a^*(\gamma)C(u,\alpha)\Psi=a^*(\gamma)\Psi+\braket{\alpha}{\gamma}_2\Psi
      \end{equation*}
\item\label{qt.item:6} Let $u,\alpha:t\to u(t),\alpha(t)\in \mathscr{C}^1(\mathds{R},L^2)$. Then $C(u(t),\alpha(t))$ is strongly differentiable in $t$ from $D(N_1^{1/2})\cap D(N_2^{1/2})$ to $\mathscr{H}$. The derivative is given by
  \begin{equation*}
    \begin{split}
      \dtot{}{t}C(u(t),\alpha(t))=\bigl[ \psi^*(\dot{u})-\psi(\dot{\bar{u}})-i\Im\braket{u}{\dot{u}}_2+a^*(\dot{\alpha})-a(\dot{\bar{\alpha}})-i\Im\braket{\alpha}{\dot{\alpha}}_2  \bigr]C(u(t),\alpha(t))
    \end{split}
  \end{equation*}
where $\dot{u}$, $\dot{\alpha}$ are the time derivatives respectively of $u$ and $\alpha$.
\item Let $u$, $\alpha\in L^2(\mathds{R}^3)$. Then for all $\delta\in\mathds{R}$, we have the following invariances: $C(u,\alpha)\Phi\in D(N_2^\delta)$ $\forall \Phi\in D(N_2^\delta)$; $C(u,\alpha)\Phi\in D(N_1^\delta)$ $\forall \Phi\in D(N_1^\delta)$; $C(u,\alpha)\Phi\in D(N^\delta)$ $\forall \Phi\in D(N^\delta)$.
\item We recall the definition of $U_{01}(t)\equiv \exp(i\Delta t/2)$ and $U_{02}(t)\equiv \exp(-i\omega t)$ given in the previous chapter. They are unitary operators on $L^2(\mathds{R}^3)$. Now define $\tilde{u}(t)=U_{01}^*(t)u(t)$, $\tilde{\alpha}(t)=U_{02}^*(t)\alpha(t)$ for all $u,\alpha\in \mathscr{C}^0(\mathds{R},L^2(\mathds{R}^3))$. Then the following equality holds $\forall\Phi\in\mathscr{H}$ and $t\in\mathds{R}$: $    U_0^*(t)C(u(t),\alpha(t))U_0(t)=C(\tilde{u}(t),\tilde{\alpha}(t))$.
  \end{enumerate}
\end{proposition}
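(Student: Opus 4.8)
The plan is to reduce each item to the canonical commutation relations together with the second-quantized structure of $U_0(t)$, carrying out all algebraic manipulations on the dense core of finite-particle vectors (where everything is a finite sum) and then extending to the stated domains by density and the relative bounds of Lemma~\ref{sec:quantum-theory-6}. For the first item I would set $G(u,\alpha)=\psi^*(u)-\psi(\bar u)+a^*(\alpha)-a(\bar\alpha)$, check it is essentially skew-adjoint on the core, and define $C(u,\alpha)=\exp(\overline{G(u,\alpha)})$; unitarity is then Stone's theorem, and strong continuity in $u$ or $\alpha$ follows from strong continuity of the groups $s\mapsto C(su,s\alpha)$ together with the Weyl composition law. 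The conjugation identities I would read off the Hadamard expansion of $C(u,\alpha)^*\psi(\bar\gamma)C(u,\alpha)$: since the CCR give $[\psi(\bar\gamma),G(u,\alpha)]=\langle\gamma,u\rangle_2$, a $c$-number, the expansion truncates after one commutator and produces the stated shift (and $\langle\gamma,\alpha\rangle_2$ for $a(\bar\gamma)$); the $\psi^*,a^*$ identities follow by adjunction, and invariance of $D(\psi(\bar\gamma))$ is automatic because the shift is scalar.

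For the second item I would use the Weyl relation $C(f+g)=e^{i\Im\langle f,g\rangle_2}C(f)C(g)$ — itself a consequence of the central commutator $[G(f),G(g)]=-2i\Im\langle f,g\rangle_2$ — placing the increment on the left and differentiating at $h=0$; the $\mathscr{C}^1$ regularity of $t\mapsto(u(t),\alpha(t))$ then yields exactly the asserted derivative, the phase terms $-i\Im\langle u,\dot u\rangle_2-i\Im\langle\alpha,\dot\alpha\rangle_2$ being precisely the $c$-number contribution. The domain $D(N_1^{1/2})\cap D(N_2^{1/2})$ is what makes $\psi^{\#}(\dot u)$ and $a^{\#}(\dot\alpha)$ applicable with norms controlled by $\norm{N_i^{1/2}\Phi}$, so the remainder is uniform. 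The third item I would treat as a relative-boundedness statement: by the first item, conjugating $N_i$ by $C(u,\alpha)$ shifts it by terms linear in the fields plus a constant (e.g. $C^*N_1C=N_1+\psi^*(u)+\psi(\bar u)+\norm{u}_2^2$), and these are $N$-bounded, so the domains of $N_i^\delta$ are preserved, extending to all real $\delta$ by interpolation and duality.

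The substantive item is the last one. The key input is that $U_0(t)=e^{-itH_0}$ is the second quantization of the one-particle free evolutions, so that conjugation merely transports the test functions: differentiating $U_0^*(t)\psi^{\#}(f)U_0(t)$ and $U_0^*(t)a^{\#}(g)U_0(t)$ via Stone's theorem and the CCR returns creation/annihilation operators, with $U_{01}^*(t)$ (resp. $U_{02}^*(t)$) placed on a creation argument and $U_{01}(t)$ (resp. $U_{02}(t)$) on an annihilation argument. Because $\Delta$ and multiplication by $\omega$ are real operators, $\overline{U_{0i}^*(t)h}=U_{0i}(t)\bar h$, so applying these laws to each term of $G(u(t),\alpha(t))$ gives $U_0^*(t)G(u(t),\alpha(t))U_0(t)=G(\tilde u(t),\tilde\alpha(t))$ with $\tilde u=U_{01}^*u$, $\tilde\alpha=U_{02}^*\alpha$. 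Finally, since conjugation by the bounded invertible $U_0(t)$ commutes with both closure and exponentiation of a skew-adjoint generator, $U_0^*(t)\exp(\overline{G(u(t),\alpha(t))})U_0(t)=\exp(\overline{U_0^*(t)G(u(t),\alpha(t))U_0(t)})=C(\tilde u(t),\tilde\alpha(t))$, which is the claim.

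The main obstacle throughout is domain and core bookkeeping rather than algebra: one must establish essential skew-adjointness of the generators, justify that the Hadamard truncation and the differentiation of the first two items hold on the stated dense domains (not merely formally), and, for the last item, verify that $U_0(t)$ maps a common core of $\overline{G(u,\alpha)}$ onto a common core of $\overline{G(\tilde u,\tilde\alpha)}$ so that interchanging conjugation with closure and exponentiation is legitimate. Once the behaviour of the generators on finite-particle vectors and the uniform $N$-bounds are secured, all the identities pass to the closures by density.
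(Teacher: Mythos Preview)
The paper does not actually prove this proposition: immediately before the statement it says the properties are stated as a proposition and refers the reader to \citet{MR530915,MR2205462} for proofs. Your proposal is correct and follows exactly the standard route found in those references---Stone's theorem for unitarity, the truncated Hadamard/BCH expansion for the displacement identities, the Weyl composition law for strong differentiability, and the second-quantization intertwining $U_0^*(t)\psi^\#(f)U_0(t)=\psi^\#(U_{01}^\mp(t)f)$ (and similarly for $a^\#$) for the last item. So there is nothing to compare: you have supplied precisely the argument the paper outsources.
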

\subsection{The quantum fluctuations.}
\label{sec:quant-fluct}
  We define the operator $V(t)\equiv V_{--}(t)+V_{-+}(t)+V_{+-}(t)+V_{++}(t)+V_0(t)$ on $D(V(t))$, where ($-$ is related to annihilation, and $+$ to creation):
\begin{gather*}
  V_{\#\#}(t)=\int\de{x}\ide{k}v_{\#\#}(t,x,k)\psi^{\#}(x)a^{\#}(k)\; ,\\
  V_{0}(t)=\int\ide{x}((2\pi)^{-3/2}\mathcal{F}^{-1}(\chi)*A(t))\psi^*\psi=\de\Gamma_p\bigl(((2\pi)^{-3/2}\mathcal{F}^{-1}(\chi)*A(t))\bigr)\otimes 1\; ,
\end{gather*}
$v_{\#\#}\in \mathscr{C}^0(\mathds{R},L^2(\R^3\otimes\R^3))$ and $A\in \mathscr{C}^0(\mathds{R},L^3(\R^3))$. Let $u\in \mathscr{C}^0(\mathds{R},L^2(\mathds{R}^3))$, then we can write explicitly $v_{\#-}=f_0(k)e^{ik\cdot x}u^{\#}(t,x)$; $v_{\#+}=f_0(k)e^{-ik\cdot x}u^{\#}(t,x)$. The following lemma can be easily proved applying the commutator theorem~\citep[Section~X.5]{MR0493420}.
\begin{lemma}\label{sec:evol-quant-fluct-1}
   For all $t\in\mathds{R}$, $V(t)$ is essentially self-adjoint on any core of $N$.
\end{lemma}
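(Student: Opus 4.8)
The plan is to invoke the commutator theorem (\citealp[Section~X.5]{MR0493420}), taking $N+1$ as the comparison operator. Since $N+1\geq 1$ is self-adjoint and has exactly the same cores as $N$, it suffices to verify the two hypotheses of that theorem for all $\Phi\in D(N)$: a relative bound
\begin{equation*}
  \norm{V(t)\Phi}\leq c\,\norm{(N+1)\Phi}\; ,
\end{equation*}
and a commutator form bound
\begin{equation*}
  \ass{\braket{V(t)\Phi}{N\Phi}-\braket{N\Phi}{V(t)\Phi}}\leq c\,\norm{(N+1)^{1/2}\Phi}^2\; .
\end{equation*}
Before checking these, I would record that $V(t)$ is symmetric on $D(N)$: since $f_0$ is real one has $v_{\#-}=\overline{v_{\#+}}$, whence $V_{--}^*\supseteq V_{++}$ and $V_{-+}^*\supseteq V_{+-}$, while $V_0$ is the self-adjoint second quantization of a real multiplication operator, bounded because $\mathcal{F}^{-1}(\chi)\in L^{3/2}(\R^3)$ and $A(t)\in L^3(\R^3)$ give $\mathcal{F}^{-1}(\chi)*A(t)\in L^\infty(\R^3)$ by Young's inequality. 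Each kernel $v_{\#\#}(t,\cdot)=f_0\,u^{\#}(t)$ lies in $L^2(\R^3\times\R^3)$ with $\norm{v_{\#\#}}_{L^2}=\norm{f_0}_2\norm{u(t)}_2$, continuous in $t$.

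The relative bound is routine. Each $V_{\#\#}$ carries exactly one $\psi^{\#}$ and one $a^{\#}$, so the standard $N_1$- and $N_2$-estimates for creation and annihilation operators (as in Lemma~\ref{sec:quantum-theory-6}) give $\norm{V_{\#\#}(t)\Phi}\leq \norm{v_{\#\#}}_{L^2}\norm{(N_1+1)^{1/2}(N_2+1)^{1/2}\Phi}\leq c\,\norm{(N+1)\Phi}$, while $\norm{V_0(t)\Phi}\leq \norm{\mathcal{F}^{-1}(\chi)*A(t)}_\infty\norm{N_1\Phi}\leq c\,\norm{(N+1)\Phi}$. Summing yields hypothesis~(a).

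For hypothesis~(b) the key computation is the commutator with $N=N_1+N_2$. Using $[\psi(x),N_1]=\psi(x)$ and $[a(k),N_2]=a(k)$ (and the adjoint relations for the creators) one finds, as forms on $D(N)$, $[V_{--},N]=2V_{--}$, $[V_{++},N]=-2V_{++}$, and $[V_{-+},N]=[V_{+-},N]=[V_0,N]=0$, so that $[V(t),N]=2(V_{--}-V_{++})$. The decisive point is to bound the diagonal matrix element of $V_{--}$ by the \emph{first} power of $N$: writing $\braket{\Phi}{V_{--}(t)\Phi}=\int v_{--}(t,x,k)\braket{\psi^*(x)\Phi}{a(k)\Phi}\,\de x\,\de k$ and applying Cauchy--Schwarz in $(x,k)$ together with $\int\norm{\psi^*(x)\Phi}^2\,\de x=\norm{(N_1+1)^{1/2}\Phi}^2$ and $\int\norm{a(k)\Phi}^2\,\de k=\norm{N_2^{1/2}\Phi}^2$ gives
\begin{equation*}
  \ass{\braket{\Phi}{V_{--}(t)\Phi}}\leq \norm{v_{--}}_{L^2}\,\norm{(N_1+1)^{1/2}\Phi}\,\norm{N_2^{1/2}\Phi}\leq c\,\norm{(N+1)^{1/2}\Phi}^2\; ,
\end{equation*}
and the same for $V_{++}$, which establishes~(b).

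The main obstacle is precisely this last step. The naive operator estimate controls $[V(t),N]$ only by $(N+1)^2$, which is far too weak for the commutator theorem; the gain down to the single power $(N+1)$ comes from distributing one annihilation operator onto the bra and the other onto the ket, so that each factor contributes only a power $N^{1/2}$. Everything else—symmetry, the $L^2$-membership of the kernels, the explicit commutators—is elementary. Once (a) and (b) are in hand, the commutator theorem yields essential self-adjointness of $V(t)$ on every core of $N+1$, hence on every core of $N$.
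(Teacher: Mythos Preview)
Your approach matches the paper's exactly---apply the commutator theorem of \citet[Section~X.5]{MR0493420} with comparison operator $N+1$---and your computations of symmetry, of the relative bound, and of $[V(t),N]=2(V_{--}-V_{++})$ are all correct.

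There is one technical slip in verifying hypothesis~(b). The identity $\int\norm{\psi^*(x)\Phi}^2\de x=\norm{(N_1+1)^{1/2}\Phi}^2$ that you invoke is false: $\psi^*(x)\Phi$ is only an $\mathscr{H}$-valued distribution in $x$, and formally $\int\psi(x)\psi^*(x)\,\de x=N_1+\int\delta(0)\,\de x$ diverges. The desired form bound still holds, but the split must keep the creation operator \emph{smeared}. Writing $V_{--}(t)=\int\de k\,\psi(v_k)\,a(k)$ with $v_k=v_{--}(t,\cdot,k)\in L^2(\R^3)$, one has
\begin{equation*}
\begin{split}
\ass{\braket{\Phi}{V_{--}\Phi}}\leq\int\de k\,\norm{\psi^*(\bar v_k)\Phi}\,\norm{a(k)\Phi}\leq\norm{(N_1+1)^{1/2}\Phi}\int\de k\,\norm{v_k}_2\,\norm{a(k)\Phi}\\
\leq\norm{v_{--}}_{L^2}\,\norm{(N_1+1)^{1/2}\Phi}\,\norm{N_2^{1/2}\Phi}\leq\norm{v_{--}}_{L^2}\,\norm{(N+1)^{1/2}\Phi}^2\; ,
\end{split}
\end{equation*}
which is exactly the control you need. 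With this correction your argument goes through as outlined.
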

We would like to define the evolution operator of the quantum fluctuations as the evolution group generated by $H_2=H_0+V(t)$; however this could be done with mathematical rigour only passing to the so-called interaction representation. Then we define $\widetilde{V}(t)=U^*_0(t)V(t)U_0(t)$. Observe that $\V(t)$ is essentially self-adjoint on $D(N)$ for all $t\in\mathds{R}$, since $U_0(t)$ maps $D(N)$ into itself. Explicitly we can write
\begin{equation*}
 \V(t)=\V_{--}(t)+\V_{-+}(t)+\V_{+-}(t)+\V_{++}(t)+\V_0(t) 
\end{equation*}
with
\begin{equation*}
 \V_{\#\#}(t)=\int\de{x}\ide{k}\v_{\#\#}(t,x,k)\psi^\#(x)a^\#(k)\; ,
\end{equation*}
\begin{equation*}
  \V_0=\de\Gamma_p\Bigl(U_{01}^*(t)((2\pi)^{-3/2}\mathcal{F}^{-1}(\chi)*A(t))(\cdot)U_{01}(t)\Bigr)\otimes 1
\end{equation*}
and
\begin{equation*}
 \v_{--}(x,k)=U_{01}(t)U_{02}(t)v_{--}(x,k)=\bar{\v}_{++}(x,k)\; , 
\end{equation*}
\begin{equation*}
  \v_{-+}(x,k)=U_{01}(t)U_{02}^*(t)v_{-+}(x,k)=\bar{\v}_{+-}(x,k)\; .
\end{equation*}
$\de\Gamma_p(X)$ is the second quantization on $\mathscr{F}_s(p)$ of the operator $X$ on $L^2(\mathds{R}^3)$. By means of standard estimates the following lemma can be proved:
\begin{lemma}\label{sec:evol-quant-fluct-3}
    $\forall \delta\in\R$, $\V(t)$ belongs to $\mathscr{B}(\delta+2;\delta)$; furthermore is norm continuous as a function of $t$. We have in fact the following estimates:
  \begin{align*}
     \norm{\V_{--}\Phi}_\delta^2\leq&\frac{1}{2}c_\delta\norm{v_{--}(t)}^2_2\braket{\Phi}{(N+1)^{\delta+2}\Phi}\\
        \norm{\V_{-+}\Phi}_\delta^2\leq& \norm{v_{-+}(t)}^2_2\Bigl(\frac{1}{2}\braket{\Phi}{(N+1)^{\delta+2}\Phi}+\braket{\Phi}{(N+1)^{\delta+1}\Phi}\Bigr)\\
\norm{\V_{+-}\Phi}_\delta^2\leq& \norm{v_{+-}(t)}^2_2\Bigl(\frac{1}{2}\braket{\Phi}{(N+1)^{\delta+2}\Phi}+\braket{\Phi}{(N+1)^{\delta+1}\Phi}\Bigr)\\
\norm{\V_{++}\Phi}_\delta^2\leq& c_{-\delta}\norm{v_{++}(t)}^2_2\Bigl(\frac{1}{2}\braket{\Phi}{(N+1)^{\delta+2}\Phi}+2\braket{\Phi}{(N+1)^{\delta+1}\Phi}+\braket{\Phi}{(N+1)^{\delta}\Phi}\Bigr)\\
  \norm{\V_0\Phi}_\delta^2\leq& (2\pi)^{-3}\norm{(\mathcal{F}^{-1}(\chi)*A(t))}_\infty^2\braket{\Phi}{(N+1)^{\delta+2}\Phi}\; ,
  \end{align*}
where $c_\delta=1$ if $\delta\geq 0$, $c_\delta=3^{\ass{\delta}}$ otherwise.
\end{lemma}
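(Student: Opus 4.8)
The plan is to prove the five estimates at $\delta=0$ first and then restore the weight $(N+1)^{\delta/2}$ by commuting it through each monomial, using that every $\widetilde V_{\#\#}(t)$ shifts the total number $N=N_1+N_2$ by a fixed amount. Writing $\norm{\widetilde V_{\#\#}\Phi}_\delta=\norm{(N+1)^{\delta/2}\widetilde V_{\#\#}\Phi}$ and invoking the commutation identity of lemma~\ref{sec:quantum-theory-5}.(\ref{qt.item:9}) with $\xi(N_1,N_2)=(N_1+N_2+1)^{\delta/2}$, one gets $(N+1)^{\delta/2}\widetilde V_{\#\#}=\widetilde V_{\#\#}(N+1+s)^{\delta/2}$, the shift being $s=-2$ for $\widetilde V_{--}$, $s=0$ for $\widetilde V_{-+}$ and $\widetilde V_{+-}$, and $s=+2$ for $\widetilde V_{++}$. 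Thus it suffices to bound $\norm{\widetilde V_{\#\#}\Psi}$ and then absorb $(N+1+s)^{\delta}$ into $(N+1)^{\delta}$. This last step is exactly where the constants enter: for $s=-2$ one compares $(N-1)^{\delta}$ with $(N+1)^{\delta}$, which is trivial with constant $1$ for $\delta\geq0$ and costs $3^{\ass\delta}$ for $\delta<0$ (because $(N+1)/(N-1)\leq 3$ on the range $N\geq2$ where $\widetilde V_{--}$ is nonzero), giving $c_\delta$; the mirror comparison for $s=+2$ produces $c_{-\delta}$.

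For the base case I would expand $\norm{\widetilde V_{\#\#}\Phi}^2=\braket{\Phi}{\widetilde V_{\#\#}^*\widetilde V_{\#\#}\Phi}$ and bring it to normal order via the canonical commutation relations. The recurring building block is the double annihilation operator: for $v\in L^2(\R^3\times\R^3)$, Cauchy--Schwarz gives $\norm{\int\de{x}\ide{k}\,v(x,k)\psi(x)a(k)\Phi}^2\leq\norm{v}_2^2\int\de{x}\ide{k}\norm{\psi(x)a(k)\Phi}^2$, and since $\psi^\#$ and $a^\#$ commute, $\int\de{x}\,\psi^*(x)\psi(x)=N_1$ and $\int\ide{k}\,a^*(k)a(k)=N_2$, the remaining integral equals $\braket{\Phi}{N_1N_2\Phi}\leq\tfrac12\braket{\Phi}{(N+1)^2\Phi}$; this is already the $\widetilde V_{--}$ bound. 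For the other three monomials the normal ordering of $\widetilde V_{\#\#}^*\widetilde V_{\#\#}$ produces, besides a double annihilation term of the same type, contractions coming from $[\psi(x),\psi^*(x')]=\delta(x-x')$ and $[a(k),a^*(k')]=\delta(k-k')$; each contraction lowers the degree and is controlled by $\braket{\Phi}{N_1\Phi}$, $\braket{\Phi}{N_2\Phi}$ or $\norm{\Phi}^2$, which accounts precisely for the lower-order terms $(N+1)^{\delta+1}$ and $(N+1)^{\delta}$ in the statement. Finally $\widetilde V_0$ is a one-body operator, so the standard estimate $\norm{\de\Gamma_p(M)\Phi}\leq\norm{M}_{\mathrm{op}}\norm{N_1\Phi}$ with $\norm{M}_{\mathrm{op}}=(2\pi)^{-3/2}\norm{\mathcal F^{-1}(\chi)*A(t)}_\infty$ yields its bound.

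The passage to the interaction representation is harmless for all of this: $U_{01}(t)$ and $U_{02}(t)$ are unitary on $L^2(\R^3)$, so $\norm{\v_{\#\#}(t)}_2=\norm{v_{\#\#}(t)}_2$ and the conjugated multiplication operator in $\widetilde V_0$ keeps the operator norm of $M$, which is why the estimates are stated with the original kernels. Norm continuity in $t$ then follows because the right-hand sides are linear in $\norm{v_{\#\#}(t)}_2$ and $\norm{\mathcal F^{-1}(\chi)*A(t)}_\infty$, both continuous in $t$ (the latter in $L^\infty$ by Young's inequality from $A\in\mathscr C^0(\R,L^3)$ and $\mathcal F^{-1}(\chi)\in L^{3/2}$), and because $\widetilde V(t)-\widetilde V(t')$ is an operator of the same form with kernels $v_{\#\#}(t)-v_{\#\#}(t')$; membership in $\mathscr B(\delta+2;\delta)$ is then immediate. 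The only genuinely delicate point---and the main obstacle---is the rigorous justification: the manipulations with the distributions $\psi^\#(x),a^\#(k)$ and the identities $\int\psi^*\psi=N_1$, $\int a^*a=N_2$ are valid a priori only on the finite-particle subspace, which is a core for every $(N+1)^{\delta/2}$, so all computations must be carried out there and extended by closure to $\mathscr H^{\delta+2}$, while carefully tracking the $\theta(N_i-\cdot)$ cut-offs of lemma~\ref{sec:quantum-theory-5}.(\ref{qt.item:10}) when comparing shifted number operators for negative $\delta$ in order to obtain the sharp constants $c_\delta$, $c_{-\delta}$.
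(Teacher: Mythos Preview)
Your proposal is correct and is exactly the ``standard estimates'' argument the paper alludes to without writing out: prove the five bounds at $\delta=0$ by normal ordering $\widetilde V_{\#\#}^*\widetilde V_{\#\#}$ (the double-annihilation piece giving $N_1N_2\leq\tfrac12(N+1)^2$, each contraction giving one power of $N_1$ or $N_2$), then shift the weight through using the fixed $N$-shift of each monomial, which is where $c_\delta$ and $c_{-\delta}$ appear. The paper gives no proof, so there is nothing to compare beyond this.
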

To construct the evolution operator $\U_2(t,s)$ generated by $\V(t)$, we will use the Dyson series. However in order to do that we have to introduce a cut off in the total number of particles: let $\sigma_1\in \mathscr{C}^1(\mathds{R}^+)$, positive and decreasing, $\sigma_1(s)=1$ if $s\leq 1$, $\sigma_1(s)=0$ if $s\geq 2$; define $\sigma_{\nu}$ the operator $\sigma_1(N/\nu)$ in $\mathscr{H}$. Then we set 
\begin{equation*}
  \V_{\nu}(t)=\sigma_{\nu}\V(t)\sigma_{\nu}\; ,
\end{equation*}
for all $\nu\geq 1$.
\begin{lemma}
\label{sec:evol-quant-fluct-4}
Let $\V_{\nu}(t)$ be defined as above, then:
\begin{enumerate}[i.]
\item\label{qf.item:11}  $\V_{\nu}(t)$ satisfies lemma~\ref{sec:evol-quant-fluct-3}, with uniform bound in $\nu$. Furthermore $\V_{\nu}(t)$ is in $\mathscr{B}(\delta;\delta)$ for all $\delta\in\mathds{R}$ and is norm continuous as a function of $t$.
\item\label{qf.item:12} For all $\delta$ in $\R$, $\V_{\nu}(t)\to\V(t)$ when $\nu$ goes to infinity, in norm on $\mathscr{B}(\delta+2+\varepsilon;\delta)$, $\varepsilon>0$, and strongly in $\mathscr{B}(\delta+2;\delta)$, uniformly in $t$ on bounded intervals.
\end{enumerate}
\end{lemma}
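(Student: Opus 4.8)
The plan is to exploit systematically that $\sigma_{\nu}=\sigma_1(N/\nu)$ is a bounded Borel function of the total number operator $N$: it satisfies $0\le\sigma_{\nu}\le 1$ (so $\nnorm{\sigma_{\nu}}\le 1$), it commutes with every power $(N+1)^{s}$, its spectral support lies in $\{N\le 2\nu\}$, while $1-\sigma_{\nu}$ is supported in $\{N\ge\nu\}$. These four facts reduce every assertion to lemma~\ref{sec:evol-quant-fluct-3} together with elementary spectral-calculus inequalities for functions of $N$.

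For the first part of (i) I would treat each of the component operators $\sigma_{\nu}\V_{\#\#}(t)\sigma_{\nu}$ and $\sigma_{\nu}\V_0(t)\sigma_{\nu}$ separately. Since $\sigma_{\nu}$ commutes with $(N+1)^{\delta/2}$ and has norm $\le 1$, one gets $\norm{\sigma_{\nu}\V_{\#\#}(t)\sigma_{\nu}\Phi}_\delta\le\norm{\V_{\#\#}(t)\sigma_{\nu}\Phi}_\delta$; applying lemma~\ref{sec:evol-quant-fluct-3} to $\sigma_{\nu}\Phi$ and using $\braket{\sigma_{\nu}\Phi}{(N+1)^{s}\sigma_{\nu}\Phi}=\norm{\sigma_{\nu}(N+1)^{s/2}\Phi}^2\le\braket{\Phi}{(N+1)^{s}\Phi}$ reproduces each of the five inequalities verbatim, with the same constants and hence uniformly in $\nu$. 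For the $\mathscr{B}(\delta;\delta)$ membership I would use the support of $\sigma_{\nu}$: on $\{N\le 2\nu\}$ one has the operator bound $(N+1)^{(\delta+2)/2}\le(2\nu+1)(N+1)^{\delta/2}$, so $\norm{\sigma_{\nu}\Phi}_{\delta+2}\le(2\nu+1)\norm{\Phi}_\delta$, whence $\norm{\V_{\nu}(t)\Phi}_\delta\le C(2\nu+1)\norm{\Phi}_\delta$; the constant is finite but necessarily $\nu$-dependent. Norm continuity in $\mathscr{B}(\delta;\delta)$ follows because $\sigma_{\nu}$ is $t$-independent, so $\V_{\nu}(t)-\V_{\nu}(t')=\sigma_{\nu}(\V(t)-\V(t'))\sigma_{\nu}$ and the same two manipulations give $\norm{\V_{\nu}(t)-\V_{\nu}(t')}_{\mathscr{B}(\delta;\delta)}\le(2\nu+1)\norm{\V(t)-\V(t')}_{\mathscr{B}(\delta+2;\delta)}$, which tends to $0$ by the norm continuity already in lemma~\ref{sec:evol-quant-fluct-3}.

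For (ii) the key is the identity $\V(t)-\V_{\nu}(t)=(1-\sigma_{\nu})\V(t)+\sigma_{\nu}\V(t)(1-\sigma_{\nu})$. In each term the factor $1-\sigma_{\nu}$ lives on $\{N\ge\nu\}$, where I would trade regularity for decay via the operator-norm bound $\nnorm{(N+1)^{-\varepsilon/2}(1-\sigma_{\nu})}\le(\nu+1)^{-\varepsilon/2}$. For the first term this yields $\norm{(1-\sigma_{\nu})\V(t)\Phi}_\delta\le(\nu+1)^{-\varepsilon/2}\norm{\V(t)\Phi}_{\delta+\varepsilon}\le C(\nu+1)^{-\varepsilon/2}\norm{\Phi}_{\delta+2+\varepsilon}$ by lemma~\ref{sec:evol-quant-fluct-3} at level $\delta+\varepsilon$; for the second, $\norm{\sigma_{\nu}\V(t)(1-\sigma_{\nu})\Phi}_\delta\le\norm{\V(t)(1-\sigma_{\nu})\Phi}_\delta\le C\norm{(1-\sigma_{\nu})\Phi}_{\delta+2}\le C(\nu+1)^{-\varepsilon/2}\norm{\Phi}_{\delta+2+\varepsilon}$. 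Summing gives $\norm{\V(t)-\V_{\nu}(t)}_{\mathscr{B}(\delta+2+\varepsilon;\delta)}\le 2C(\nu+1)^{-\varepsilon/2}\to 0$, the constant $C$ being controlled uniformly on compact $t$-intervals by the continuity of $t\mapsto\norm{v_{\#\#}(t)}_2$ and $t\mapsto\norm{\mathcal{F}^{-1}(\chi)*A(t)}_\infty$. For the strong convergence at $\varepsilon=0$ I would instead invoke dominated convergence in the spectral measure of $N$: for fixed $\Phi\in\mathscr{H}^{\delta+2}$ one has $\norm{(1-\sigma_{\nu})(N+1)^{\delta/2}\V(t)\Phi}\to 0$ and $\norm{(1-\sigma_{\nu})(N+1)^{(\delta+2)/2}\Phi}\to 0$, since $1-\sigma_{\nu}\to 0$ pointwise, dominated by $1$, against the $\mathscr{H}$-vectors $(N+1)^{\delta/2}\V(t)\Phi$ and $(N+1)^{(\delta+2)/2}\Phi$. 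Uniformity in $t$ on a compact interval follows for the first term because $t\mapsto\V(t)\Phi$ is continuous into $\mathscr{H}^\delta$ (hence has compact range) and $\nnorm{1-\sigma_{\nu}}\le 1$, so the strong convergence is uniform on that compact set; the second term is handled directly by the bound $C(t)\norm{(1-\sigma_{\nu})\Phi}_{\delta+2}$ with $C(t)$ bounded on the interval.

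The routine parts are the bookkeeping of the five component estimates and the spectral inequalities for functions of $N$. The one genuinely structural point, and the main thing to get right, is the dichotomy $\varepsilon>0$ versus $\varepsilon=0$: the extra $\varepsilon$ of regularity is precisely what converts the high-$N$ support of $1-\sigma_{\nu}$ into the quantitative decay $(\nu+1)^{-\varepsilon/2}$ yielding \emph{norm} convergence, whereas at $\varepsilon=0$ no such gain is available and only \emph{strong} convergence, through dominated convergence, can hold.
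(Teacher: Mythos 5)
Your proof is correct and takes essentially the same route as the paper: both parts rest on the same spectral-calculus facts, namely that $\sigma_{\nu}$ commutes with powers of $N+1$ and has norm at most one (giving the $\nu$-uniform bounds in (i) with the very constants of lemma~\ref{sec:evol-quant-fluct-3}), and that $1-\sigma_{\nu}$ is supported in $\{N\geq\nu\}$, so that the extra $\varepsilon$ of regularity is traded for the decay $(\nu+1)^{-\varepsilon/2}$ yielding norm convergence in (ii). The only cosmetic difference is the strong-convergence step at $\varepsilon=0$, where the paper invokes strong convergence on the dense set of finite-particle vectors together with the $\nu$-uniform bound in $\mathscr{B}(\delta+2;\delta)$, whereas you apply dominated convergence in the spectral measure of $N$ to your explicit decomposition $\V(t)-\V_{\nu}(t)=(1-\sigma_{\nu})\V(t)+\sigma_{\nu}\V(t)(1-\sigma_{\nu})$; the two arguments are equally standard and interchangeable.
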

\begin{proof}
  \begin{enumerate}[i.]
  \item Observe that $\sigma_{\nu}$ belongs to $\mathscr{B}(\delta;\delta')$ for all $\delta$ and $\delta'$ and $\norm{\sigma_{\nu}\Phi}_{\delta'}^2\leq c(\nu)\norm{\Phi}_{\delta}$, with $c(\nu)=\underset{p+n\leq 2\nu}{\mathrm{sup}}\bigl[\sigma_1^2\Bigl(\frac{p+n}{\nu}\Bigr)(p+n+1)^{\delta'-\delta}\bigr]$. Obviously if $\delta'\leq\delta$, $c(\nu)\leq 1$ for all $\nu\geq 1$, and we have a uniform bound in $\nu$. The result follows using lemma~\ref{sec:evol-quant-fluct-3}.
\item Strong convergence of $\V_{\nu}(t)$ to $\V(t)$ in $\mathscr{B}(\delta+2;\delta)$ follows from the obvious strong convergence on $\mathscr{C}_0(N_1,N_2)$, since $\V_{\nu}(t)$ is bounded in $\mathscr{B}(\delta+2;\delta)$ uniformly in $\nu$. Norm convergence on $\mathscr{B}(\delta+2+\varepsilon;\delta)$ follows from the fact that $(1-\sigma_{\nu})(N+1)^{-\varepsilon}$ goes to zero in norm as an operator in $\mathscr{H}$.
  \end{enumerate}
\end{proof}
The unitary group $\U_{2;\nu}(t,s)$ is defined by means of a Dyson series:
\begin{equation*}
 \U_{2;\nu}(t,s)=\sum_{m=0}^\infty(-i)^m\int_s^t\de t_1\int_s^{t_1}\ide{t_2}\dotsi\int_s^{t_{m-1}}\ide{t_m}\V_{\nu}(t_1)\dotsc \V_{\nu}(t_m)\; .
\end{equation*}
Using previous lemma we see that the series converge in norm on $\mathscr{B}(\delta;\delta)$ and $\U_{2;\nu}(t,s)$ is continuous and differentiable in norm with respect to $t$ on $\mathscr{B}(\delta;\delta)$ for all real $\delta$. We list below some useful properties of the family $\U_{2;\nu}(t,s)$, whose proof is immediate since $\V_\nu\in\mathscr{B}(\delta;\delta)$ for all $\delta\in\R$:
\begin{lemma}\begin{enumerate}[i.]
\item $\U_{2;\nu}(s,s)=1$, $\U_{2;\nu}(t,r)\U_{2;\nu}(r,s)=\U_{2;\nu}(t,s)$ for all $r,s,t\in\mathds{R}$.
\item $\U^*_{2;\nu}(t,s)=\U_{2:\nu}(s,t)$, and $\U_{2;\nu}(t,s)$ is unitary in $\mathscr{H}$.
\item $\U_{2;\nu}(t,s)$ is norm differentiable on $\mathscr{B}(\delta;\delta)$ for all real $\delta$, and
\begin{align*}i\dtot{}{t}\U_{2;\nu}(t,s)&=\V_\nu(t)\U_{2;\nu}(t,s)\; ;\; i\dtot{}{s}\U_{2;\nu}(t,s)=-\U_{2;\nu}(t,s)\V_\nu(s)\; .\end{align*}
  \end{enumerate}
\end{lemma}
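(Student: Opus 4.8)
The plan is to treat $\U_{2;\nu}(t,s)$ purely as the sum of its Dyson series and to read off every property termwise, exploiting that by lemma~\ref{sec:evol-quant-fluct-4} each $\V_\nu(\tau)$ is bounded on $\mathscr{H}$, lies in $\mathscr{B}(\delta;\delta)$ for every real $\delta$ with an operator norm bounded uniformly in $\tau$ on compact intervals, and is self-adjoint (boundedness from $\delta=0$ together with the symmetry inherited from lemma~\ref{sec:evol-quant-fluct-1} and from $\sigma_\nu=\sigma_\nu^*$). Writing $T_m(t,s)$ for the $m$-th summand, the ordered time simplex has volume $\ass{t-s}^m/m!$, so $\norm{T_m(t,s)}_{\mathscr{B}(\delta;\delta)}\leq \frac{1}{m!}\bigl(\sup_\tau\norm{\V_\nu(\tau)}_{\mathscr{B}(\delta;\delta)}\bigr)^m\ass{t-s}^m$. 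Hence the series converges absolutely in $\mathscr{B}(\delta;\delta)$, uniformly for $t,s$ in any compact interval, and $\U_{2;\nu}(s,s)=1$ is immediate because only the $m=0$ term survives when $t=s$ (statement i, first half).

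Statement iii I would obtain by differentiating the series term by term. The $t$-derivative acts only on the outermost upper limit, giving $\partial_t T_m=-i\,\V_\nu(t)T_{m-1}(t,s)$, whence $i\partial_t\U_{2;\nu}(t,s)=\V_\nu(t)\U_{2;\nu}(t,s)$. For the $s$-derivative the lower limit $s$ occurs in all $m$ integrations, but fixing any variable other than the innermost one at $s$ collapses a subsequent integral to zero range; only the innermost boundary at $t_m=s$ contributes, yielding $\partial_s T_m=i\,T_{m-1}(t,s)\V_\nu(s)$ and therefore $i\partial_s\U_{2;\nu}(t,s)=-\U_{2;\nu}(t,s)\V_\nu(s)$. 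In both cases the differentiated series are dominated by the same factorial bound, which legitimises the interchange of differentiation and summation in the $\mathscr{B}(\delta;\delta)$ norm and gives the asserted norm differentiability.

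The cocycle identity $\U_{2;\nu}(t,r)\U_{2;\nu}(r,s)=\U_{2;\nu}(t,s)$ I would deduce from uniqueness for the linear norm-differential equation $i\partial_t X(t)=\V_\nu(t)X(t)$ on $\mathscr{B}(\delta;\delta)$: both sides satisfy it in $t$ with the common value $\U_{2;\nu}(r,s)$ at $t=r$, and a Gronwall estimate on the $\mathscr{B}(\delta;\delta)$-norm of their difference forces them to coincide (statement i, second half). For statement ii, since each $\V_\nu(\tau)$ is bounded and self-adjoint on $\mathscr{H}$, taking the adjoint of the Dyson series reverses the order of the factors $\V_\nu(t_1)\dotsc\V_\nu(t_m)$ and conjugates $(-i)^m$ to $i^m$; after relabelling the ordered integration variables this is exactly the series for $\U_{2;\nu}(s,t)$, so $\U^*_{2;\nu}(t,s)=\U_{2;\nu}(s,t)$. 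Unitarity on $\mathscr{H}$ then follows from the cocycle identity and $\U_{2;\nu}(s,s)=1$, namely $\U_{2;\nu}(t,s)\U_{2;\nu}(s,t)=\U_{2;\nu}(t,t)=1$ and symmetrically.

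I expect no deep obstacle here, in line with the paper's remark that the statement is immediate; the two points demanding genuine care are the justification of term-by-term differentiation in the $\mathscr{B}(\delta;\delta)$ norm (handled by the uniform factorial domination of the differentiated series) and the bookkeeping of the $s$-derivative, where one must verify that only the innermost boundary term is nonzero and that the resulting reindexing reproduces $\U_{2;\nu}(t,s)\V_\nu(s)$ with the correct sign.
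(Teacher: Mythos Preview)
Your proposal is correct and is exactly the standard unpacking of what the paper has in mind: the paper does not give a proof but simply remarks that the properties are immediate because $\V_\nu\in\mathscr{B}(\delta;\delta)$ for all real $\delta$, and your termwise Dyson-series argument (absolute convergence, term-by-term differentiation with factorial domination, cocycle via Gronwall uniqueness, adjoint via self-adjointness of $\V_\nu$ and reversal/relabelling) is precisely how one justifies that remark.
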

The operators $\U_{2;\nu}(t,s)$ also satisfy the following crucial boundedness property:
\begin{lemma}\label{sec:evol-quant-fluct-5}
The operator $\U_{2;\nu}(t,s)$ is bounded on $\H^{\delta}$ uniformly in $\nu$ for all real $\delta$. More precisely:
 \begin{equation}\label{qf.eq:6}
   \norm{\U_{2;\nu}(t,s)}_{\mathscr{B}(\delta;\delta)}\leq\exp\biggl\{\frac{\ass{\delta}}{2}\biggl(\ln 3+ \sqrt{2}c_2(\delta)\ass{\int_s^t\ide{\tau}\norm{v_{--}(\tau)}_2}\biggr)\biggr\}\; ,
 \end{equation}
with $c_2(\delta)=\max (4,3^{\ass{\delta}/2}+1)$.
\end{lemma}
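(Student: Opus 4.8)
The plan is to control the growth of the quantity $M(t)\equiv\norm{(N+1)^{\delta/2}\U_{2;\nu}(t,s)\Phi}^2=\norm{\U_{2;\nu}(t,s)\Phi}_\delta^2$ by a Gronwall argument in $t$. Since $\V_\nu(t)\in\mathscr{B}(\delta;\delta)$ by lemma~\ref{sec:evol-quant-fluct-4}, and $\U_{2;\nu}(t,s)$ is norm differentiable on $\mathscr{B}(\delta;\delta)$ with $i\dtot{}{t}\U_{2;\nu}(t,s)=\V_\nu(t)\U_{2;\nu}(t,s)$, the map $t\mapsto M(t)$ is differentiable; writing $\Phi_t=\U_{2;\nu}(t,s)\Phi$ and using self-adjointness of $\V_\nu(t)$, one reduces the derivative to the expectation of a commutator, $\dtot{}{t}M(t)=-i\braket{\Phi_t}{[\V_\nu(t),(N+1)^\delta]\Phi_t}$ (a real quantity, since the commutator of two self-adjoint operators is anti-self-adjoint). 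Everything will be uniform in $\nu$: the cutoff $\sigma_\nu=\sigma_1(N/\nu)$ is a function of $N$ with $\norm{\sigma_\nu}\leq 1$, so it commutes with $(N+1)^\delta$ and drops out of each constant.

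Next I would compute the commutator. The summands $\V_{\nu,-+}$, $\V_{\nu,+-}$ and $\V_{\nu,0}$ preserve the total particle number, hence commute with $(N+1)^\delta$ and contribute nothing; only $\V_{\nu,--}$ and $\V_{\nu,++}=\V_{\nu,--}^*$ survive, which is exactly why the final bound involves $\norm{v_{--}}_2$ alone. Invoking the intertwining identity of lemma~\ref{sec:quantum-theory-5} (part~\ref{qt.item:9}) with $\xi$ a function of $N$ — here $\V_{--}$ lowers $N$ by two, so $(N+1)^\delta\V_{--}=\V_{--}(N-1)^\delta$ — I would rewrite $[\V_{\nu,--},(N+1)^\delta]=\V_{\nu,--}\bigl((N+1)^\delta-(N-1)^\delta\bigr)$, together with the adjoint expression coming from $\V_{\nu,++}$.

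To close the estimate I would sandwich with $(N+1)^{\pm\delta/2}$. Setting $\Xi_t=(N+1)^{\delta/2}\Phi_t$ (so $\norm{\Xi_t}^2=M(t)$) gives $\dtot{}{t}M(t)=-i\braket{\Xi_t}{\mathcal{O}(t)\Xi_t}$ with $\mathcal{O}(t)=(N+1)^{-\delta/2}[\V_\nu(t),(N+1)^\delta](N+1)^{-\delta/2}$, whence $\ass{\dtot{}{t}M(t)}\leq\norm{\mathcal{O}(t)}M(t)$. Moving the outer powers through $\V_{\nu,--}$ by the same intertwining relation turns $\mathcal{O}(t)$ into $\V_{\nu,--}(t)$ composed with an explicit bounded function of $N$, and the $\mathscr{B}(2;0)$ estimate $\norm{\V_{--}\Theta}\leq\tfrac{1}{\sqrt2}\norm{v_{--}}_2\norm{(N+1)\Theta}$ from lemma~\ref{sec:evol-quant-fluct-3} (case $\delta=0$) reduces $\norm{\mathcal{O}(t)}$ to $\tfrac{1}{\sqrt2}\norm{v_{--}(t)}_2$ times the supremum over the sectors $n\geq 2$ of the scalar function $(n+1)(n-1)^{-\delta/2}\bigl((n+1)^\delta-(n-1)^\delta\bigr)(n+1)^{-\delta/2}$. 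Gronwall's lemma applied to $M(t)$ and a final square root then produce~\eqref{qf.eq:6}.

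The main obstacle is the sharp estimate of this scalar supremum. Setting $r=(n+1)/(n-1)\in(1,3]$ for $n\geq 2$, the function equals $2r^{1-\delta/2}(r^\delta-1)/(r-1)$, whose value interpolates between $\sim 2\ass{\delta}$ as $r\to 1$ and an exponential-in-$\delta$ value at the boundary $r=3$ (i.e.\ $n=2$); controlling it uniformly over all sectors is what forces the constant $c_2(\delta)=\max(4,3^{\ass{\delta}/2}+1)$ inside the exponential, while comparing the shifted number operators $(N\pm 1)$ with $(N+1)$ on those same sectors — equivalently the bound $r^{\ass{\delta}/2}\leq 3^{\ass{\delta}/2}$ — accounts for the additive $\ln 3$ prefactor. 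The delicate point is to carry out this bookkeeping for both signs of $\delta$ and at the low–particle–number boundary (where $\V_{--}$ annihilates the sectors $N\leq 1$, so the $\theta$-factors of lemma~\ref{sec:quantum-theory-5} must be respected), so that no term is lost and the constants emerge exactly as stated.
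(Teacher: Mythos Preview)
Your proposal is correct and follows essentially the same route as the paper: a Gronwall argument on the weighted quantity $\norm{h(N)\U_{2;\nu}(t,s)\Phi}^2$, where only the $N$-nonpreserving pieces $\V_{\nu,--}$ and $\V_{\nu,++}$ contribute to the derivative, and the constants arise from bounding a scalar function of $N$ that compares $h(N)$ with $h(N\pm 2)$. The paper phrases this exactly as in the proof of proposition~\ref{sec:quantum-theory-3}~\ref{qt.item:4} (and, more explicitly, as in the last point of lemma~\ref{sec:useful-estimates-1}): it takes the bounded weight $h(N)=(N+3)^{-\delta}$ with $\delta\geq 1$, writes the derivative as a sum of terms containing $\bigl(h(N)h(N-2)^{-1}-1\bigr)$ and $\bigl(h(N-2)h(N)^{-1}-1\bigr)$, bounds the operator norm of $\sqrt{N_1N_2}$ times each of these using $\ass{h'}$, and obtains the constant $\delta(3^{\delta}+1)$ which, after interpolation between $\delta=0$ and $\delta=1$ and then duality, yields $c_2(\delta)=\max(4,3^{\ass{\delta}/2}+1)$ in the $\mathscr{H}^\delta$ scale. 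Your version differs only cosmetically: you work directly with $(N+1)^{\delta/2}$ and a commutator $[\V_\nu,(N+1)^\delta]$, and you parametrize the scalar estimate via $r=(n+1)/(n-1)$ rather than via $h'$; the shift by $3$ in the paper's weight plays the same role as your care with the $\theta$-factors at $N\leq 1$.
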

This lemma can be proved by the same argument used to prove the last point of proposition~\ref{sec:quantum-theory-3}, with an operator $h(N_1+N_2)$ on $\mathscr{H}$ instead of $h(N_2)$ on $\mathscr{H}_p$.

We are ready to define the fluctuations evolution operator $\U_2(t,s)$.
\begin{proposition}\label{sec:evol-quant-fluct-6}
  There is a family of operators $\U_2(t,s)$ satisfying the following properties:
  \begin{enumerate}[i.]
  \item\label{qf.item:13} for all $\delta\in\R$, $\U_2(t,s)$ is bounded and strongly continuous with respect to $t$ and $s$ on $\H^\delta$ and satisfies
    \begin{equation}
\label{qf.eq:8}
      \norm{\U_{2}(t,s)}_{\mathscr{B}(\delta;\delta)}\leq\exp\biggl\{\frac{\ass{\delta}}{2}\biggl(\ln 3+ \sqrt{2}c_2(\delta)\ass{\int_s^t\ide{\tau}\norm{v_{--}(\tau)}_2}\biggr)\biggr\}\; ,
    \end{equation}
with $c_2(\delta)=\max(4,3^{\ass{\delta}/2}+1)$.
\item\label{qf.item:14} $\U_2(t,s)$ is unitary in $\mathscr{H}$.
\item\label{qf.item:15} $\U_2(s,s)=1$, $\U_2(t,r)\U_2(r,s)=\U_2(t,s)$ for all $r$, $s$ and $t$ in $\mathds{R}$.
\item\label{qf.item:16} For all $\delta\in\R$, $\U_2(t,s)$ is strongly differentiable from $\H^{\delta+2}$ to $\H^\delta$; in particular is strongly differentiable from $D(N)$ to $\mathscr{H}$. Furthermore:
\begin{align*}i\dtot{}{t}\U_2(t,s)&=\V(t)\U_2(t,s)\; ;\; i\dtot{}{s}\U_2(t,s)=-\U_2(t,s)\V(s)\; .\end{align*}
\item\label{qf.item:17} For all $\Psi\in D(N)$ and $\Phi\in\mathscr{H}$, $i\partial_t\braket{\Psi}{\U_2(t,s)\Phi}=\braket{\V(t)\Psi}{\U_2(t,s)\Phi}$.
\item\label{qf.item:18} Let $U_2(t,s)=U_0(t)\U_2(t,s)U_0^{-1}(s)$; for all $\Psi\in D(N)\cap D(H_0)$, $\Phi\in\mathscr{H}$
  \begin{equation*}
     i\partial_t\Bigl\langle\Psi,U_2(t,s)\Phi\Bigr\rangle=\Bigl\langle\Bigl(H_0+V(t)\Bigr)\Psi,U_2(t,s)\Phi\Bigr\rangle\; .
  \end{equation*}
  \end{enumerate}
\end{proposition}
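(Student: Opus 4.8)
The plan is to \emph{construct} $\U_2(t,s)$ as the strong limit of the regularised Dyson evolutions $\U_{2;\nu}(t,s)$ as $\nu\to\infty$, and then to transfer each listed property from the family $\{\U_{2;\nu}\}$ to the limit. The starting point is that, for every fixed $\delta\in\R$ and every $\Phi\in\H^{\delta+2}$, the net $\{\U_{2;\nu}(t,s)\Phi\}_\nu$ is Cauchy in $\H^\delta$. To see this I would use Duhamel's formula, available because each $\U_{2;\nu}$ is norm differentiable on $\mathscr{B}(\delta;\delta)$,
\[
  \U_{2;\nu}(t,s)-\U_{2;\mu}(t,s)=-i\int_s^t\U_{2;\nu}(t,\tau)\bigl(\V_\nu(\tau)-\V_\mu(\tau)\bigr)\U_{2;\mu}(\tau,s)\,\de\tau\; .
\]
Applied to $\Phi\in\H^{\delta+2}$, the factor $\U_{2;\mu}(\tau,s)\Phi$ remains in $\H^{\delta+2}$ with norm bounded uniformly in $\mu$ (Lemma~\ref{sec:evol-quant-fluct-5}); $(\V_\nu-\V_\mu)(\tau)$ carries it into $\H^\delta$ and tends there strongly to $0$ (Lemma~\ref{sec:evol-quant-fluct-4}); and $\U_{2;\nu}(t,\tau)$ is bounded on $\H^\delta$ uniformly in $\nu$ (again Lemma~\ref{sec:evol-quant-fluct-5}). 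Dominated convergence, with the uniform bounds supplying the majorant, sends the integral to $0$. Hence $\U_2(t,s)\Phi\equiv\lim_\nu\U_{2;\nu}(t,s)\Phi$ exists in $\H^\delta$, and the uniform $\mathscr{B}(\delta;\delta)$ estimate extends the convergence to all of $\H^\delta$ by density. I would run the same argument at level $\delta+2$, since that stronger convergence is needed below.

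Properties \ref{qf.item:13}--\ref{qf.item:15} then follow quickly. The bound~\eqref{qf.eq:8} is obtained by letting $\nu\to\infty$ in the estimate of Lemma~\ref{sec:evol-quant-fluct-5}; strong continuity in $t,s$ comes from the integral equation derived below together with the uniform bound. For unitarity, a strong limit of isometries is an isometry, and since $\U_{2;\nu}(t,s)^*=\U_{2;\nu}(s,t)$ converges strongly to $\U_2(s,t)$ by the same argument with $t$ and $s$ exchanged, the limit satisfies $\U_2(t,s)^*=\U_2(s,t)$; thus both $\U_2(t,s)$ and its adjoint are isometries, i.e. $\U_2(t,s)$ is unitary. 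The cocycle identity is obtained by passing to the strong limit in $\U_{2;\nu}(t,r)\U_{2;\nu}(r,s)=\U_{2;\nu}(t,s)$, using that a product of uniformly bounded, strongly convergent operators converges strongly.

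The heart of the matter is differentiability, property~\ref{qf.item:16}. Each regularised family satisfies $\U_{2;\nu}(t,s)\Phi=\Phi-i\int_s^t\V_\nu(\tau)\U_{2;\nu}(\tau,s)\Phi\,\de\tau$. For $\Phi\in\H^{\delta+2}$ I would pass to the limit in the integrand after splitting $\V_\nu\U_{2;\nu}\Phi-\V\U_2\Phi=\V_\nu(\U_{2;\nu}-\U_2)\Phi+(\V_\nu-\V)\U_2\Phi$: the second term tends to $0$ in $\H^\delta$ because $\U_2(\tau,s)\Phi\in\H^{\delta+2}$ and $\V_\nu\to\V$ strongly in $\mathscr{B}(\delta+2;\delta)$, while the first tends to $0$ because $\V_\nu$ is bounded in $\mathscr{B}(\delta+2;\delta)$ uniformly in $\nu$ and $(\U_{2;\nu}-\U_2)\Phi\to0$ in $\H^{\delta+2}$. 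The limit gives $\U_2(t,s)\Phi=\Phi-i\int_s^t\V(\tau)\U_2(\tau,s)\Phi\,\de\tau$, and since $\tau\mapsto\V(\tau)\U_2(\tau,s)\Phi$ is continuous into $\H^\delta$ (norm continuity of $\V$ from Lemma~\ref{sec:evol-quant-fluct-3}, strong continuity of $\U_2$), the fundamental theorem of calculus yields strong differentiability from $\H^{\delta+2}$ to $\H^\delta$ with $i\partial_t\U_2=\V\U_2$; the $s$-equation follows from the companion integral equation $\U_{2;\nu}(t,s)\Phi=\Phi-i\int_s^t\U_{2;\nu}(t,\tau)\V_\nu(\tau)\Phi\,\de\tau$. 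For~\ref{qf.item:17}, taking $\Psi\in D(N)=\H^2$, $\Phi\in\H$ and using symmetry of $\V_\nu=\sigma_\nu\V\sigma_\nu$ I would write $\braket{\Psi}{\U_{2;\nu}(t,s)\Phi}=\braket{\Psi}{\Phi}-i\int_s^t\braket{\V_\nu(\tau)\Psi}{\U_{2;\nu}(\tau,s)\Phi}\,\de\tau$ and pass to the limit ($\V_\nu\Psi\to\V\Psi$ in $\H$, $\U_{2;\nu}(\tau,s)\Phi\to\U_2(\tau,s)\Phi$ strongly), then differentiate; property~\ref{qf.item:18} follows by undoing the interaction representation, differentiating $U_2(t,s)=U_0(t)\U_2(t,s)U_0^{-1}(s)$ with Stone's theorem for $U_0$ and $V(t)=U_0(t)\V(t)U_0^*(t)$.

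I expect the delicate step to be the passage to the limit inside the integral equation in property~\ref{qf.item:16}: controlling $\V_\nu(\U_{2;\nu}-\U_2)\Phi$ is harmless only if $\U_{2;\nu}\to\U_2$ in the stronger norm of $\H^{\delta+2}$ rather than merely $\H^\delta$, which forces the entire Cauchy argument to be carried out one notch higher in the scale $\{\H^\delta\}$. The real care lies in this bookkeeping of which space each factor occupies, so that the uniform-in-$\nu$ bounds of Lemma~\ref{sec:evol-quant-fluct-5} always dominate; the remaining steps are routine limiting arguments.
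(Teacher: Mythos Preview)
Your overall strategy is the paper's strategy: build $\U_2$ as the limit of the regularised Dyson evolutions $\U_{2;\nu}$ via Duhamel, then transfer each property. Parts~\ref{qf.item:14}--\ref{qf.item:18} of your sketch are fine and essentially identical to the paper's argument.

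The gap is in your Cauchy step for part~\ref{qf.item:13}. From
\[
  \U_{2;\nu}(t,s)-\U_{2;\mu}(t,s)=-i\int_s^t\U_{2;\nu}(t,\tau)\bigl(\V_\nu(\tau)-\V_\mu(\tau)\bigr)\U_{2;\mu}(\tau,s)\,\de\tau
\]
you want to conclude that $(\V_\nu-\V_\mu)(\tau)\,\U_{2;\mu}(\tau,s)\Phi\to 0$ in $\H^\delta$ using only the \emph{strong} convergence $\V_\nu\to\V$ in $\mathscr{B}(\delta+2;\delta)$. This does not follow: the vector $\U_{2;\mu}(\tau,s)\Phi$ moves with $\mu$, and strong operator convergence on a merely bounded (not compact) family of vectors gives nothing. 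Running ``the same argument one notch higher'' does not help; the circularity reappears at every level.

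The paper avoids this by using the other half of Lemma~\ref{sec:evol-quant-fluct-4}\,(\ref{qf.item:12}), namely the \emph{norm} convergence $\V_\nu\to\V$ in $\mathscr{B}(\delta+2+\varepsilon;\delta)$. Then the Duhamel identity yields directly
\[
  \norm{\U_{2;\nu}(t,s)-\U_{2;\nu'}(t,s)}_{\mathscr{B}(\delta+2+\varepsilon;\delta)}
  \leq \ass{t-s}\,e^{\gamma(t,s)}\sup_{\tau\in[s,t]}\norm{\V_\nu(\tau)-\V_{\nu'}(\tau)}_{\mathscr{B}(\delta+2+\varepsilon;\delta)}\to 0,
\]
so $\U_{2;\nu}$ is Cauchy in $\mathscr{B}(\delta+2+\varepsilon;\delta)$; combined with the uniform bound~\eqref{qf.eq:6} in $\mathscr{B}(\delta;\delta)$ this upgrades to strong convergence in $\mathscr{B}(\delta;\delta)$ by density. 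In other words, the ``delicate step'' you anticipate in part~\ref{qf.item:16} is real, but it already bites in part~\ref{qf.item:13}, and the cure is the $\varepsilon$-loss norm convergence rather than bookkeeping in the $\H^\delta$ scale. Once you have strong convergence of $\U_{2;\nu}$ in $\mathscr{B}(\delta';\delta')$ for \emph{every} $\delta'$ (in particular $\delta'=\delta+2$), your passage to the limit in the integral equation for~\ref{qf.item:16} goes through exactly as you wrote it.
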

\begin{proof}
  \begin{enumerate}[i.]
  \item For all couples of positive integers $\nu$ and $\nu'$, write
    \begin{equation*}
      \U_{2;\nu}(t,s)-\U_{2;\nu'}(t,s)=-i\int_{s}^t\ide{\tau} \U_{2;\nu'}(t,\tau)(\V_\nu(\tau)-\V_{\nu'}(\tau))\U_{2;\nu}(\tau,s) ,
    \end{equation*}
as a Riemann integral in norm on $\mathscr{B}(\delta;\delta)$ for all $\delta$. Then, using first part of lemma~\ref{sec:evol-quant-fluct-4} and equation~\eqref{qf.eq:6} we obtain
\begin{equation*}\begin{split}
  \norm{\U_{2;\nu}(t,s)-\U_{2;\nu'}(t,s)}_{\mathscr{B}(\delta+2+\varepsilon;\delta)}\leq\ass{t-s}e^{\gamma\ass{\int_s^t\ide{\tau}\norm{v_{--}(\tau)}_2}}\sup_{\tau\in[s,t]}\norm{\V_\nu(\tau)-\V_{\nu'}(\tau)}_{\mathscr{B}(\delta+2+\varepsilon;\delta)}\; ,
\end{split}\end{equation*}
where $\gamma$ depends on $\delta$ and $\varepsilon$. Utilizing then second part of lemma~\ref{sec:evol-quant-fluct-4}, we see that for all $\delta\in\R$, $\U_{2;\nu}(t,s)$ converges in norm on $\mathscr{B}(\delta+2+\varepsilon;\delta)$ when $\nu\to\infty$ uniformly in $t$ and $s$ on every compact interval. The resulting limit $\U_2(t,s)$ is continuous in the norm of $\mathscr{B}(\delta+2+\varepsilon;\delta)$ with respect to $t$ and $s$. The norm convergence just proved and the estimate~\eqref{qf.eq:6}, uniform in $\nu$, imply the strong convergence of $\U_{2;\nu}(t,s)$ to $\U_2(t,s)$ on $\mathscr{B}(\delta;\delta)$ uniformly in $t$ and $s$ on every compact interval. Consequently $\U_2(t,s)$ satisfies the estimate~\eqref{qf.eq:8} and is strongly continuous in $t$ and $s$.
\item The result follows from the unitarity of $\U_{2;\nu}(t,s)$ on $\H$ and from the strong convergence of $\U_{2;\nu}(t,s)$ and its adjoint $\U_{2;\nu}(s,t)$.
\item The result is an immediate consequence of the corresponding properties of $\U_{2;\nu}(t,s)$.
\item Write $\U_{2;\nu}(t,s)\Phi$, with $\Phi\in\H^{\delta+2}$, as a strong Riemann integral on $\H^\delta$:
  \begin{equation*}
       \U_{2;\nu}(t,s)\Phi=\Phi-i\int_{s}^t\ide{\tau}\V_\nu(\tau)\U_{2;\nu}(\tau,s)\Phi\; .
  \end{equation*}
Using point~\ref{qf.item:12}. of lemma~\ref{sec:evol-quant-fluct-4} and the strong convergence proved above we can go to the limit $\nu\to\infty$ in previous equation. The result then following from lemma~\ref{sec:evol-quant-fluct-3} and from point \ref{qf.item:13}. of this lemma.
\item Consider both $\Psi$ and $\Theta$ in $D(N)$, then using previous point:
\begin{equation}\label{qf.eq:10}
   \braket{\Psi}{\U_2(t,s)\Theta}-\braket{\Psi}{\Theta}=-i\int_s^t\ide{t'}\braket{\V(t')\Psi}{\U_2(t',s)\Theta}\; .
\end{equation}
Consider now $\{\Phi_j\}\in D(N)$ such that $\H-\lim_j\Phi_j=\Phi\in\H$. For all $\Phi_j$ equation~\eqref{qf.eq:10} holds, furthermore both $\V(t)\Psi$ and $\U_2(t,s)\Phi_j$ are uniformly bounded in $t$, so we use the dominated convergence theorem to go to the limit $j\to\infty$.
\item With the aid of previous point, we calculate explicitly, for $\Psi\in D(N)\cap D(H_0)$, $\Phi\in\H$ the derivative:
  \begin{equation*}\begin{split}
      i\partial_t\Bigl\langle\Psi,U_2(t,s)\Phi\Bigr\rangle=\Bigl\langle H_0U_0^{-1}(t)\Psi,\U_2(t,s)U_0^{-1}(s)\Phi\Bigr\rangle+\Bigl\langle\V(t)U_0^{-1}(t)\Psi,\U_2(t,s)U_0^{-1}(s)\Phi\Bigr\rangle\; ,
  \end{split}\end{equation*}
where the second term of the right hand side of the equality makes sense because $D(N)\cap D(H_0)$ is invariant under the action of $U_0^{-1}(t)$ since $N$ and $H_0$ commute. The result follows immediately.
  \end{enumerate}
\end{proof}
We want to emphasize that, even if $U_2(t,s)$ defined above is formally generated by $H_0+V(t)$, \emph{i.e.} formally satisfies the equation $i\dtot{}{t}U_2(t,s)=\bigl(H_0+V(t)\bigr)U_2(t,s)$, we can only assert that it is weakly differentiable in the sense make explicit in point~\ref{qf.item:18}. of the previous proposition. We are not able to formulate any strong differentiability property for $U_2$, and we need to use the interaction representation in order to take strong derivatives. However the following uniqueness result regarding $U_2$ can be proved:
\begin{lemma}
  Let $s\in\R$, $\Phi(\cdot)\in\mathscr{C}_{W}(\R,\H)$ with $\Phi(s)\equiv \Phi$, such that $i\partial_t\bigl\lvert\bigl\langle\Psi,\Phi(t)\bigr\rangle\bigr\rvert=\bigl\lvert\bigl\langle\bigl(H_0+V(t)\bigr)\Psi,\Phi(t)\bigr\rangle\bigr\rvert$ for all $\Psi\in D(N)\cap D(H_0)$ and $\Phi\in\H$. Then $\Phi(t)=U_2(t,s)\Phi$.
\end{lemma}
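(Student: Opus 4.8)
The equation in the hypothesis is precisely the weak equation of point~\ref{qf.item:18} of proposition~\ref{sec:evol-quant-fluct-6} (the modulus signs around the pairings are read as delimiters), so $U_2(t,s)\Phi$ is itself a solution with datum $\Phi$ at time $s$; it therefore suffices to prove that a weakly continuous $\Phi(\cdot)$ solving this weak equation with $\Phi(s)=0$ vanishes identically. The plan is to transport the problem to the interaction representation, where $\U_2$ is \emph{strongly} differentiable (point~\ref{qf.item:16}) and hence usable as a time--dependent test vector. Concretely I would set $\widetilde{\Phi}(t)\equiv U_0^*(t)\Phi(t)$, which is again weakly continuous, and show that it obeys
\begin{equation*}
 i\partial_t\braket{\Xi}{\widetilde{\Phi}(t)}=\braket{\V(t)\Xi}{\widetilde{\Phi}(t)}\qquad\text{for all }\Xi\in D(N)\; .
\end{equation*}
For $\Xi\in D(N)\cap D(H_0)$ this comes from writing $\braket{\Xi}{\widetilde{\Phi}(t)}=\braket{U_0(t)\Xi}{\Phi(t)}$ and splitting the difference quotient into a piece in which $U_0(t)\Xi$ is differentiated (Stone's theorem, $U_0$ preserving $D(N)\cap D(H_0)$ by proposition~\ref{sec:quantum-theory-3}) and a piece in which the \emph{frozen} test vector $U_0(t)\Xi$ is fed into the Schrödinger weak equation; the identity $V(t)U_0(t)=U_0(t)\V(t)$ identifies the limit, the $H_0$ contributions cancel, and $\V(t)\Xi\in\H$ because $\V(t)\in\mathscr{B}(2;0)$ (lemma~\ref{sec:evol-quant-fluct-3}).

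Since $U_0^*(t)U_2(t,s)\Phi=\U_2(t,s)U_0^*(s)\Phi$ and $\U_2(t,s)$ solves the same interaction weak equation (point~\ref{qf.item:17}), the vector $\widetilde{\Delta}(t)\equiv\widetilde{\Phi}(t)-\U_2(t,s)\widetilde{\Phi}(s)$ is weakly continuous, satisfies that equation, and has $\widetilde{\Delta}(s)=0$; proving $\Phi(t)=U_2(t,s)\Phi$ is equivalent to proving $\widetilde{\Delta}\equiv 0$. Fixing $t_0$ and $\chi\in D(N)$ I would study
\begin{equation*}
 \phi(\tau)\equiv\braket{\U_2(\tau,t_0)\chi}{\widetilde{\Delta}(\tau)}\; ,\qquad\tau\text{ between }s\text{ and }t_0\; .
\end{equation*}
As $\U_2$ preserves $\H^2=D(N)$ (point~\ref{qf.item:13}) the vector $\U_2(\tau,t_0)\chi$ remains an admissible test vector, and since $\chi\in\H^2$ the map $\tau\mapsto\U_2(\tau,t_0)\chi$ is strongly differentiable in $\H$ with derivative $-i\V(\tau)\U_2(\tau,t_0)\chi$ (point~\ref{qf.item:16} with $\delta=0$). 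Splitting the difference quotient of $\phi$ into its $\U_2$--part (controlled by this strong derivative) and its $\widetilde{\Delta}$--part (controlled by the weak equation with the frozen test vector $\U_2(\tau,t_0)\chi$) yields
\begin{equation*}
 \phi'(\tau)=i\braket{\V(\tau)\U_2(\tau,t_0)\chi}{\widetilde{\Delta}(\tau)}-i\braket{\V(\tau)\U_2(\tau,t_0)\chi}{\widetilde{\Delta}(\tau)}=0\; .
\end{equation*}
Thus $\phi$ is constant, so $\phi(t_0)=\braket{\chi}{\widetilde{\Delta}(t_0)}$ equals $\phi(s)=0$; density of $D(N)$ in $\H$ then gives $\widetilde{\Delta}(t_0)=0$, and with $t_0$ arbitrary $\widetilde{\Delta}\equiv 0$.

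The main obstacle is the rigour of the two mixed strong/weak differentiations. To pass to the limit in each product of pairings I would first record that a weakly continuous $\H$--valued function is norm bounded on compact intervals (uniform boundedness principle applied to the family $\braket{\,\cdot\,}{\widetilde{\Delta}(\tau)}$), so that in every difference quotient I am pairing a strongly convergent sequence with a weakly convergent, norm--bounded one, a pairing that does converge to the product of the limits. The second delicate point is that the frozen test vector $\U_2(\tau,t_0)\chi$ lies a priori only in $D(N)$ and not in $D(H_0)$, so the interaction weak equation must be extended from $D(N)\cap D(H_0)$ to all of $D(N)$; I would do this at the level of the integral form, whose two sides are continuous in $\Xi$ for the $\H^2$--norm (because $\V(\sigma)\in\mathscr{B}(2;0)$ is norm continuous in $\sigma$ by lemma~\ref{sec:evol-quant-fluct-3} and $\widetilde{\Phi}$ is locally bounded), invoking the density of $D(N)\cap D(H_0)$ in $\H^2$.
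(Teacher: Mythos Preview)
The paper states this lemma without proof, so there is nothing to compare against; your argument must stand on its own, and it does. The strategy---pass to the interaction picture, then pair the difference $\widetilde{\Delta}$ against the backward--evolved test vector $\U_2(\tau,t_0)\chi$ so that the strong derivative of $\U_2$ (point~\ref{qf.item:16} of proposition~\ref{sec:evol-quant-fluct-6}) cancels the weak derivative of $\widetilde{\Delta}$---is the standard uniqueness proof for equations of this type, and the computations you sketch are correct once one keeps track of the conjugate--linearity of the pairing in its first slot.

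The two technical caveats you flag are the right ones and are handled correctly. Local norm boundedness of a weakly continuous $\H$--valued map follows from the uniform boundedness principle, and this is what lets a strongly convergent factor paired with a weakly convergent bounded factor converge. For the extension of the interaction weak equation from $\Xi\in D(N)\cap D(H_0)$ to $\Xi\in D(N)$, your integral--form argument works: finite--particle vectors with smooth components lie in $D(N)\cap D(H_0)$ and are dense in $\H^2$, both sides of the integral identity are $\H^2$--continuous in $\Xi$ (the right side because $\V(\sigma)\in\mathscr{B}(2;0)$ is norm continuous in $\sigma$ and $\widetilde{\Phi}$ is locally bounded), and the integrand $\sigma\mapsto\braket{\V(\sigma)\Xi}{\widetilde{\Delta}(\sigma)}$ is continuous so the fundamental theorem of calculus recovers the differential form for every $\Xi\in D(N)$. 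This last continuity is worth stating explicitly, since it is what legitimises using the differential form with the time--dependent test vector $\U_2(\tau,t_0)\chi$ in your computation of $\phi'(\tau)$.
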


\subsection{The evolution $\W(t,s)$.}
\label{sec:evolution-wt-s}

We recall the definition of the unitary evolution between coherent states:
\begin{equation*}
  W(t,s)=C^*(u_\lambda(t),\alpha_\lambda(t))U(t-s)C(u_\lambda(s),\alpha_\lambda(s))e^{i\Lambda(t,s)}\; ,
\end{equation*}
where $\Lambda(t,s)$ is a phase function, and $(u(\cdot),\alpha(\cdot))$ is the $\mathscr{C}^0(\mathds{R},L^2(\mathds{R}^3)\otimes L^2(\mathds{R}^3))$ unique solution of~\eqref{eq:2} corresponding to initial data $(u(s),\alpha(s))\in L^2(\mathds{R}^3)\otimes L^2(\mathds{R}^3)$ (the existence of such solution has been established in section~\ref{sec:stat-main-results}).

In the interaction picture, we will write $\W(t,s)=U^*_0(t)W(t,s)U_0(s)$, so using the last point of proposition~\ref{sec:quantum-theory-8}:
\begin{equation*}
    \W(t,s)=C^*(\widetilde{u}_\lambda(t),\widetilde{\alpha}_\lambda(t))U_0^*(t)U(t-s)U_0(s)C(\widetilde{u}_\lambda(s),\widetilde{\alpha}_\lambda(s))e^{i\Lambda(t,s)}\; .
\end{equation*}
By lemma~\ref{cl.rem:1} $(\widetilde{u}(\cdot),\widetilde{\alpha}(\cdot))\in \mathscr{C}^1(\mathds{R},L^2(\mathds{R}^3)\otimes L^2(\mathds{R}^3))$, and
\begin{align*}
  i\partial_t \widetilde{u}(t)&=(2\pi)^{-3/2}U_{01}(-t)\Bigl(\mathcal{F}^{-1}(\chi) * A(t)\Bigr)u(t)\; \\
  i\partial_t \widetilde{\alpha}(t)&=\frac{(2\pi)^{-3/2}}{\sqrt{2}}U_{02}(-t)\Bigl(\omega^{-1/2}\chi \mathcal{F}(\ass{u}^2) (t)\Bigr)\; .
\end{align*}
By definition $\W(t,s)$ is unitary on $\mathscr{H}$ and such that $\W^*(t,s)=\W(s,t)$. Define now
\begin{equation*}
  Z(t)=C^*(\widetilde{u}_\lambda(t),\widetilde{\alpha}_\lambda(t))U_0^*(t)U(t)e^{i\Lambda(t,0)}\; \Rightarrow\; \W(t,s)=Z(t)Z^*(s)\; .
\end{equation*}
Define also the domains:
\begin{equation}
  \label{eq:19}
  \begin{aligned}
    \mathscr{D}&=\{\Psi\in D(N) | C(\widetilde{u}_\lambda(s),\widetilde{\alpha}_\lambda(s))\Psi\in D(H_0)\}\\
\mathscr{D}^\delta&=\{\Psi\in \mathscr{H}^\delta | C(\widetilde{u}_\lambda(s),\widetilde{\alpha}_\lambda(s))\Psi\in D(H_0)\}\; .
  \end{aligned}
\end{equation}
\begin{lemma}
  \label{sec:limit-lambdato-0-5}
$Z(t)$ is strongly differentiable from $D(H_0)\cap D(N_1^2+N_2)$ to $\mathscr{H}$. If $\Lambda(t,s)$ satisfies equation~\eqref{eq:12}, then for all $\Psi\in D(H_0)\cap D(N_1^2+N_2)$ we have $  i\partial_t Z(t)\Psi=\Bigl( U_0^*(t)H_{I}U_0(t)+\widetilde{V}(t) \Bigr)Z(t)\Psi$.
\end{lemma}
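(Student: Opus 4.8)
The plan is to differentiate the three time-dependent factors of
\[
Z(t)=C^*(\widetilde{u}_\lambda(t),\widetilde{\alpha}_\lambda(t))\,U_0^*(t)U(t)\,e^{i\Lambda(t,0)}
\]
by a Leibniz rule, and then to reorganise the resulting operator using the shift property of Weyl operators (proposition~\ref{sec:quantum-theory-8}, point~\ref{qt.item:5}) together with the classical equations of motion (lemma~\ref{cl.rem:1}). Abbreviate $C_\lambda(t)=C(\widetilde{u}_\lambda(t),\widetilde{\alpha}_\lambda(t))$ and, for $\Psi\in D(H_0)\cap D(N_1^2+N_2)$, set
\[
\Psi_1(t)=e^{i\Lambda(t,0)}\Psi,\quad \Psi_2(t)=U(t)\Psi_1(t),\quad \Psi_3(t)=U_0^*(t)\Psi_2(t),
\]
so that $Z(t)\Psi=C_\lambda^*(t)\Psi_3(t)$.

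First I would establish strong differentiability by peeling off the factors one at a time. By proposition~\ref{sec:quantum-theory-2}, point~\ref{qt.item:22}, one has $D(H_0)\cap D(N_1^2+N_2)=D(H)\cap D(N_1^2+N_2)\subseteq D(H)$, so $\Psi_1(t)\in D(H)$; proposition~\ref{sec:quantum-theory-3} (points~\ref{qt.item:3},~\ref{qt.item:4}) then gives $\Psi_2(t)\in D(H)\cap D(N_1^2+N_2)=D(H_0)\cap D(N_1^2+N_2)$, and points~\ref{qt.item:1},~\ref{qt.item:2} give $\Psi_3(t)\in D(H_0)\cap D(N_1^2+N_2)\subseteq D(N_1^{1/2})\cap D(N_2^{1/2})$. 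The decisive observation is that each factor $U(t)$, $U_0^*(t)$, $C_\lambda^*(t)$ is unitary (hence uniformly bounded) and strongly continuous, and is strongly differentiable on the domain where the corresponding intermediate vector lives. Splitting every difference quotient as $B(t+\varepsilon)\tfrac{\Phi(t+\varepsilon)-\Phi(t)}{\varepsilon}+\tfrac{B(t+\varepsilon)-B(t)}{\varepsilon}\Phi(t)$, the second summand always acts on the \emph{fixed} vector $\Phi(t)$, so no differentiability of the derivatives is needed, only the domain memberships just recorded. This yields strong differentiability of $Z(t)\Psi$, and since $H_0$ commutes with $U_0$ the two $H_0$-contributions cancel, leaving
\[
i\partial_t Z(t)\Psi=\Bigl[\,C_\lambda^*(t)\bigl(U_0^*(t)H_IU_0(t)\bigr)C_\lambda(t)-iC_\lambda^*(t)\mathcal{L}(t)C_\lambda(t)-\dot\Lambda(t,0)\,\Bigr]Z(t)\Psi,
\]
where $\mathcal{L}(t)$ is the field-linear operator occurring in the derivative formula of proposition~\ref{sec:quantum-theory-8}, point~\ref{qt.item:6}.

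The second step is purely algebraic. Conjugating $U_0^*(t)H_IU_0(t)$ by $C_\lambda(t)$ I would use point~\ref{qt.item:5} to replace each field by itself plus the corresponding classical amplitude $\widetilde{u}_\lambda=\widetilde{u}/\lambda$, $\widetilde{\alpha}_\lambda=\widetilde{\alpha}/\lambda$. Because $H_I$ is cubic in the fields with a prefactor $\lambda$, grading the outcome by the number of surviving quantum fields produces exactly four pieces: a c-number of order $\lambda^{-2}$; a field-linear term of order $\lambda^{-1}$; a field-bilinear term of order $\lambda^{0}$, which is precisely the interaction-picture fluctuation operator $\widetilde{V}(t)$ (compare the definition of $V(t)$ with the bilinear part $h_2$); and the field-cubic term, which is $U_0^*(t)H_IU_0(t)$ itself. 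Conjugating $\mathcal{L}(t)$ likewise only adds scalars to its field-linear part, so $-iC_\lambda^*\mathcal{L}C_\lambda$ contributes a field-linear term of the same order $\lambda^{-1}$ plus further c-numbers. The field-linear terms cancel identically: this is exactly the assertion that $(\widetilde{u},\widetilde{\alpha})$ solves the interaction-picture classical system of lemma~\ref{cl.rem:1}, matched against $-i\psi^*(\dot{\widetilde{u}}_\lambda)+\dotsb$ and $-ia^*(\dot{\widetilde{\alpha}}_\lambda)+\dotsb$. The residual c-numbers cancel against $-\dot\Lambda(t,0)$, which is precisely how the phase $\Lambda(t,s)$ in~\eqref{eq:12} was chosen. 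What survives is $U_0^*(t)H_IU_0(t)+\widetilde{V}(t)$ acting on $Z(t)\Psi$, as claimed.

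I expect the main obstacle to be the differentiability bookkeeping rather than the cancellations. The cubic operator $H_I$ must be kept finite, which forces the domain $D(N_1^2+N_2)$ through the corollary to lemma~\ref{sec:quantum-theory-6}, while the free propagators demand $D(H_0)$; the pivotal point is the identification $D(H_0)\cap D(N_1^2+N_2)=D(H)\cap D(N_1^2+N_2)$ of proposition~\ref{sec:quantum-theory-2}, which alone makes the $U(t)$-invariance of the working domain available and lets the peeling argument close. Passing to the interaction picture is likewise essential, since it is what removes $H_0$ from the generator and makes the factor-by-factor strong derivative exist. The algebraic cancellations, though numerous, are mechanical once the classical equations and the explicit form of $\Lambda$ are inserted.
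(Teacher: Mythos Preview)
Your proposal is correct and follows essentially the same route as the paper: the same domain bookkeeping via proposition~\ref{sec:quantum-theory-2}\,(\ref{qt.item:22}) and proposition~\ref{sec:quantum-theory-3} to justify factor-by-factor strong differentiation, followed by the Weyl conjugation of $H_I$ and cancellation of linear and scalar terms against the classical equations and $\dot\Lambda$. You are in fact more explicit than the paper about the algebraic step, which it dismisses as ``algebraic manipulation'' together with the conjugation identity for $\int(\bar f a+f a^*)\psi^*\psi$.
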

\begin{proof}
Let $\Psi\in D(H_0)\cap D(N_1^2+N_2)$, then $U(t)$ is differentiable on $\Psi$ since $D(H)\supseteq D(H_0)\cap D(N_1^2+N_2)$ by proposition~\ref{sec:quantum-theory-2}; furthermore by proposition~\ref{sec:quantum-theory-3} $U(t)\Psi\in D(H)\cap D(N_1^2+N_2)\subseteq D(H_0)$, so also $U_0^*(t)$ is differentiable on $U(t)\Phi$; finally by proposition~\ref{sec:quantum-theory-3} $U_0^*(t)U(t)\Psi\in D(H_0)\cap D(N_1^2+N_2)\subseteq D(N_1^{1/2})\cap D(N_2^{1/2})$, so we can differentiate each factor of $Z(t)$. Then for all $\Psi\in D(H_0)\cap D(N_1^2+N_2)$:
\begin{equation*}
  \begin{split}
      i\dtot{}{t}Z(t)\Psi=C^*(\widetilde{u}_\lambda(t),\widetilde{\alpha}_\lambda(t))U_0^*(t)\Bigl\{ U_0(t)\Bigl( -\psi^*(i\dot{\widetilde{u}}_\lambda)+\psi(i\dot{\bar{\widetilde{u}}}_\lambda)-\Im\braket{\widetilde{u}_\lambda}{\dot{\widetilde{u}}_\lambda}-a^*(i\dot{\widetilde{\alpha}}_\lambda)\\
+a(i\dot{\bar{\widetilde{\alpha}}}_\lambda)-\Im\braket{\widetilde{\alpha}_\lambda}{\dot{\widetilde{\alpha}}_\lambda}  \Bigr)U_0^*(t)+H_{I}-\dtot{}{t}\Lambda(t,0)\Bigr\}U(t)e^{i\Lambda(t,0)}\Psi\; .
  \end{split}
\end{equation*}
The result is then obtained by algebraic manipulation and using the fact that for all $f\in L^\infty(\mathds{R}^3,L^2(\mathds{R}^3))$, $u$, $\alpha\in L^2(\mathds{R}^3)$ and $\Phi\in D(N_1^2+N_2)$:
\begin{equation*}
  \begin{split}
    C^*(u,\alpha)\int\de x\ide{k}\Bigl(\bar{f}(x,k)a(k)+f(x,k)a^*(k)\Bigr)\psi^*(x)\psi(x) C(u,\alpha)\Phi= \int\de x \ide{k}\Bigl(\bar{f}(x,k)\\(a(k)+\alpha(k))+f(x,k)(a^*(k)+\bar{\alpha}(k))\Bigr)(\psi^*(x)+\bar{u}(x))(\psi(x)+u(x))\Phi\; .
  \end{split}
\end{equation*}

\end{proof}
\begin{lemma}
  \label{sec:limit-lambdato-0-6}
$Z^*(t)$ is strongly differentiable from $D(N_1^2+N_2)$ to $\mathscr{H}$, and if $\Lambda(t,s)$ satisfies equation~\eqref{eq:12}, then for all $\Phi\in D(N_1^2+N_2)$ we have $i\partial_t Z^*(t)\Phi=-Z^*(t)\Bigl( U_0^*(t)H_{I}U_0(t)+\widetilde{V}(t) \Bigr)\Phi$.
\end{lemma}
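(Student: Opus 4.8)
The plan is to differentiate $Z^*(t)=e^{-i\Lambda(t,0)}U^*(t)U_0(t)C(\widetilde u_\lambda(t),\widetilde\alpha_\lambda(t))$ factor by factor on $D(N_1^2+N_2)$, reading the factors in the order in which they act on $\Phi$. The rightmost nontrivial factor is the Weyl operator: by proposition~\ref{sec:quantum-theory-8} it is strongly differentiable on $D(N_1^{1/2})\cap D(N_2^{1/2})\supseteq D(N_1^2+N_2)$ and, by its domain-invariance properties, maps $D(N_1^2+N_2)$ into itself. The scalar phase is harmless. The delicate factor is $G(t):=U^*(t)U_0(t)$. Here lies the crucial asymmetry with lemma~\ref{sec:limit-lambdato-0-5}: there $U(t)$ acted first and, by propositions~\ref{sec:quantum-theory-3} and~\ref{sec:quantum-theory-2}, landed the vector in $D(H)\cap D(N_1^2+N_2)\subseteq D(H_0)$, so that $U_0^*(t)$ could be differentiated on it; here $U_0(t)$ acts first and only keeps the vector in $D(N_1^2+N_2)$, which is \emph{not} contained in $D(H)$, so neither $U_0(t)$ nor $U^*(t)$ is separately strongly differentiable on $C(\widetilde u_\lambda(t),\widetilde\alpha_\lambda(t))\Phi$. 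One must differentiate the combination $G(t)$ as a whole.

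The key step is therefore to prove that for every $\Xi\in D(N_1^2+N_2)$
\[U^*(t)U_0(t)\Xi=\Xi+i\int_0^t\ide{\tau} U^*(\tau)H_I U_0(\tau)\Xi\]
as a strong (Bochner) integral, so that $i\partial_t G(t)\Xi=-U^*(t)H_I U_0(t)\Xi$. First I would check the integrand is continuous and bounded in $\mathscr H$: by proposition~\ref{sec:quantum-theory-3} the group $U_0(\tau)$ leaves $D(N_1^2+N_2)$ invariant and commutes with $N_1,N_2$, hence $U_0(\tau)\Xi$ is continuous in the graph norm of $N_1^2+N_2$; the corollary to lemma~\ref{sec:quantum-theory-6} then gives $\norm{H_I U_0(\tau)\Xi}\leq \lambda\norm{f_0}_2\norm{(N_1^2+N_2+1)\Xi}$, uniformly in $\tau$, and $U^*(\tau)$ is unitary and strongly continuous. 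To identify the integral I would test against the dense set of $\eta\in D(H)\cap D(N_1^2+N_2)$, rewrite $\braket{\eta}{U^*(\tau)U_0(\tau)\Xi}=\braket{U_0^*(\tau)U(\tau)\eta}{\Xi}$, and differentiate: by propositions~\ref{sec:quantum-theory-3} and~\ref{sec:quantum-theory-2}, $U(\tau)\eta\in D(H)\cap D(N_1^2+N_2)\subseteq D(H_0)$, so both $U(\tau)\eta$ and $U_0^*(\tau)$ applied to it are differentiable; using $[H_0,U_0^*(\tau)]=0$ and $H=H_0+H_I$ one gets $\partial_\tau\bigl(U_0^*(\tau)U(\tau)\eta\bigr)=-iU_0^*(\tau)H_I U(\tau)\eta$, and the symmetry of $H_I$ on $D(N_1^2+N_2)$ turns $\braket{-iU_0^*(\tau)H_I U(\tau)\eta}{\Xi}$ into $\braket{\eta}{iU^*(\tau)H_I U_0(\tau)\Xi}$, which is exactly the derivative of the right-hand side tested against $\eta$. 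Equality at $\tau=0$ together with continuity gives the weak identity for all $\eta$, hence the strong one.

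With $G(t)$ differentiable I would assemble the product rule: $i\partial_tZ^*(t)\Phi$ picks up the phase derivative, the term $e^{-i\Lambda(t,0)}\bigl(i\partial_tG(t)\bigr)C(\widetilde u_\lambda(t),\widetilde\alpha_\lambda(t))\Phi=-e^{-i\Lambda(t,0)}U^*(t)H_I U_0(t) C(\widetilde u_\lambda(t),\widetilde\alpha_\lambda(t))\Phi$, and the term coming from the time derivative of the Weyl operator (proposition~\ref{sec:quantum-theory-8}, point ii). The remaining algebra is the adjoint/time-reversed transcription of lemma~\ref{sec:limit-lambdato-0-5}: the Weyl conjugation identity displayed in that proof converts the $H_I$ contribution into the normal-ordered pieces constituting $\widetilde V(t)$ together with the shift term $U_0^*(t)H_I U_0(t)$, while the choice of $\Lambda(t,s)$ in~\eqref{eq:12} is precisely what cancels the leftover c-number. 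This yields $i\partial_tZ^*(t)\Phi=-Z^*(t)\bigl(U_0^*(t)H_I U_0(t)+\widetilde V(t)\bigr)\Phi$, consistent with the formal relation $i\dot{Z}^*=-Z^*A$ dual to $i\dot Z=AZ$ of the previous lemma, with $A(t)=U_0^*(t)H_I U_0(t)+\widetilde V(t)$ symmetric.

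The main obstacle is the combined strong differentiability of $U^*(t)U_0(t)$ on $D(N_1^2+N_2)$: since this domain is not contained in $D(H)$, neither unitary group is individually differentiable there, and the argument hinges on recognizing that the generator of the product is the interaction $H_I$ alone — an operator controlled by $N_1^2+N_2$ through the corollary to lemma~\ref{sec:quantum-theory-6} — rather than $H_0$ or $H$. Everything else is the routine adjoint version of lemma~\ref{sec:limit-lambdato-0-5}.
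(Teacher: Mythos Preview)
Your argument is correct, but it takes a longer and more laborious route than the paper's. The paper does not differentiate $Z^*(t)$ factor by factor at all: it simply dualizes lemma~\ref{sec:limit-lambdato-0-5}. Writing $B(t)=U_0^*(t)H_I U_0(t)+\widetilde V(t)$, for $\Phi\in D(N_1^2+N_2)$ and $\Psi\in D(H_0)\cap D(N_1^2+N_2)$ one has $\braket{Z^*(t)\Phi}{\Psi}=\braket{\Phi}{Z(t)\Psi}$; lemma~\ref{sec:limit-lambdato-0-5} gives the strong derivative of $Z(t)\Psi$, and symmetry of $B(t)$ on $D(N_1^2+N_2)$ (together with the invariance $Z(t)\Psi\in D(N_1^2+N_2)$ already established there) yields the weak identity $i\partial_t\braket{Z^*(t)\Phi}{\Psi}=\braket{Z^*(t)B(t)\Phi}{\Psi}$. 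Integrating and observing that $\tau\mapsto Z^*(\tau)B(\tau)\Phi$ is norm-continuous, density of such $\Psi$ upgrades this to the strong statement in one line.

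Your approach isolates $G(t)=U^*(t)U_0(t)$ and proves its strong differentiability on $D(N_1^2+N_2)$ by its own weak-to-strong argument; this is correct and is essentially the same mechanism applied to a sub-factor. But afterwards you still have to redo the Weyl-conjugation algebra of lemma~\ref{sec:limit-lambdato-0-5} to collapse the three pieces (phase, $G$, Weyl) into $-Z^*(t)B(t)\Phi$. The paper's route avoids this repetition entirely: all that algebra was done once in lemma~\ref{sec:limit-lambdato-0-5}, and dualizing transports the finished result. Your version buys a self-contained treatment of $U^*(t)U_0(t)$ that makes explicit why only $H_I$ (and not $H_0$ or $H$) appears as the effective generator; the paper's buys brevity by recognizing that the adjoint statement is an immediate corollary of the already-proved one.
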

\begin{proof}
  Let $B\equiv U_0^*(t)H_{I}U_0(t)+\widetilde{V}(t)$. 
Then we can write for all $\Phi\in D(N_1^2+N_2)$:
\begin{equation*}
  i\partial_t \braket{Z^*(t)\Phi}{\Psi}=\braket{Z^*(t)B\Phi}{\Psi}\; ,
\end{equation*}
so integrating on both sides we find
\begin{equation*}
  i\Bigl( \braket{Z^*(t)\Phi}{\Psi} - \braket{Z^*(0)\Phi}{\Psi} \Bigr)=\int_0^t \ide{\tau}\braket{Z^*(\tau)B\Phi}{\Psi}\; ,
\end{equation*}
but since $Z^*(\tau)B\Phi$ is continuous in $\tau$ for all $\Phi\in D(N_1^2+N_2)$
\begin{equation*}
  i\dtot{}{t}Z^*(t)\Phi=-Z^*(t)B\Phi\; .
\end{equation*}

\end{proof}
Lemmas~\ref{sec:limit-lambdato-0-5} and~\ref{sec:limit-lambdato-0-6} prove the following proposition:
\begin{proposition}
  $\widetilde{W}(t,s)$ is strongly differentiable in $t$ from $\mathscr{D}$ to $\H$; $\W^*(t,s)$ is strongly differentiable in $t$ from $D(N_1^2+N_2)$ to $\H$. More precisely if
  \begin{equation*}
    \Lambda(t,s)=-\frac{1}{2}(2\pi)^{-3/2}\lambda^{-2}\int_s^t\de{t'}\int\ide{x}(\mathcal{F}^{-1}(\chi)*A(t'))\bar{u}(t')u(t')\; ,
  \end{equation*}
 then for all $\Psi\in \mathscr{D}$, $\Phi\in D(N_1^2+N_2)$
 \begin{equation}
   \label{conv.eq:14}  
\begin{aligned}
    i\dtot{}{t}&\widetilde{W}(t,s)\Psi=\Bigl( U_0^*(t)H_{I}U_0(t)+\widetilde{V}(t) \Bigr) \widetilde{W}(t,s)\Psi\\
    i\dtot{}{t}&\widetilde{W}^*(t,s)\Phi=-\widetilde{W}^*(t,s) \Bigl( U_0^*(t)H_{I}U_0(t)+\widetilde{V}(t) \Bigr)\Phi\; .
  \end{aligned}
 \end{equation}
\end{proposition}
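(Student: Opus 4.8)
The plan is to read off both strong derivatives from the factorization $\widetilde{W}(t,s)=Z(t)Z^*(s)$ established just above, combining Lemmas~\ref{sec:limit-lambdato-0-5} and~\ref{sec:limit-lambdato-0-6} through the product rule for strong derivatives. Write $B(t)\equiv U_0^*(t)H_{I}U_0(t)+\widetilde{V}(t)$. Lemma~\ref{sec:limit-lambdato-0-5} gives $i\partial_t Z(t)\eta=B(t)Z(t)\eta$ for $\eta\in D(H_0)\cap D(N_1^2+N_2)$, while Lemma~\ref{sec:limit-lambdato-0-6} gives $i\partial_t Z^*(t)\Phi=-Z^*(t)B(t)\Phi$ for $\Phi\in D(N_1^2+N_2)$, so the two identities of~\eqref{conv.eq:14} should follow once the composition is seen to be legitimate on the stated domains.

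For the first identity I fix $s$ and regard $\widetilde{W}(t,s)\Psi=Z(t)\bigl(Z^*(s)\Psi\bigr)$ as $Z(t)$ applied to the $t$-independent vector $Z^*(s)\Psi$. Since $Z^*(s)$ is bounded (unitary), the only thing to verify is that $Z^*(s)\Psi$ lands in the domain $D(H_0)\cap D(N_1^2+N_2)$ on which Lemma~\ref{sec:limit-lambdato-0-5} yields a derivative; granting this, $i\partial_t\widetilde{W}(t,s)\Psi=B(t)Z(t)Z^*(s)\Psi=B(t)\widetilde{W}(t,s)\Psi$. For the second identity I use $\widetilde{W}^*(t,s)=Z(s)Z^*(t)$: for $\Phi\in D(N_1^2+N_2)$, Lemma~\ref{sec:limit-lambdato-0-6} differentiates $Z^*(t)\Phi$, and composing with the fixed bounded operator $Z(s)$ gives $i\partial_t\widetilde{W}^*(t,s)\Phi=-Z(s)Z^*(t)B(t)\Phi=-\widetilde{W}^*(t,s)B(t)\Phi$, exactly the claimed formula. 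The domain $D(N_1^2+N_2)$ here matches the hypothesis of Lemma~\ref{sec:limit-lambdato-0-6} verbatim, so the second identity is the more direct of the two.

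The substantive point, and the main obstacle, is the domain bookkeeping behind $Z^*(s)\Psi\in D(H_0)\cap D(N_1^2+N_2)$ for $\Psi\in\mathscr{D}$. Unwinding $Z^*(s)=e^{-i\Lambda(s,0)}U^*(s)U_0(s)C(\widetilde{u}_\lambda(s),\widetilde{\alpha}_\lambda(s))$, the definition of $\mathscr{D}$ places $\Phi_0\equiv C(\widetilde{u}_\lambda(s),\widetilde{\alpha}_\lambda(s))\Psi$ in $D(H_0)$, while the Weyl invariances of Proposition~\ref{sec:quantum-theory-8} keep it in the relevant number-operator domains; I then push $\Phi_0$ through $U_0(s)$ and $U^*(s)$ using the invariances of Proposition~\ref{sec:quantum-theory-3} together with the two-sided inclusions $D(H_0)\cap D(N_1^2+N_2)\subseteq D(H)$ and $D(H)\cap D(N_1^2+N_2)\subseteq D(H_0)$ from Proposition~\ref{sec:quantum-theory-2}, so that $U^*(s)U_0(s)\Phi_0$ is returned to $D(H_0)\cap D(N_1^2+N_2)$. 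Making this chain airtight is delicate precisely because $D(N_1^2+N_2)$ is strictly smaller than $D(N)$, so the regularity encoded in $\mathscr{D}$ must be tracked carefully through each unitary factor; this is the step where Propositions~\ref{sec:quantum-theory-2},~\ref{sec:quantum-theory-3} and~\ref{sec:quantum-theory-8} all get used in concert.

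Finally, I would record that the explicit form of $\Lambda(t,s)$ is what forces the scalar contributions to assemble correctly: differentiating the Weyl factor $C(\widetilde{u}_\lambda,\widetilde{\alpha}_\lambda)$ in Lemmas~\ref{sec:limit-lambdato-0-5} and~\ref{sec:limit-lambdato-0-6} produces the c-number terms $-\Im\braket{\widetilde{u}_\lambda}{\dot{\widetilde{u}}_\lambda}-\Im\braket{\widetilde{\alpha}_\lambda}{\dot{\widetilde{\alpha}}_\lambda}$, and the stated choice of $\dtot{}{t}\Lambda(t,0)$ cancels these against the diagonal part of the conjugated interaction, leaving exactly $B(t)$. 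Strong differentiability in the asserted sense then follows from the pointwise derivatives together with the norm continuity of $\widetilde{V}(t)$ (Lemma~\ref{sec:evol-quant-fluct-3}) and the strong continuity of $Z(t)$ and $Z^*(t)$.
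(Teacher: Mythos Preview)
Your approach is exactly the paper's: the proposition is stated in the paper as an immediate consequence of Lemmas~\ref{sec:limit-lambdato-0-5} and~\ref{sec:limit-lambdato-0-6} via the factorization $\widetilde{W}(t,s)=Z(t)Z^*(s)$, and you spell out precisely that reduction. Your treatment of the second identity and of the role of $\Lambda(t,s)$ is correct and matches the paper's intent.

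The domain concern you flag for the first identity is genuine and not merely ``delicate'': as written, $\mathscr{D}$ only demands $\Psi\in D(N)$, and $D(N)\not\subseteq D(N_1^2)$, so the Weyl invariance of Proposition~\ref{sec:quantum-theory-8}~(iii) cannot by itself place $\Phi_0=C(\widetilde{u}_\lambda(s),\widetilde{\alpha}_\lambda(s))\Psi$ in $D(N_1^2+N_2)$, which is what Lemma~\ref{sec:limit-lambdato-0-5} requires. The paper's one-line proof glosses over this, and you correctly identify it without resolving it. The resolution is not through the propositions you cite but through strengthening the hypothesis: in the only application (proof of Theorem~\ref{sec:conv-quant-evol-1}) one works on $\mathscr{D}^\delta$ with $\delta\geq 4$, hence $\Psi\in\mathscr{H}^4=D(N^2)\subseteq D(N_1^2+N_2)$ since $N_1^2+N_2\leq (N_1+N_2)^2+1$, and then your chain of invariances goes through. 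So your argument is right in spirit; the gap is in the definition of $\mathscr{D}$ rather than in your method.
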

\subsection{Proof of Theorem~\ref{sec:conv-quant-evol-1}.}
\label{sec:convergence-wt-s}
   We prove the existence of the limit on $\mathscr{D}^{\delta}$ with $\delta\geq 4$, dense in $\H$ (see equation~\eqref{eq:19}). $\widetilde{W}$ is strongly differentiable on such domain and $\widetilde{W}[\mathscr{D}^{\delta}]\subseteq \mathscr{H}^{\delta/2}$; while $\widetilde{U}_2$ is strongly differentiable on $\mathscr{H}^{\delta/2}$, when $\delta\geq 4$. Then we can write the following inequalities for all $\Phi\in \mathscr{D}^{\delta}$, the integrals making sense as strong Riemann integrals on $\H$:
   \begin{equation*}\begin{split}
     \Bigl\lVert\Bigl(\W(t,s)-\U_2(t,s)\Bigr)\Phi\Bigr\rVert^2=-2\,\Re\, \Bigl\langle \Phi,\int_s^t\ide{\tau}\dtot{}{\tau}\U_2^*(\tau,s)\W(\tau,s)\Phi\Bigr\rangle\\
=2\,\Im\,\int_s^t\ide{\tau}\Bigl\langle H_{I}U_0(\tau)\U_2(\tau,s)\Phi,U_0(\tau)\W(\tau,s)\Phi\Bigr\rangle\leq 2\lambda \norm{f_0}_2\norm{\Phi}\ass{\int_s^t\ide{\tau}\norm{\U_2(\tau,s)\Phi}_{\mathscr{H}^4}}\\
\leq 2\lambda\norm{f_0}_2\ass{\int_s^t\ide{\tau}\exp\biggl\{2\biggl(\ln 3+ 10\sqrt{2}\ass{\int_s^\tau\ide{\tau'}\norm{v_{--}(\tau')}_2}\biggr)\biggr\}}\norm{\Phi}\norm{\Phi}_{\delta}
\end{split}\end{equation*}
that tends to zero when $\lambda\to 0$, uniformly in $t$ and $s$ on compact intervals. In the first inequality we have used the corollary of lemma~\ref{sec:quantum-theory-6}, in the second inequality proposition~\ref{sec:evol-quant-fluct-6}.

\section{Classical limit of annihilation and creation operators.}
\label{sec:class-limit-annih-3}

In this section we develop the proofs of theorem~\ref{sec:class-limit-annih-1} (section~\ref{sec:proof-theorem}) and~\ref{sec:class-limit-annih-2} (section~\ref{sec:proof-th}). In order to do that we have to find a bound for $\norm{\W(t,s)\Phi}_\delta$ that is finite when $\lambda\to 0$, this is done in proposition~\ref{sec:useful-estimates-3}; in proposition~\ref{sec:useful-estimates-4} we prove a result on the $\U_2(t,s)$-evolution of quantum fields that enables us to improve the rate of convergence of averages of creation and annihilation operators to order $\lambda^2$.

As discussed above, a bound of $\norm{\W(t,s)\Phi}_\delta$ is needed. Such bound has to converge when $\lambda\to 0$. A $\lambda$-divergent bound is quite easy to prove, using the following preliminary result on $\mathscr{F}_s(L^2(\mathds{R}^3))$ (only for this lemma $a^\#(g)$ are the annihilation and creation operators, $N$ the number operator and $C(g)$ the Weyl operator of $\mathscr{F}_s(L^2(\mathds{R}^3))$):
\begin{lemma}[on $\mathscr{F}_s(L^2(\mathds{R}^3))$]\label{sec:useful-estimates-10}
  Let $b\geq 1/2$. Then for all $m=1,2,\dotsc$ and $\Psi\in D(N^m)$ we have
  \begin{equation*}
    \begin{split}
      \norm{(N+b)^mC(g)\Psi}\leq 6^{m/2}\norm{\prod_{j=0}^{m-1}(N+b+\norm{g}_2^2+j)\Psi}\leq 6^{m/2}(1+2(m-1))^m(1+2\norm{g}_2^2)^m\\
\norm{(N+b)^m\Psi}\; .
    \end{split}
  \end{equation*}
\end{lemma}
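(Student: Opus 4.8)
The statement bounds the action of a Weyl operator $C(g)$ on the weighted space generated by $(N+b)^m$. The plan is to prove this by induction on $m$ and to reduce everything to the single fundamental algebraic identity governing how $N$ intertwines with $C(g)$.

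First I would establish the base case $m=1$ and the key commutation relation. On $\mathscr{F}_s(L^2(\R^3))$ the Weyl operator shifts the annihilation and creation operators by the classical values, i.e. $C^*(g)a^\#(h)C(g)=a^\#(h)+\braket{\cdot}{\cdot}_2$ as recorded in proposition~\ref{sec:quantum-theory-8}. Writing $N=a^*a$ in the one-mode-per-test-function sense, this gives a Weyl-conjugation formula $C^*(g)NC(g)=N+a^*(g)+a(\bar g)+\norm{g}_2^2$, equivalently $(N+b)C(g)=C(g)\bigl(N+b+a^*(g)+a(\bar g)+\norm{g}_2^2\bigr)$. The operator $a^*(g)+a(\bar g)$ is a field operator, which the standard number-operator estimates control by $\norm{a^*(g)\Psi}\le\norm{g}_2\norm{(N+1)^{1/2}\Psi}$ and likewise for $a(\bar g)$. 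Combining these, I would show that on $D(N)$ the operator $(N+b)C(g)$ is dominated in norm by $C(g)$ times a polynomial in $N$ of degree one — concretely $\norm{(N+b)C(g)\Psi}\le 6^{1/2}\norm{(N+b+\norm{g}_2^2)\Psi}$, using $b\ge 1/2$ to absorb the half-integer shifts from the field estimates into a clean numerical constant $6^{1/2}$. This is where the specific value $6$ and the hypothesis $b\ge 1/2$ are pinned down.

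Second, for the inductive step I would apply the $m=1$ conjugation formula once and then push the remaining $(N+b)^{m-1}$ through. The cleanest route is: $(N+b)^m C(g)=(N+b)^{m-1}(N+b)C(g)=(N+b)^{m-1}C(g)\,T$ where $T=N+b+a^*(g)+a(\bar g)+\norm{g}_2^2$, then invoke the inductive hypothesis on $(N+b)^{m-1}C(g)$ applied to the vector $T\Psi$. The bookkeeping produces the shifted product $\prod_{j=0}^{m-1}(N+b+\norm{g}_2^2+j)$: each successive conjugation raises the relevant shift by one, which is exactly the role of the index $j$ in the displayed product. The field terms at each stage cost a factor $6^{1/2}$ and contribute the $\norm{g}_2^2$ increments; tracking these across $m$ steps yields the $6^{m/2}$ prefactor and the first inequality.

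Finally, the second inequality is purely a comparison of the polynomial $\prod_{j=0}^{m-1}(N+b+\norm{g}_2^2+j)$ against $(N+b)^m$. Here I would bound each factor $N+b+\norm{g}_2^2+j\le (N+b)(1+2(m-1))(1+2\norm{g}_2^2)$ on the relevant spectral subspace — using $j\le m-1$, $b\ge 1/2$, and the elementary inequalities $\norm{g}_2^2+j\le (N+b)\cdot\text{const}$ valid since $N+b\ge b\ge 1/2$ — and then take the product over the $m$ factors. The main obstacle is not any single estimate but the careful ordering and constant-tracking in the inductive step: one must commute $(N+b)^{m-1}$ past $C(g)$ in the right sequence so that the shifts accumulate into the stated product form without overcounting, and verify that the field-operator estimates close consistently on $D(N^m)$ at every stage. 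The domain issue — ensuring each intermediate vector $T\Psi$ stays in the domain on which the next estimate applies — is what must be handled with care, but it follows from the invariance of $D(N^\delta)$ under $C(g)$ (proposition~\ref{sec:quantum-theory-8}) together with $T$ mapping $D(N^m)$ into $D(N^{m-1})$.
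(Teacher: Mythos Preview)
Your proposal is correct and follows essentially the same route as the paper: induction on $m$ via the conjugation identity $C^*(g)(N+b)C(g)=N+b+\norm{g}_2^2+a^*(g)+a(\bar g)$, the three-term bound $\norm{A+B+C}^2\le 3(\norm{A}^2+\norm{B}^2+\norm{C}^2)$, the quadratic inequality $2\norm{g}_2^2 N+\norm{g}_2^2\le (N+b+\norm{g}_2^2)^2$ for $b\ge 1/2$ (together giving the factor $6$), and the shift identity $h_{m-1}(N+1)(N+b+\norm{g}_2^2)=h_m(N)$ after commuting $h_{m-1}(N)$ past the field operators. The paper leaves the second inequality implicit, and your factor-by-factor bound using $N+b\ge 1/2$ is precisely what closes it.
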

\begin{proof}
  Using properties of Weyl operators we can write $C^*(g) (N+b) C(g)= N+b+\norm{g}_2^2+a(\bar{g})+a^*(g)$, and such equality holds on $D(N)$. So if $m=1$, $\Psi\in D(N^m)$:
\begin{equation*}
  \begin{split}
    \norm{(N+b)C(g)\Psi}^2\leq 3\braket{\Psi}{\Bigl((N+b+\norm{g}_2^2)^2+2\norm{g}_2^2N+\norm{g}_2^2\Bigr)\Psi}\; .
  \end{split}
\end{equation*}
Now if $b\geq 1/2$ we have $2\norm{g}_2^2N+\norm{g}_2^2\leq (N+b+\norm{g}_2^2)^2$. Suppose the result is verified for $m$, and verify it for $m+1$. Let $  h_m(N)=\prod_{j=0}^{m-1}(N+b+\norm{g}_2^2+j)$. Then
\begin{equation*}
  \begin{split}
    \norm{(N+b)^{m+1}C(g)\Psi}^2\leq 3\braket{\Psi}{h_m(N+1)^2\Bigl((N+b+\norm{g}_2^2)^2+2N\norm{g}_2^2+\norm{g}_2^2\Bigr)\Psi}\; .
  \end{split}
\end{equation*}
\end{proof}
\begin{lemma}\label{sec:useful-estimates-7}
  $C(u,\alpha)$ maps $\mathscr{H}^{2\delta}$ into itself for any positive $\delta$. In particular, let $u,\alpha\in L^2$, $\delta\geq 0$, $\Phi\in\mathscr{H}^{2\delta}$; then
  \begin{equation}\label{west.eq:12}
    \norm{C(u,\alpha)\Phi}_{2\delta}\leq c_5(\delta) \bigl(1+2\norm{u}_2^2+2\norm{\alpha}_2^2\bigr)^{\delta} \norm{\Phi}_{2\delta}\; ,
  \end{equation}
with $c_5(\delta)$ a positive constant depending on $\delta$.
\end{lemma}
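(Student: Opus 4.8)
The plan is to reduce the statement to the single–Fock–space estimate of Lemma~\ref{sec:useful-estimates-10} by exploiting the exponential property of symmetric Fock space, and then to fill the gaps between integer exponents by interpolation. First I would invoke the canonical isomorphism $\mathscr{F}_s(p)\otimes\mathscr{F}_s(n)\cong\mathscr{F}_s\bigl(L^2(\R^3)\oplus L^2(\R^3)\bigr)$, under which the total number operator $N=N_1+N_2$ is carried to the number operator of the combined Fock space, while the tensor–product Weyl operator $C(u,\alpha)=C_p(u)\otimes C_n(\alpha)$ is carried to the single Weyl operator $C(g)$ associated with $g=u\oplus\alpha$, whose norm satisfies $\norm{g}_2^2=\norm{u}_2^2+\norm{\alpha}_2^2$. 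Under this identification Lemma~\ref{sec:useful-estimates-10}, which is stated on $\mathscr{F}_s(L^2(\R^3))$ with its own $N$ and $C$, applies verbatim to our situation with $N=N_1+N_2$ playing the role of its number operator.

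Next, for an integer exponent $\delta=m$ and $\Phi\in\mathscr{H}^{2m}=D(N^m)$, I would apply Lemma~\ref{sec:useful-estimates-10} with $b=1$ and $\Psi=\Phi$. Recalling that $\norm{C(u,\alpha)\Phi}_{2m}=\norm{(N+1)^{m}C(u,\alpha)\Phi}$, the lemma gives directly
\begin{equation*}
  \norm{(N+1)^{m}C(g)\Phi}\leq 6^{m/2}\bigl(1+2(m-1)\bigr)^m\bigl(1+2\norm{u}_2^2+2\norm{\alpha}_2^2\bigr)^m\norm{(N+1)^m\Phi}\; ,
\end{equation*}
which is exactly~\eqref{west.eq:12} for integer $\delta$, with $c_5(m)=6^{m/2}\bigl(1+2(m-1)\bigr)^m$; in particular it shows $C(u,\alpha)$ maps $\mathscr{H}^{2m}$ into itself. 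The case $m=0$ is just the unitarity of $C(u,\alpha)$ (constant $1$), which anchors the lowest end of the interpolation.

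To reach an arbitrary real $\delta\geq 0$ I would interpolate between the consecutive integers $m=\lfloor\delta\rfloor$ and $m+1$, exactly in the spirit of the interpolation step already used in the proof of proposition~\ref{sec:quantum-theory-3}~\emph{\ref{qt.item:4}}. Concretely, I would consider the operator family $F(z)=(N+1)^{z}C(g)(N+1)^{-z}$ on the strip $m\leq\Re z\leq m+1$, bounded by $c_5(m)\bigl(1+2\norm{u}_2^2+2\norm{\alpha}_2^2\bigr)^m$ on the left edge and by $c_5(m+1)\bigl(1+2\norm{u}_2^2+2\norm{\alpha}_2^2\bigr)^{m+1}$ on the right edge, and apply the three–lines theorem. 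Writing $\delta=m+\theta$ with $\theta\in[0,1)$, the geometric mean of the two edge bounds produces a factor $\bigl(1+2\norm{u}_2^2+2\norm{\alpha}_2^2\bigr)^{m(1-\theta)+(m+1)\theta}=\bigl(1+2\norm{u}_2^2+2\norm{\alpha}_2^2\bigr)^{\delta}$, exactly the exponent claimed in~\eqref{west.eq:12}, while the numerical constants combine into a finite $c_5(\delta)=c_5(m)^{1-\theta}c_5(m+1)^{\theta}$ depending only on $\delta$. The main obstacle is the analytic–interpolation step itself: one must verify that $F(z)$ is well defined, analytic, and uniformly bounded on the closed strip (controlling the domains of the complex powers $(N+1)^{z}$ and the continuity of $C(g)(N+1)^{-z}$), which is the only point beyond the direct algebraic estimate; the reduction to a single Fock space and the integer bound are then immediate consequences of the preceding lemma.
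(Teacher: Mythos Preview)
Your proposal is correct and follows essentially the same route as the paper: the integer case is read off from Lemma~\ref{sec:useful-estimates-10}, and the general $\delta\geq 0$ is obtained by complex interpolation between consecutive integers, yielding the same constant $c_5(\delta)=6^{\delta/2}(1+2(d_--1))^{d_-(d_+-\delta)}(1+2(d_+-1))^{d_+(\delta-d_-)}$. The one point you make explicit that the paper leaves implicit is the passage from two Fock factors to one via the exponential law $\mathscr{F}_s(p)\otimes\mathscr{F}_s(n)\cong\mathscr{F}_s(L^2\oplus L^2)$, which identifies $N_1+N_2$ with the single number operator and $C_p(u)\otimes C_n(\alpha)$ with $C(u\oplus\alpha)$; this is the natural way to justify applying Lemma~\ref{sec:useful-estimates-10} (whose proof uses only the Weyl relations and hence is insensitive to the choice of one-particle space) to the present setting.
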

\begin{proof}
  The result is a direct consequence of lemma~\ref{sec:useful-estimates-10} when $\delta$ is an integer. By interpolation we extend it to all real $\delta$, with
  \begin{equation*}
    c_5(\delta)=6^{\delta/2}(1+2(d_- -1))^{\frac{d_-(d_+ -\delta)}{d_+-d_-}}(1+2(d_+ -1))^{\frac{d_+(\delta-\delta_-)}{d_+-d_-}}\; ,
  \end{equation*}
where $d_-=\max_{m\in\mathds{N}}\{m\leq\delta\}$, $d_+=\min_{m\in\mathds{N}}\{m\geq\delta\}$.
\end{proof}
Using lemma~\ref{sec:useful-estimates-7} and proposition~\ref{sec:quantum-theory-3} the following bound is proved.
\begin{proposition}
  \label{sec:useful-estimates-2}
Let $\Phi\in\mathscr{H}^{4\delta}$, with positive integer $\delta$ and $\lambda\leq 1$; then
\begin{equation*}
  \norm{\W(t,s)\Phi}_{2\delta}\leq L_\delta(t,s)\lambda^{-6\delta}\exp\Bigl\{\ass{\delta} c_1(\delta) \lambda\norm{f_0}_2 \ass{t-s}\Bigr\}\norm{\Phi}_{4\delta}
\end{equation*}
with $c_1(\delta)=\max(3,1+2^{\ass{\delta}})$ and
\begin{equation*}
  L_\delta(t,s)=c_5(2\delta)c_5(\delta) \bigl(1+2\norm{u(t)}_2^2+2\norm{\alpha(t)}_2^2\bigr)^{\delta}\bigl(1+2\norm{u(s)}_2^2+2\norm{\alpha(s)}_2^2\bigr)^{2\delta}\; .
\end{equation*}
\end{proposition}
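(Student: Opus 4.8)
The plan is to bound $\norm{\W(t,s)\Phi}_{2\delta}$ by unfolding the explicit factorisation
\begin{equation*}
  \W(t,s)=C^*(\widetilde{u}_\lambda(t),\widetilde{\alpha}_\lambda(t))U_0^*(t)U(t-s)U_0(s)C(\widetilde{u}_\lambda(s),\widetilde{\alpha}_\lambda(s))e^{i\Lambda(t,s)}
\end{equation*}
and estimating each factor in the appropriate $\mathscr{H}^{\bullet}$ norm, working from the outside in. The phase $e^{i\Lambda(t,s)}$ has modulus one and drops out at once. The two free evolutions $U_0^*(t)$ and $U_0(s)$ commute with $N$, hence preserve every $\mathscr{H}^{\bullet}$ norm by point~\ref{qt.item:2} of proposition~\ref{sec:quantum-theory-3} (apply it with $\xi(N_1,N_2)=(N_1+N_2+1)^{\bullet/2}$). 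The two Weyl operators are handled by lemma~\ref{sec:useful-estimates-7}, using $C^*(u,\alpha)=C(-u,-\alpha)$ so that the adjoint obeys the same bound; and the genuinely interacting propagator $U(t-s)$ is controlled by point~\ref{qt.item:4} of proposition~\ref{sec:quantum-theory-3}. Concretely I would first strip $C^*(\widetilde{u}_\lambda(t),\widetilde{\alpha}_\lambda(t))$ with lemma~\ref{sec:useful-estimates-7} at exponent $\delta$, pass $U_0^*(t)$ through for free, treat the $U(t-s)$ factor (see below), pass $U_0(s)$ through for free, and finally strip $C(\widetilde{u}_\lambda(s),\widetilde{\alpha}_\lambda(s))$ with lemma~\ref{sec:useful-estimates-7} at exponent $2\delta$.

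The step I expect to be the crux is the $U(t-s)$ factor, because the norm $\norm{\,\cdot\,}_{2\delta}=\norm{(N+1)^{\delta}\,\cdot\,}$ is expressed through $N=N_1+N_2$, whereas the only available propagation estimate for the interacting dynamics, point~\ref{qt.item:4} of proposition~\ref{sec:quantum-theory-3}, controls powers of $N_1^2+N_2$ rather than of $N$. The bridge is provided by two operator inequalities that hold sectorwise on each $\mathscr{H}_{p,n}$: since $p\leq p^2$ for every integer $p\geq 0$ one has $N+1\leq N_1^2+N_2+1$, and since $p^2+n+1\leq(p+n+1)^2$ one has $N_1^2+N_2+1\leq(N+1)^2$. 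Raising to the positive integer power $\delta$ (the operators commute and are positive) gives $(N+1)^{\delta}\leq(N_1^2+N_2+1)^{\delta}\leq(N+1)^{2\delta}$. Hence I can replace the $\mathscr{H}^{2\delta}$ norm by the $(N_1^2+N_2+1)^{\delta}$ norm, apply the propagation bound with constant $c_1(\delta)=\max(3,1+2^{\ass{\delta}})$ and exponential $\exp\{\ass{\delta}c_1(\delta)\lambda\norm{f_0}_2\ass{t-s}\}$, and then bound the result by the $\mathscr{H}^{4\delta}$ norm. This is exactly the mechanism responsible for the loss of regularity announced in the outline: controlling the $\mathscr{H}^{2\delta}$ norm of $\W(t,s)\Phi$ forces one to start from data in $\mathscr{H}^{4\delta}$.

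It remains to track the powers of $\lambda$ and collect the constants. Because the tilde denotes the interaction picture (so $U_{01},U_{02}$ preserve $L^2$ norms) and the subscript $\lambda$ divides by $\lambda$, one has $\norm{\widetilde{u}_\lambda(\cdot)}_2^2=\norm{u(\cdot)}_2^2/\lambda^2$ and likewise for $\alpha$. Since $\lambda\leq 1$, whence $\lambda^2\leq 1$,
\begin{equation*}
  1+2\norm{\widetilde{u}_\lambda(\cdot)}_2^2+2\norm{\widetilde{\alpha}_\lambda(\cdot)}_2^2=\lambda^{-2}\bigl(\lambda^2+2\norm{u(\cdot)}_2^2+2\norm{\alpha(\cdot)}_2^2\bigr)\leq\lambda^{-2}\bigl(1+2\norm{u(\cdot)}_2^2+2\norm{\alpha(\cdot)}_2^2\bigr)\; .
\end{equation*}
Thus stripping $C^*(\widetilde{u}_\lambda(t),\widetilde{\alpha}_\lambda(t))$ contributes $c_5(\delta)\lambda^{-2\delta}(1+2\norm{u(t)}_2^2+2\norm{\alpha(t)}_2^2)^{\delta}$, while stripping $C(\widetilde{u}_\lambda(s),\widetilde{\alpha}_\lambda(s))$ at exponent $2\delta$ contributes $c_5(2\delta)\lambda^{-4\delta}(1+2\norm{u(s)}_2^2+2\norm{\alpha(s)}_2^2)^{2\delta}$. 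Multiplying these gives precisely the prefactor $L_\delta(t,s)\lambda^{-6\delta}$, with the total $\lambda$-power $-2\delta-4\delta=-6\delta$ and $L_\delta(t,s)=c_5(2\delta)c_5(\delta)(1+2\norm{u(t)}_2^2+2\norm{\alpha(t)}_2^2)^{\delta}(1+2\norm{u(s)}_2^2+2\norm{\alpha(s)}_2^2)^{2\delta}$; together with the propagation exponential from proposition~\ref{sec:quantum-theory-3}.\ref{qt.item:4} this yields the claimed estimate.
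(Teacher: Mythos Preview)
Your proof is correct and follows exactly the approach the paper indicates (``Using lemma~\ref{sec:useful-estimates-7} and proposition~\ref{sec:quantum-theory-3}''): you peel off the two Weyl factors with lemma~\ref{sec:useful-estimates-7}, pass the free evolutions through for free, and control $U(t-s)$ via proposition~\ref{sec:quantum-theory-3}.\ref{qt.item:4} after bridging from $(N+1)^\delta$ to $(N_1^2+N_2+1)^\delta$ and back to $(N+1)^{2\delta}$ with the elementary sectorwise inequalities. The bookkeeping of the constants and of the $\lambda$-powers is accurate.
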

To obtain a $\lambda$-convergent bound we have to restrict to a narrower space than $\mathscr{H}^{4\delta}$. We also have to introduce a regularized evolution. From now on we will use the notation $\widetilde{H}_I(t)=U_0^*(t)H_IU_0(t)$. We also define the orthogonal projectors $(\mathbb{N}_1)_{\leq\nu}$ and $(\mathbb{N}_2)_{\leq\nu}$ as following:
\begin{gather*}
  ((\mathbb{N}_1)_{\leq\nu}\Phi)_{p,n}=\left\{
    \begin{aligned}
      \Phi_{p,n}&\text{ if $p\leq \nu$}\\
      0&\text{ if $p>\nu$}
    \end{aligned}
\right .\; ;\; 
  ((\mathbb{N}_2)_{\leq\nu}\Phi)_{p,n}=\left\{
    \begin{aligned}
      \Phi_{p,n}&\text{ if $n\leq \nu$}\\
      0&\text{ if $n>\nu$}
    \end{aligned}
\right .\; .
\end{gather*}
We define $\mathbb{R}_\nu=(\mathbb{N}_1)_{\leq\nu}(\mathbb{N}_2)_{\leq\nu}$, so $\mathbb{R}_\nu\Bigl(\widetilde{H}_I(t)+\V(t)\Bigr)\mathbb{R}_\nu$ is bounded in $\mathscr{H}$. Then by means of a Dyson series we obtain:
\begin{equation*}
  \begin{split}
      \W_\nu(t,s)=\sum_{m=0}^\infty(-i)^m\int_s^t\de t_1\int_s^{t_1}\de t_2\dotsm \int_s^{t_{m-1}}\de t_m \mathbb{R}_\nu\Bigl(\widetilde{H}_I(t_1)+\V(t_1)\Bigr)\mathbb{R}_\nu \dotsm \mathbb{R}_\nu\Bigl(\widetilde{H}_I(t_m)\\+\V(t_m)\Bigr)\mathbb{R}_\nu\; .
  \end{split}
\end{equation*}
\begin{lemma}\label{sec:useful-estimates-1}
\begin{enumerate}[i.]
\item $\W_{\nu}(s,s)=1$, $\W_{\nu}(t,r)\W_{\nu}(r,s)=\W_{\nu}(t,s)$ for all $r,s,t\in\mathds{R}$.
\item $\W^*_\nu(t,s)=\W_\nu(s,t)$, and $\W_{\nu}(t,s)$ are unitary in $\mathscr{H}$.
\item $\W_{\nu}(t,s)$ is strongly differentiable on $\mathscr{H}$ and
\begin{align*}i\dtot{}{t}&\W_{\nu}(t,s)=\mathbb{R}_\nu\Bigl(\widetilde{H}_I(t)+\V(t)\Bigr)\mathbb{R}_\nu\W_{\nu}(t,s)\\
i\dtot{}{s}&\W_{\nu}(t,s)=-\W_{\nu}(t,s) \mathbb{R}_\nu\Bigl(\widetilde{H}_I(s)+\V(s)\Bigr)\mathbb{R}_\nu\; .\end{align*}
\item Let $\Phi\in\mathscr{H}^{2\delta}$, $\delta\in\mathds{R}$, $\nu\geq 1$. Then
  \begin{equation*}\begin{split}
  \norm{\W_\nu(t,s)\Phi}_{2\delta}\leq \exp\biggl\{\sqrt{\nu}\lambda\ass{\delta}{c}_3(\delta)\norm{f_0}_2\ass{t-s}+\ass{\delta}\biggl(\ln 3+{c}_4(\delta)\ass{\int_s^t\ide{\tau}\norm{v_{--}(\tau)}_2}\biggr)\biggr\}\\\norm{\Phi}_{2\delta}\; ,
 \end{split}\end{equation*}
with ${c}_3(\delta)=\max(5/2,2^{\ass{\delta}}+1/2)$, ${c}_4(\delta)=\max(4,3^{\ass{\delta}}+1)$.
\end{enumerate}
\end{lemma}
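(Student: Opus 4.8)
The plan is to read points i.--iii.\ off the single fact that the generator $B_\nu(t)\equiv\mathbb{R}_\nu\bigl(\widetilde{H}_I(t)+\V(t)\bigr)\mathbb{R}_\nu$ is \emph{bounded} and self-adjoint on $\H$, norm-continuous in $t$: self-adjointness of $\widetilde{H}_I(t)$ comes from proposition~\ref{sec:quantum-theory-2}, that of $\V(t)$ from lemma~\ref{sec:evol-quant-fluct-1}, and the two projectors make the product bounded, as observed just above. Then the defining series is the norm-convergent time-ordered exponential of $B_\nu$: its $m$-th term is bounded in operator norm by $\bigl(\ass{t-s}\sup_\tau\nnorm{B_\nu(\tau)}\bigr)^m/m!$, so it converges in $\mathscr{B}(\H,\H)$ uniformly on compacts and may be differentiated term by term, which gives iii.\ and in particular $i\dtot{}{t}\W_\nu=B_\nu(t)\W_\nu$. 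From this and $\W_\nu(s,s)=1$ the cocycle identity of i.\ follows by uniqueness of the norm-differentiable solution of the initial value problem, $B_\nu^*=B_\nu$ gives $\dtot{}{t}\bigl(\W_\nu^*\W_\nu\bigr)=0$ and hence the unitarity of ii., and the relation $\W_\nu^*(t,s)=\W_\nu(s,t)$ is read directly from the series.

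The substance is the estimate iv., for which I would run the Gronwall argument of the last point of proposition~\ref{sec:quantum-theory-3} and of lemma~\ref{sec:evol-quant-fluct-5}, now carried out with the full-space weight $h(N)=(N+1)^{\delta}$; the case $\delta<0$ is recovered from $\W_\nu^*(t,s)=\W_\nu(s,t)$, unitarity and the $\delta>0$ bound, and $\delta=0$ is trivial. For $\delta>0$ and $\Phi\in\H^{2\delta}$ set $M(t)=\tfrac12\norm{(N+1)^{\delta}\W_\nu(t,s)\Phi}^2$; since the two factors $\mathbb{R}_\nu$ in $B_\nu$ confine the nontrivial part of $\W_\nu(t,s)\Phi$ to the spectral subspace $N_1,N_2\le\nu$, where $(N+1)^{\delta}$ is bounded, $M(t)$ is finite and, by iii., differentiable, with
\begin{equation*}
  \dtot{}{t}M(t)=\Im\bigl\langle(N+1)^{\delta}\Psi,(N+1)^{\delta}B_\nu(t)\Psi\bigr\rangle\; ,\qquad\Psi\equiv\W_\nu(t,s)\Phi\; .
\end{equation*}
Because $h(N)$ and $\mathbb{R}_\nu$ are functions of the commuting pair $(N_1,N_2)$ they commute; the number-preserving summand $\V_0+\V_{-+}+\V_{+-}$ is self-adjoint and commutes with $h(N)$, so its expectation is real and drops out of the imaginary part. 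Only the $N$-shifting pieces survive, through their commutators with $h(N)$: $\widetilde{H}_I(t)=\widetilde{A}(t)+\widetilde{A}^*(t)$, shifting $N$ by $\pm1$, and $\V_{--}+\V_{++}$, shifting $N$ by $\pm2$.

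For the $\V$ piece I would reproduce the estimate behind lemma~\ref{sec:evol-quant-fluct-5}, weighting with $(N+1)^{\delta}$ in place of $(N+1)^{\delta/2}$, i.e.\ replacing $\delta$ by $2\delta$ there (consistently, $c_2(2\delta)=c_4(\delta)$): the commutator of $h(N)$ with $\V_{--}+\V_{++}$ supplies the multiplicative constant $3^{\ass{\delta}}$ together with the time-integrated factor governed by $\ass{\int_s^t\ide{\tau}\norm{v_{--}(\tau)}_2}$, $\V_{--}$ being dominant. The genuinely new input is $\widetilde{H}_I$. With $\widetilde{A}$ lowering $N$ by one, $[h(N),\widetilde{A}]=\widetilde{A}\bigl(h(N-1)-h(N)\bigr)$ replaces $\widetilde{A}$ by $\widetilde{A}$ times $h(N-1)h(N)^{-1}-1$, a factor of size $\lesssim\ass{\delta}/(N+1)$ (the creation part $\widetilde{A}^*$ is analogous), while the corollary of lemma~\ref{sec:quantum-theory-6} bounds $\widetilde{A}$ in terms of $\lambda\norm{f_0}_2\,N_1(N_2+1)^{1/2}$. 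On $\ran\mathbb{R}_\nu$ the resulting effective weight obeys
\begin{equation*}
  \frac{N_1\sqrt{N_2}}{N+1}\le\frac{N_1\sqrt{N_2}}{2\sqrt{N_1N_2}}=\tfrac12\sqrt{N_1}\le\tfrac12\sqrt{\nu}
\end{equation*}
by the arithmetic--geometric inequality $N+1\ge 2\sqrt{N_1N_2}$; keeping track of the $\delta$-dependent constant turns this contribution into $\sqrt{\nu}\,\ass{\delta}c_3(\delta)\lambda\norm{f_0}_2$. Collecting the two pieces gives the differential inequality $\ass{\dtot{}{t}M(t)}\le 2\bigl(\sqrt{\nu}\,\ass{\delta}c_3(\delta)\lambda\norm{f_0}_2+\ass{\delta}c_4(\delta)\norm{v_{--}(t)}_2\bigr)M(t)$, and Gronwall's lemma together with the constant $3^{\ass{\delta}}$ yields iv. I expect this last $\widetilde{H}_I$ estimate to be the main obstacle: one must \emph{not} bound $B_\nu$ in operator norm, which only gives $\nu^{3/2}$, but rather keep the commutator with $h(N)$ paired with $N_1\sqrt{N_2}$, so that the arithmetic--geometric inequality on $\ran\mathbb{R}_\nu$ extracts the sharp power $\sqrt{\nu}$.
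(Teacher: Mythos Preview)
Your proposal is correct and follows essentially the same Gronwall--commutator argument as the paper: points i.--iii.\ are indeed immediate from boundedness of the cut-off generator, and for iv.\ the paper likewise differentiates a weighted norm, splits off the number-preserving part, bounds the $\V_{--}+\V_{++}$ contribution via lemma~\ref{sec:evol-quant-fluct-5}, and extracts the $\sqrt{\nu}$ from the $\widetilde{H}_I$ term by pairing the $N_1\sqrt{N_2}$ estimate with the $1/N$ coming from the commutator with $h(N)$. The only cosmetic difference is that the paper runs the argument with the \emph{bounded} weight $h(N)=(N+3)^{-\delta}$ (so no finiteness of $M(t)$ needs to be argued) and then passes to positive exponents by interpolation and duality, whereas you work with $(N+1)^{\delta}$ directly and justify finiteness via $\ran(\W_\nu-1)\subseteq\ran\mathbb{R}_\nu$; your explicit AM--GM step makes transparent what the paper's terse computation of $K$ is doing.
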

\begin{proof}
Only the last point is not a direct consequence of the definition. For all $\Phi\in \mathscr{H}$,
\begin{equation*}
  M(t,s)=\frac{1}{2}\norm{(N+3)^{-\delta}\W_\nu(t,s)\Phi}^2\; ,
\end{equation*} 
with $\delta\geq 1$, differentiable in $t$ and $s$. Set $(N+3)^{-\delta}\equiv h(N)$, then
\begin{align*}
  \dtot{}{t}M(t,s)=&\Im\braket{h(N)\W_\nu(t,s)\Phi}{\mathbb{R}_\nu \widetilde{H}^-_I(t) \mathbb{R}_\nu\Bigl(h(N-1)h(N)^{-1}-1\Bigr)h(N)\W_\nu(t,s)\Phi}\\
+&\Im\braket{\mathbb{R}_\nu \widetilde{H}^-_I(t) \mathbb{R}_\nu\Bigl(h(N)h(N-1)^{-1}-1\Bigr)h(N) \W_\nu(t,s)\Phi}{h(N)\W_\nu(t,s)\Phi}\\
+&\Im\braket{h(N) \W_\nu(t,s)\Phi}{\V_{--;\nu}(t)\Bigl(h(N-2)h(N)^{-1}-1\Bigr)h(N) \W_\nu(t,s)\Phi}\\
+&\Im\braket{\V_{--;\nu}(t)\Bigl(h(N)h(N-2)^{-1}-1\Bigr)h(N) \W_\nu(t,s) \Phi}{h(N) \W_\nu(t,s)\Phi}
\end{align*}
The last two terms of the right hand side are bounded by lemma~\ref{sec:evol-quant-fluct-5}.
\begin{equation*}
  \begin{split}
    \ass{\dtot{}{t}M(t,s)}\leq 2\lambda\norm{f_0}_2\Bigl[\nnorm{\mathbb{R}_\nu N_1\sqrt{N_2}\biggl(h(N)h(N-1)^{-1}-1\biggr)}\\
+\nnorm{\mathbb{R}_\nu N_1\sqrt{N_2}\biggl(h(N-1)h(N)^{-1}-1\biggr)}\Bigr]M(t,s)+\sqrt{2}\norm{v_{--}}_2\delta(3^\delta+1)M(t,s)\; .
  \end{split}
\end{equation*}
Furthermore
\begin{equation*}
  \begin{split}
   K\equiv \Biggl[\dotsc\Biggr]\leq \delta\nu\sqrt{\nu}\biggl(\frac{(2\nu+3)^\delta}{(2\nu+2)^\delta}+\frac{(2\nu+3)^{\delta-1}}{(2\nu+3)^\delta}\biggr)\leq \delta\sqrt{\nu}\Bigl( 2^{\delta} +\frac{1}{2}\Bigr)\; .
  \end{split}
\end{equation*}
Applying now Gronwall's Lemma
\begin{equation*}\begin{split}
  \norm{(N+3)^{-\delta}\W_\nu(t,s)\Phi}\leq \exp\biggl\{\sqrt{\nu}\delta\bigl( 2^{\delta}+\frac{1}{2}\bigr)\lambda\norm{f_0}_2(t-s)+\frac{1}{\sqrt{2}}\delta(3^\delta+1)\\\ass{\int_s^t\ide{\tau}\norm{v_{--}(\tau)}_2}\biggr\}\norm{(N+3)^{-\delta}\Phi}^2\; ,
\end{split}\end{equation*}
for all $\delta\geq 1$. Interpolating between $\delta=0$ and $\delta=1$ we obtain the result for all $\delta\geq 0$; by duality we extend the result to all $\delta\in\mathds{R}$.
\end{proof}
\begin{proposition}
  \label{sec:useful-estimates-3}
For all positive $\delta$ exists a $\delta'>\delta$ such that $\W(t,s)$ maps $\mathscr{H}^{\delta'}$ into $\mathscr{H}^{\delta}$. In particular let $\Phi\in\mathscr{H}^{\delta'}$. Then for all $\lambda\leq 1$, $\delta'= \max(4,6\delta+3)$:
\begin{equation*}
  \norm{\W(t,s)\Phi}^2_\delta\leq \Bigl(\lambda\norm{f_0}_2\ass{t-s}+\mathcal{L}_1(t,s)\Bigr)e^{\lambda\mathcal{C}_1\ass{t-s}+\mathcal{L}_2(t,s)}\norm{\Phi}^2_{\delta'}\; ,
\end{equation*}
where $\mathcal{C}_1$ is a positive constant depending on $\delta$; $\mathcal{L}_j(t,s)$, $j=1,2$, positive functions depending also on $\delta$.
\end{proposition}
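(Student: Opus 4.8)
The plan is to estimate $\norm{\W(t,s)\Phi}_\delta$ by interpolating between the two estimates already available: the crude but low-regularity bound of proposition~\ref{sec:useful-estimates-2}, which controls $\norm{\W(t,s)\Phi}_{2\delta}$ by $\norm{\Phi}_{4\delta}$ at the price of a divergent factor $\lambda^{-6\delta}$, and the regularized bound of lemma~\ref{sec:useful-estimates-1}, whose only source of divergence as $\lambda\to 0$ is the factor $\sqrt{\nu}\,\lambda$ appearing in the exponent. The decisive idea is to tie the cutoff to the coupling by choosing $\nu\sim\lambda^{-2}$: with this choice $\sqrt{\nu}\,\lambda=O(1)$, so lemma~\ref{sec:useful-estimates-1} yields a bound on $\W_\nu(t,s)$ (and on its higher $N$-norms) that is uniform in $\lambda$, of exactly the exponential shape appearing in the statement. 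It then remains to show that the error committed in replacing $\W$ by $\W_\nu$ is controlled, uniformly in $\lambda$, at the cost of the stronger norm $\norm{\Phi}_{\delta'}$.

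First I would write the difference $\W(t,s)-\W_\nu(t,s)$ by Duhamel's formula as the strong Riemann integral of $\W_\nu(t,\tau)\bigl(\mathbb{R}_\nu(\widetilde{H}_I+\V)(\tau)\mathbb{R}_\nu-(\widetilde{H}_I+\V)(\tau)\bigr)\W(\tau,s)$ over $\tau\in[s,t]$, and decompose the generator difference into terms each carrying a factor $(1-\mathbb{R}_\nu)$ on one side. The purpose of regularizing in both $N_1$ and $N_2$ is precisely that $1-\mathbb{R}_\nu$ projects onto the sector $N>\nu$, so that $\norm{(1-\mathbb{R}_\nu)(N+1)^{-b}}\leq\nu^{-b}$ for every $b\geq 0$; this converts $b$ units of extra $N$-regularity into a gain $\nu^{-b}\sim\lambda^{2b}$. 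Bounding the leftover creation and annihilation operators by the corollary to lemma~\ref{sec:quantum-theory-6} (for the $\widetilde{H}_I$ contribution, which carries an explicit $\lambda$) and by lemma~\ref{sec:evol-quant-fluct-3} (for the $\V$ contribution), and controlling the high $N$-norms of the reference vectors by lemma~\ref{sec:useful-estimates-1} for the factor $\W_\nu$ and by proposition~\ref{sec:useful-estimates-2} for the factor $\W$, one checks that the several competing powers of $\lambda$ can be balanced provided one starts from $\Phi\in\mathscr{H}^{\delta'}$ with $\delta'=\max(4,6\delta+3)$; this is why the estimate survives only on a much smaller space than the naively expected $\mathscr{H}^{2\delta}$.

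Finally I would assemble these estimates into a Gronwall inequality for $\norm{\W(t,s)\Phi}_\delta^2$. Here the two parts of the generator play distinct roles: the quadratic fluctuation term $\V$ is of order one in $N$ and $\lambda$-independent, and produces the $\lambda$-independent factors $\mathcal{L}_1(t,s)$ and $\mathcal{L}_2(t,s)$; the cubic interaction $\widetilde{H}_I=U_0^\ast H_I U_0$ carries the explicit $\lambda$ and produces both the prefactor $\lambda\norm{f_0}_2\ass{t-s}$ and the term $\lambda\mathcal{C}_1\ass{t-s}$ in the exponent, matching the claimed form. The qualitative mapping $\W(t,s):\mathscr{H}^{\delta'}\to\mathscr{H}^\delta$ then follows at once from the finiteness of the right-hand side.

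The hard part is the unboundedness of $\widetilde{H}_I$: as the corollary to lemma~\ref{sec:quantum-theory-6} shows, it is controlled only by $N_1(N_2+1)^{1/2}$, i.e.\ it is of order $N_1^2$ relative to $N$, so a naive differential inequality for $\norm{\W(t,s)\Phi}_\delta$ does not close, and its regularized version closes only after paying a factor $\sqrt{\nu}$ (exactly the obstruction already visible in lemma~\ref{sec:useful-estimates-1}). Reconciling the need to keep this $\sqrt{\nu}$ factor finite with the need to make the cutoff error small is what forces the simultaneous choice $\nu\sim\lambda^{-2}$ and the passage to the higher-regularity space $\mathscr{H}^{\delta'}$; making the accounting of $\lambda$-powers close on $\mathscr{H}^{\max(4,6\delta+3)}$ is the technical heart of the argument.
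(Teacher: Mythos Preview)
Your strategy is the paper's strategy: compare $\W$ with the regularized $\W_\nu$, tie the cutoff to the coupling via $\nu=1+\lambda^{-2}$ so that the $\sqrt{\nu}\,\lambda$ in lemma~\ref{sec:useful-estimates-1} stays $O(1)$, and use the $(1-\mathbb{R}_\nu)$ gain to kill the $\lambda^{-6\delta}$ coming from proposition~\ref{sec:useful-estimates-2}. The ingredients you list are exactly the ones the paper uses.

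There are two implementation points where the paper differs from your sketch, and they matter for making the $\lambda$-bookkeeping close with $\delta'=\max(4,6\delta+3)$. First, the paper does \emph{not} use a Gronwall inequality nor the triangle inequality on $\norm{\W\Phi}_\delta$; it expands the scalar product directly as
\[
\braket{\W\Phi}{(N+1)^\delta\W\Phi}=\braket{\W_\nu\Phi}{(N+1)^\delta\W_\nu\Phi}+\braket{\W\Phi}{(N+1)^\delta(\W-\W_\nu)\Phi}+\braket{(\W-\W_\nu)\Phi}{(N+1)^\delta\W_\nu\Phi}.
\]
The point of this splitting is that in the second term the whole weight $(N+1)^\delta$ is thrown onto the $\W\Phi$ side (bounded once and for all by proposition~\ref{sec:useful-estimates-2}), while $(\W-\W_\nu)\Phi$ is measured in the \emph{unweighted} norm of $\mathscr{H}$. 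Second, the paper writes Duhamel in the opposite order from yours, namely $\W-\W_\nu=\int\W(t,\tau)\bigl(\cdots\bigr)\W_\nu(\tau,s)$, with $\W_\nu$ next to $\Phi$. In the plain $\mathscr{H}$-norm the outer $\W(t,\tau)$ is unitary and disappears, so all the high $N$-regularity forced by $(1-\mathbb{R}_\nu)\le (N+1)^{2j}/(\nu-1)^{2j}$ lands on $\W_\nu(\tau,s)\Phi$, which lemma~\ref{sec:useful-estimates-1} controls $\lambda$-uniformly in $\mathscr{H}^{6\delta+3}$. With your ordering the high norms would instead land on $\W(\tau,s)\Phi$, and bounding that through proposition~\ref{sec:useful-estimates-2} costs a further negative power of $\lambda$ that grows with the regularity index, so the powers do not balance; this is why the paper's ordering is the right one. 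The third term in the decomposition is easier (both factors handled by lemma~\ref{sec:useful-estimates-1}) and is the origin of the lower threshold $\delta'\ge 4$.
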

\begin{proof}
  Let $\Phi\in \mathscr{H}^{\delta'}$, with $\delta'\geq 4$. Due to the properties of $\W(t,s)$ and $\W_\nu(t,s)$ all the steps of the following proof are well defined, and the integrals make sense as strong Riemann integrals on $\mathscr{H}$. We evaluate separately each term of the right hand side of the identity
  \begin{equation*}
    \begin{split}
      \braket{\W(t,s)\Phi}{(N+1)^\delta\W(t,s)\Phi}=\braket{\W_\nu(t,s)\Phi}{(N+1)^\delta\W_\nu(t,s)\Phi}+\braket{\W(t,s)\Phi}{(N+1)^\delta\\\Bigl(\W(t,s) -\W_\nu(t,s)\Bigr)\Phi}+\braket{\Bigl(\W(t,s) -\W_\nu(t,s)\Bigr)\Phi}{(N+1)^\delta \W_\nu(t,s)\Phi}\; .
    \end{split}
  \end{equation*}
The estimate for the first one is provided by lemma~\ref{sec:useful-estimates-1}. Consider now the second term:
\begin{equation*}
  \begin{split}
    \Bigl\lvert\braket{\W(t,s)\Phi}{(N+1)^\delta\Bigl(\W(t,s) -\W_\nu(t,s)\Bigr)\Phi}\Bigr\rvert\leq L_\delta(t,s)\lambda^{-6\delta}\exp\Bigl\{\ass{\delta} c_1(\delta) \lambda\norm{f_0}_2 \ass{t-s}\Bigr\}\\\biggl\lvert\int_s^t\ide{\tau}\norm{\Phi}_{4\delta}\norm{\Bigl(\mathbb{R}_\nu\bigl(\widetilde{H}_{I}(\tau)+\V(\tau)\bigr)\mathbb{R}_\nu-\widetilde{H}_{I}(\tau)-\V(\tau) \Bigr)\W_\nu(\tau,s)\Phi}\biggr\rvert;
  \end{split}
\end{equation*}
by proposition~\ref{sec:useful-estimates-2}. To evaluate the last norm use the fact that for every $j$
\begin{equation*}
  (1-\mathbb{R}_\nu)\leq\frac{(N+1)^{2j}}{\sqrt{\nu}^{4j}}\leq\frac{(N+1)^{2j}}{\sqrt{\nu-1}^{4j}}
\end{equation*}
and then lemma~\ref{sec:useful-estimates-1} to obtain:
\begin{equation*}\begin{split}
      \Bigl\lvert\braket{\W(t,s)\Phi}{(N+1)^\delta\Bigl(\W(t,s) -\W_\nu(t,s)\Bigr)\Phi}\Bigr\rvert\leq L_\delta(t,s) \exp\Bigl\{\ass{\delta} c_1(\delta) \lambda\norm{f_0}_2 \ass{t-s}\Bigr\}\\\biggl\lvert\int_s^t\ide{\tau}\biggl(\norm{f_0}_2\Bigl(\frac{1}{\lambda\sqrt{\nu-1}}\Bigr)^{6\delta-1}
+\bigl(8\norm{v_{--}(\tau)}_2+2\norm{\mathcal{F}^{-1}(\chi)*A(\tau)}_\infty \bigr)\Bigl(\frac{1}{\lambda\sqrt{\nu-1}}\Bigr)^{6\delta} \biggr)\\
\exp\biggl\{\sqrt{\nu}\lambda\ass{3\delta+3/2}{c}_3(3\delta+3/2)\norm{f_0}_2\ass{\tau-s}+\ass{3\delta+3/2}\biggl(\ln 3\\+c_4(3\delta+3/2)\ass{\int_s^\tau\ide{\tau'}\norm{v_{--}(\tau')}_2}\biggr)\biggr\}\biggr\rvert
\norm{\Phi}_{4\delta}\norm{\Phi}_{6\delta+3}\; .
\end{split}\end{equation*}
The last term is easier to estimate, use again lemma~\ref{sec:useful-estimates-1} and the standard estimates for $H_{I}$ and $\V$:
\begin{equation*}\begin{split}
\ass{\braket{\Bigl(\W(t,s) -\W_\nu(t,s)\Bigr)\Phi}{(N+1)^\delta \W_\nu(t,s)\Phi}}\leq \exp\biggl\{\sqrt{\nu}\lambda\ass{\delta}{c}_3(\delta)\norm{f_0}_2\ass{t-s}\\+\ass{\delta}\biggl(\ln 3+{c}_4(\delta)\ass{\int_s^t\ide{\tau}\norm{v_{--}(\tau)}_2}\biggr)\biggr\}\biggl\lvert\int_s^t\ide{\tau}\Bigl(\lambda\norm{f_0}_2+4\norm{v_{--}(\tau)}_2+\norm{\mathcal{F}^{-1}(\chi)*A(\tau)}_\infty\Bigr)\\
\exp\biggl\{\sqrt{\nu}\lambda 2{c}_3(2)\norm{f_0}_2\ass{\tau-s}+2\biggl(\ln 3+{c}_4(2)\ass{\int_s^\tau\ide{\tau'}\norm{v_{--}(\tau')}_2}\biggr)\biggr\}\biggr\rvert\norm{\Phi}_{2\delta}\norm{\Phi}_{\mathscr{H}^4}\; .
\end{split}\end{equation*}
Fix $\nu=1+1/\lambda^2$ to complete the proof. We then obtain the following constants:
\begin{equation*}
  \begin{split}
      \mathcal{L}_1(t,s)=1+c_6(\delta) \bigl(1+2\norm{u(t)}_2^2+2\norm{\alpha(t)}_2^2\bigr)^{\delta}\bigl(1+2\norm{u(s)}_2^2+2\norm{\alpha(s)}_2^2\bigr)^{2\delta}\\\biggl(\norm{f_0}_2\ass{t-s}+\ass{\int_s^t\ide{\tau}(\norm{v_{--}(\tau)}_2+\norm{\mathcal{F}^{-1}(\chi)*A(\tau)}_\infty)}\biggr)
  \end{split}
\end{equation*}
\begin{equation*}
  \mathcal{L}_2(t,s)=c_7(\delta)\biggl(\norm{f_0}_2\ass{t-s}+\ass{\int_s^t\ide{\tau}\norm{v_{--}(\tau)}_2}+1\biggr)
\end{equation*}
\begin{equation*}
  \mathcal{C}_1=c_8(\delta)\norm{f_0}_2\; ,
\end{equation*}
with $c_6(\delta)=2^{\delta+3}c_5(2\delta)c_5(\delta)$, $c_7(\delta)=\ass{3\delta+3/2}c_4(3\delta+3/2)+20$ and $c_8(\delta)=\ass{6\delta+3}c_1(3\delta+3/2)+9$.
\end{proof}
\begin{corollary*}
  $W(t,s)$ maps $\mathscr{H}^{\delta^*}$ into $\mathscr{H}^{\delta}$. The same estimate as for $\W(t,s)$ holds:
  \begin{equation*}
  \norm{W(t,s)\Phi}^2_\delta\leq \Bigl(\lambda\norm{f_0}_2\ass{t-s}+\mathcal{L}_1(t,s)\Bigr)e^{\lambda\mathcal{C}_1\ass{t-s}+\mathcal{L}_2(t,s)}\norm{\Phi}^2_{\delta^*}\; .
  \end{equation*}
\end{corollary*}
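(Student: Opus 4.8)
The plan is to transfer the estimate for $\W(t,s)$ obtained in proposition~\ref{sec:useful-estimates-3} directly to $W(t,s)$, exploiting the fact that the free evolution $U_0(t)$ is simultaneously unitary on every scale space $\mathscr{H}^\delta$. Recall from equation~\eqref{eq:14} that $\W(t,s)=U_0^*(t)W(t,s)U_0(s)$, which I rewrite as $W(t,s)=U_0(t)\W(t,s)U_0^*(s)$.

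First I would record the norm-invariance of $U_0$ on the spaces $\mathscr{H}^\delta$. Since $H_0$ commutes with both $N_1$ and $N_2$, it commutes with $N=N_1+N_2$, hence with $(N+1)^{\delta/2}$ for every real $\delta$. Therefore point~\ref{qt.item:2} of proposition~\ref{sec:quantum-theory-3}, applied with the $F$-measurable function $\xi(N_1,N_2)=(N+1)^{\delta/2}$, yields $\norm{U_0(t)\Phi}_\delta=\norm{\Phi}_\delta$ for all $t\in\mathds{R}$, all real $\delta$ and all $\Phi\in\mathscr{H}^\delta$; the same invariance holds for $U_0^*(s)$.

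The estimate then follows by a short chain. Applying the invariance at the exponent $\delta$ and substituting the identity above,
\[
\norm{W(t,s)\Phi}^2_\delta=\norm{U_0(t)\W(t,s)U_0^*(s)\Phi}^2_\delta=\norm{\W(t,s)\bigl(U_0^*(s)\Phi\bigr)}^2_\delta\; ,
\]
and it remains to bound the right-hand side by proposition~\ref{sec:useful-estimates-3} applied to the vector $U_0^*(s)\Phi\in\mathscr{H}^{\delta^*}$, with the same $\delta^*=\max(4,6\delta+3)$ as there. This produces the factor $\bigl(\lambda\norm{f_0}_2\ass{t-s}+\mathcal{L}_1(t,s)\bigr)e^{\lambda\mathcal{C}_1\ass{t-s}+\mathcal{L}_2(t,s)}$ multiplying $\norm{U_0^*(s)\Phi}^2_{\delta^*}$, and a final use of the invariance at the exponent $\delta^*$ replaces $\norm{U_0^*(s)\Phi}_{\delta^*}$ by $\norm{\Phi}_{\delta^*}$, giving precisely the claimed inequality. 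In particular the right-hand side is finite for every $\Phi\in\mathscr{H}^{\delta^*}$, so $W(t,s)$ maps $\mathscr{H}^{\delta^*}$ into $\mathscr{H}^\delta$.

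There is essentially no real obstacle here: all the analytic content is contained in proposition~\ref{sec:useful-estimates-3}. The only point requiring care is the observation that $U_0(t)$ commutes with $N$ and is thus an isometry on each $\mathscr{H}^\delta$ — exactly what point~\ref{qt.item:2} of proposition~\ref{sec:quantum-theory-3} provides — so that conjugating $\W$ by the free evolution costs nothing in the scale-space norms and the two bounds are literally identical.
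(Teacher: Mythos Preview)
Your argument is correct and is exactly the reason the paper states this as an unproved corollary of proposition~\ref{sec:useful-estimates-3}: since $W(t,s)=U_0(t)\W(t,s)U_0^*(s)$ and $U_0(\cdot)$ is an isometry on every $\mathscr{H}^\delta$ by proposition~\ref{sec:quantum-theory-3}~\ref{qt.item:2}, the estimate for $\W(t,s)$ transfers verbatim to $W(t,s)$.
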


The $\U_2$-evolution does not preserve the number of particles, however the evolution of quantum fields applied to the vacuum remains a state with only one particle. Using this fact we will be able to improve the convergence of creation and annihilation operators. We define $(\mathbb{N})_1$ to be the orthogonal projector onto $\mathscr{H}_{0,1}\oplus\mathscr{H}_{1,0}$.
\begin{proposition}\label{sec:useful-estimates-4}
Let $\mathbf{g}=\{g_i\}_{i=1}^4$ be four $L^2(\mathds{R}^3)$ functions, and consider the field $\varphi(\mathbf{g})=\psi^*(g_1)+\psi(g_2)+a^*(g_3)+a(g_4)$. Then
\begin{equation*}
  \U^*_2(t,s)\varphi(\mathbf{g})\U_2(t,s)\Omega=(\mathbb{N})_1\U^*_2(t,s)\varphi(\mathbf{g})\U_2(t,s)\Omega\; .
\end{equation*}
\end{proposition}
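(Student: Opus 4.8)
The guiding idea is that $\V(t)$ is \emph{bilinear} in the creation and annihilation operators, so that its commutator with a linear field is again a linear field; the one-particle character then comes for free, since the annihilation operators kill $\Omega$. Concretely, write $\varphi(\mathbf g)=\psi^*(g_1)+\psi(g_2)+a^*(g_3)+a(g_4)$. Using the canonical commutation relations, the explicit symbols $\v_{\#\#}(t,x,k)$ and $\V_0(t)=\de\Gamma_p(\,\cdot\,)\otimes 1$, one computes that each of $[\V_{--},\varphi(\mathbf g)]$, $[\V_{-+},\varphi(\mathbf g)]$, $[\V_{+-},\varphi(\mathbf g)]$, $[\V_{++},\varphi(\mathbf g)]$ and $[\V_0,\varphi(\mathbf g)]$ is once more a linear field: contracting one factor of $\v_{\#\#}$ against one $g_i$ produces, by Cauchy--Schwarz in the contracted variable, a function in $L^2(\mathds R^3)$ of norm $\leq\norm{\v_{\#\#}(t)}_2\norm{g_i}_2=\norm{f_0}_2\norm{u(t)}_2\norm{g_i}_2$, while the $\V_0$ term yields $-\psi(\,\cdot\,)$ with norm controlled by $(2\pi)^{-3/2}\norm{\mathcal F^{-1}(\chi)}_{3/2}\norm{A(t)}_3$ through Young's inequality. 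Thus there is a bounded operator $\mathcal M(t)$ on $(L^2(\mathds R^3))^{4}$, norm-continuous in $t$, with
\begin{equation*}
  [\varphi(\mathbf g),\V(t)]=\varphi(\mathcal M(t)\mathbf g)\; ,
\end{equation*}
an identity that holds first on finite-particle vectors and then, since both sides are bounded operators from $\H^{\delta+2}$ to $\H^{\delta-1}$ (lemma~\ref{sec:evol-quant-fluct-3}), extends by density to all of $\H^{\delta+2}$.

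Since $\mathcal M(t)$ is bounded and continuous, the linear Cauchy problem $i\partial_t\Phi(t,s)=\Phi(t,s)\mathcal M(t)$, $\Phi(s,s)=\mathds 1$, has a unique solution $\Phi(t,s)$, norm-$\mathscr C^1$ on $(L^2)^4$. I claim the vector identity
\begin{equation*}
  \U_2^*(t,s)\varphi(\mathbf g)\U_2(t,s)\Omega=\varphi(\Phi(t,s)\mathbf g)\Omega\; .
\end{equation*}
Granting it, the right-hand side equals $\psi^*\bigl((\Phi(t,s)\mathbf g)_1\bigr)\Omega+a^*\bigl((\Phi(t,s)\mathbf g)_3\bigr)\Omega$, because $\psi$ and $a$ annihilate $\Omega$; this vector lies in $\mathscr H_{1,0}\oplus\mathscr H_{0,1}$, which is exactly the assertion $\U_2^*(t,s)\varphi(\mathbf g)\U_2(t,s)\Omega=(\mathbb N)_1\U_2^*(t,s)\varphi(\mathbf g)\U_2(t,s)\Omega$.

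To prove the identity I would argue by uniqueness on the whole family of tuples at once. For a $4$-tuple $\mathbf h$ put $v_{\mathbf h}(t)=\U_2^*(t,s)\varphi(\mathbf h)\U_2(t,s)\Omega$ and $w_{\mathbf h}(t)=\varphi(\Phi(t,s)\mathbf h)\Omega$; both are bounded by $C(t)\norm{\mathbf h}$ with $\norm{\mathbf h}=\sum_i\norm{g_i}_2$, because $\U_2(t,s)\Omega\in\H^{1}$ with $\H^1$-norm bounded (proposition~\ref{sec:evol-quant-fluct-6}) and $\norm{\varphi(\mathbf h)\Xi}\leq\norm{\mathbf h}\,\norm{(N+1)^{1/2}\Xi}$. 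Differentiating $w$ is immediate and gives $i\partial_t w_{\mathbf h}(t)=\varphi(\Phi(t,s)\mathcal M(t)\mathbf h)\Omega=w_{\mathcal M(t)\mathbf h}(t)$. Differentiating $v$, using $i\partial_t\U_2(t,s)=\V(t)\U_2(t,s)$ and the adjoint relation $i\partial_t\U_2^*(t,s)=-\U_2^*(t,s)\V(t)$ together with the commutator identity, gives $i\partial_t v_{\mathbf h}(t)=\U_2^*(t,s)[\varphi(\mathbf h),\V(t)]\U_2(t,s)\Omega=v_{\mathcal M(t)\mathbf h}(t)$. Hence $D_{\mathbf h}(t)=v_{\mathbf h}(t)-w_{\mathbf h}(t)$ obeys $D_{\mathbf h}(s)=0$ and $i\partial_t D_{\mathbf h}(t)=D_{\mathcal M(t)\mathbf h}(t)$; setting $\rho(t)=\sup_{\norm{\mathbf h}\leq 1}\norm{D_{\mathbf h}(t)}<\infty$ one gets $\rho(t)\leq\ass{\int_s^t\norm{\mathcal M(\tau)}\rho(\tau)\de\tau}$, so Gronwall forces $\rho\equiv 0$ and $v=w$.

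\textbf{Main obstacle.} The delicate point is the rigorous differentiation of the triple product defining $v_{\mathbf h}(t)$, since $\U_2(t,s)$ is only strongly differentiable as a map $\H^{\delta+2}\to\H^{\delta}$ (proposition~\ref{sec:evol-quant-fluct-6}) and $\varphi(\mathbf h)$, $\V(t)$ are unbounded. The way through is to exploit that $\Omega\in\H^{\delta}$ for \emph{every} $\delta$, so that $\U_2(t,s)\Omega$ is strongly differentiable into each $\H^\delta$; that $\varphi(\mathbf h)\in\mathscr B(\delta;\delta-1)$ maps it back into the scale; and that $\U_2^*(t,s)=\U_2(s,t)$ is again differentiable on $\H^{\delta+2}\to\H^\delta$. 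Applying the Leibniz rule on this scale of spaces, and invoking the commutator identity precisely on $\U_2(t,s)\Omega\in\H^{\delta+2}$ (where it is available by the density argument above), produces the stated derivative. One should resist replacing $\V(t)$ by its cut-off $\V_\nu(t)=\sigma_\nu\V(t)\sigma_\nu$ used elsewhere: the cut-off destroys bilinearity, so the linearity of the Heisenberg field --- the whole mechanism of the proof --- would be lost; it is cleaner to work with $\V(t)$ directly and use boundedness between the $\H^\delta$ spaces to control every step.
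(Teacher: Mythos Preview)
Your argument is correct and rests on the same mechanism as the paper's --- that $\V(t)$ is bilinear, so the commutator $[\varphi(\mathbf g),\V(t)]$ is again a linear field, and then a Gronwall-type closure finishes --- but the packaging differs. The paper works weakly: it fixes $\Theta\in(\mathscr H_{0,1}\oplus\mathscr H_{1,0})^{\perp}$, sets
\[
X(t)=\sum_{i}\sup_{g_i\in L^2}\norm{g_i}_2^{-1}\bigl\lvert\braket{\Theta}{\U_2^*(t,s)\,\cdot\,\U_2(t,s)\Omega}\bigr\rvert,
\]
differentiates each matrix element to produce the same commutator, bounds the resulting new $L^2$-functions, and obtains $X(t)\leq C\int_s^t X(\tau)\de\tau$ with $X(s)=0$. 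You instead work strongly and constructively: you solve the linear ODE $i\partial_t\Phi(t,s)=\Phi(t,s)\mathcal M(t)$ on $(L^2)^4$ and show $\U_2^*(t,s)\varphi(\mathbf g)\U_2(t,s)\Omega=\varphi(\Phi(t,s)\mathbf g)\Omega$ by the same Gronwall on the difference. Your route yields strictly more information --- the explicit one-particle state --- at the cost of justifying the Leibniz rule for the triple product, which you handle correctly via the $\mathscr H^\delta$-scale and the fact that $\Omega\in\mathscr H^\delta$ for all $\delta$; the paper sidesteps this by keeping everything inside scalar matrix elements, where Proposition~\ref{sec:evol-quant-fluct-6}(v) gives the derivative directly. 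Your caution about not passing to the cutoff $\V_\nu$ is well placed.
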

\begin{proof}
  Let $\Theta\in (\H_{0,1}\oplus \H_{1,0})^\perp$, and define:
  \begin{equation*}\begin{split}
      X(t)&=\sup_{g_1\in L^2}\frac{1}{\norm{g_1}_2}\ass{\braket{\Theta}{\U^*_2(t,s)\psi^*(g_1)\U_2(t,s)\Omega}}+\sup_{g_2\in L^2}\frac{1}{\norm{g_2}_2}\Bigl\lvert\braket{\Theta}{\U^*_2(t,s)\psi(g_2)\U_2(t,s)\Omega}\Bigr\rvert\\&+\sup_{g_3\in L^2}\frac{1}{\norm{g_3}_2}\ass{\braket{\Theta}{\U^*_2(t,s)a^*(g_3)\U_2(t,s)\Omega}}+\sup_{g_4\in L^2}\frac{1}{\norm{g_4}_2}\ass{\braket{\Theta}{\U^*_2(t,s)a(g_4)\U_2(t,s)\Omega}}\; .
  \end{split}\end{equation*}
If $X(t)=0$ then
\begin{gather*}
  \ass{\braket{\Theta}{\U^*_2(t,s)\varphi(\mathbf{g})\U_2(t,s)\Omega}}\leq \sup_{i\in\{1,2,3,4\}}\norm{g_i}_2X(t)=0\; ;
\end{gather*}
for all $\Theta\in (\H_{0,1}\oplus \H_{1,0})^\perp$, so $\U^*_2(t,s)\varphi(\mathbf{g})\U_2(t,s)\Omega\in \H_{0,1}\oplus \H_{1,0}$. We need to show $X(t)\leq C\int_s^t\ide{\tau}X(\tau)$: we prove it only for the first term of $X(t)$, the others being analogous. Define
\begin{equation*}
  X_1(t)=\sup_{g_1\in L^2}\frac{1}{\norm{g_1}_2}\ass{\braket{\Theta}{\U^*_2(t,s)\psi^*(g_1)\U_2(t,s)\Omega}}\; .
\end{equation*}
Differentiation yields
\begin{equation*}\begin{split}
  i\partial_t\braket{\Theta}{\U^*_2(t,s)\psi^*(g_1)\U_2(t,s)\Omega}=\braket{\Theta}{\U^*_2(t,s)[\psi^*(g_1),\V(t)]\U_2(t,s)\Omega}\; .
\end{split}\end{equation*}
Performing the commutation, integrating and taking the absolute value we obtain
\begin{equation*}\begin{split}
    \ass{\braket{\Theta}{\U^*_2(t,s)\psi^*(g_1)\U_2(t,s)\Omega}}\leq\int_s^t\ide{\tau}\Bigl\lvert\braket{\Theta}{\U^*_2(\tau,s)\Bigl(a(g_{1-})+a^*(g_{1+})+\psi^*(g_{10})\Bigr)\U_2(\tau,s)\Omega}\Bigr\rvert
\end{split}\end{equation*}
with $g_{1-}(t,\cdot)=\int\ide{x}g_1(x)\v_{--}(t,x,\cdot)$; $g_{1+}(t,\cdot)=\int\ide{x}g_1(x)\v_{-+}(t,x,\cdot)$ and $g_{10}(t,\cdot)=U_{01}^*(t)(\mathcal{F}^{-1}(\chi)*A(t))(\cdot)U_{01}(t)g_1(\cdot)$. Multiply now both members by $\norm{g_1}_2^{-1}$, and calculate the supremum in $g_1$:
\begin{align*}
  X_1(t)\leq &\int_s^t\ide{\tau}\sup_{g_{1-}\in L^2}\frac{\norm{g_{1-}}_2}{\norm{g_1}_2}\frac{1}{\norm{g_{1-}}_2}\Bigl\lvert\braket{\Theta}{\U^*_2(\tau,s)a(g_{1-}) \U_2(\tau,s)\Omega}\Bigr\rvert\\
+&\int_s^t\ide{\tau}\sup_{g_{1+}\in L^2}\frac{\norm{g_{1+}}_2}{\norm{g_1}_2}\frac{1}{\norm{g_{1+}}_2}\Bigl\lvert\braket{\Theta}{\U^*_2(\tau,s) a^*(g_{1+}) \U_2(\tau,s)\Omega}\Bigr\rvert \\
+&\int_s^t\ide{\tau}\sup_{g_{10}\in L^2}\frac{\norm{g_{10}}_2}{\norm{g_1}_2}\frac{1}{\norm{g_{10}}_2}\Bigl\lvert\braket{\Theta}{\U^*_2(\tau,s) \psi^*(g_{10}) \U_2(\tau,s)\Omega}\Bigr\rvert\; .
\end{align*}
The following estimates $\norm{g_{1-}(t)}_2\leq \norm{f_0}_2\norm{u(t)}_2\norm{g_1}_2$; $\norm{g_{1+}(t)}_2\leq \norm{f_0}_2\norm{u(t)}_2\norm{g_1}_2$ and $\norm{g_{10}(t)}_2\leq \norm{\mathcal{F}^{-1}(\chi)*A(t)}_\infty\norm{g_1}_2$ yield
\begin{gather*}
X_1(t)\leq C\int_s^t\ide{\tau}X(\tau)\; ;\; C=\sup_{\tau\in [s,t]}\Bigl( 2\norm{f_0}_2\norm{u(\tau)}_2 + \norm{\mathcal{F}^{-1}(\chi)*A(\tau)}_\infty\Bigr)\; .
\end{gather*}
\end{proof}

\subsection{Proof of Theorem~\ref{sec:class-limit-annih-1}.}
\label{sec:proof-theorem}

Let $f\in L^2(\mathds{R}^3)$, define
\begin{align*}
\langle\lambda\psi^\#(\bar{f}^\#)(t)\rangle_{C\Phi}&=\braket{C(u_\lambda,\alpha_\lambda)\Phi}{U^*(t)\lambda\psi^\#(\bar{f}^\#)U(t)C(u_\lambda,\alpha_\lambda)\Phi}\\
\langle\lambda a^\#(\bar{f}^\#)(t)\rangle_{C\Phi}&=\braket{C(u_\lambda,\alpha_\lambda)\Phi}{U^*(t)\lambda a^\#(\bar{f}^\#)U(t)C(u_\lambda,\alpha_\lambda)\Phi}\; .
\end{align*}
This definition yields:
\begin{equation*}
  \langle\lambda\psi^\#(\bar{f}^\#)(t)\rangle_{C\Phi}=\int\ide{x}\bar{f}^\#(x)\langle\lambda \psi^\#(t,x)\rangle_{C\Phi}\; ;\; \langle\lambda a^\#(\bar{f}^\#)(t)\rangle_{C\Phi}=\int\ide{k}\bar{f}^\#(k)\langle\lambda a^\#(t,k)\rangle_{C\Phi} .
\end{equation*}
\begin{lemma}
  \label{sec:gener-case:-conv-1}
Let $\Phi\in\mathscr{H}^{\delta}$, $\delta\geq 9$, $f\in L^2(\mathds{R}^3)$, $(u(\cdot),\alpha(\cdot))$ the $\mathscr{C}^0(\mathds{R},L^2(\mathds{R}^3)\otimes L^2(\mathds{R}^3))$ solution of~\eqref{eq:2} with initial conditions $(u,\alpha)\in L^2(\mathds{R}^3)\otimes L^2(\mathds{R}^3)$. Then
\begin{align*}
  \ass{\langle\lambda\psi^\#(\bar{f}^\#)(t)\rangle_{C\Phi}-\braket{f^\#}{u^\#(t)}_2}\leq& \lambda \norm{f}_2 \sqrt{\lambda\norm{f_0}_2\ass{t}+\mathcal{L}_1(t,0)}e^{\frac{\lambda\mathcal{C}_1\ass{t}+\mathcal{L}_2(t,0)}{2}}\norm{\Phi}_{\mathscr{H}^9}\\
  \ass{\langle\lambda a^\#(\bar{f}^\#)(t)\rangle_{C\Phi}-\braket{f^\#}{\alpha^\#(t)}_2}\leq& \lambda \norm{f}_2 \sqrt{\lambda\norm{f_0}_2\ass{t}+\mathcal{L}_1(t,0)}e^{\frac{\lambda\mathcal{C}_1\ass{t}+\mathcal{L}_2(t,0)}{2}}\norm{\Phi}_{\mathscr{H}^9}
\end{align*}
where the constants are defined in the proof of proposition~\ref{sec:useful-estimates-3}.
\end{lemma}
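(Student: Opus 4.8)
The plan is to unfold the definition through the coherent flow $W(t,0)$ and then isolate the classical term with the Weyl shift identities of proposition~\ref{sec:quantum-theory-8}. Setting $s=0$ in the definition~\eqref{eq:12} of $W(t,s)$ and using that $u_\lambda(0)=u_\lambda$, $\alpha_\lambda(0)=\alpha_\lambda$, I first rewrite
\begin{equation*}
  U(t)C(u_\lambda,\alpha_\lambda)=C(u_\lambda(t),\alpha_\lambda(t))\,W(t,0)\,e^{-i\Lambda(t,0)}\; .
\end{equation*}
Inserting this into the matrix element, the scalar phase $e^{-i\Lambda(t,0)}$ cancels between bra and ket, and one is left with
\begin{equation*}
  \langle\lambda\psi(\bar{f})(t)\rangle_{C\Phi}=\lambda\braket{W(t,0)\Phi}{C^*(u_\lambda(t),\alpha_\lambda(t))\psi(\bar{f})C(u_\lambda(t),\alpha_\lambda(t))W(t,0)\Phi}\; .
\end{equation*}
Applying point~\ref{qt.item:5} of proposition~\ref{sec:quantum-theory-8} gives $C^*(u_\lambda(t),\alpha_\lambda(t))\psi(\bar{f})C(u_\lambda(t),\alpha_\lambda(t))=\psi(\bar{f})+\braket{f}{u_\lambda(t)}_2$. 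Since $u_\lambda(t)=u(t)/\lambda$, the $c$-number shift contributes $\lambda\braket{f}{u_\lambda(t)}_2\norm{W(t,0)\Phi}^2=\braket{f}{u(t)}_2$ under the normalisation $\norm{\Phi}=1$, which is exactly the classical target $\braket{f^\#}{u^\#(t)}_2$ in the case at hand.

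The difference $\langle\lambda\psi(\bar{f})(t)\rangle_{C\Phi}-\braket{f}{u(t)}_2$ is therefore the genuinely quantum remainder $\lambda\braket{\Xi}{\psi(\bar{f})\Xi}$ with $\Xi=W(t,0)\Phi$. I would bound it by Cauchy--Schwarz together with the standard field estimate $\norm{\psi(\bar{f})\Xi}\leq\norm{f}_2\norm{N_1^{1/2}\Xi}\leq\norm{f}_2\norm{\Xi}_1$, obtaining $\lambda\norm{f}_2\,\norm{\Xi}\,\norm{\Xi}_1$. As $W(t,0)$ is unitary, $\norm{\Xi}=\norm{\Phi}$, so the whole problem reduces to controlling $\norm{W(t,0)\Phi}_1$ by a bound that remains finite as $\lambda\to 0$.

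That control is precisely the corollary to proposition~\ref{sec:useful-estimates-3}, applied with $\delta=1$: there the required regularity index is $\delta^*=\max(4,6\cdot 1+3)=9$, whence
\begin{equation*}
  \norm{W(t,0)\Phi}_1^2\leq\bigl(\lambda\norm{f_0}_2\ass{t}+\mathcal{L}_1(t,0)\bigr)e^{\lambda\mathcal{C}_1\ass{t}+\mathcal{L}_2(t,0)}\norm{\Phi}_{9}^2\; .
\end{equation*}
Taking the square root and multiplying by $\lambda\norm{f}_2$ yields the asserted inequality, and this is exactly where the hypothesis $\delta\geq 9$ is forced: the naive estimate of proposition~\ref{sec:useful-estimates-2} diverges like $\lambda^{-6\delta}$, and only the $\lambda$-convergent bound of proposition~\ref{sec:useful-estimates-3} keeps the remainder of order $\lambda$. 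The remaining three cases are structurally identical: for $\psi^*$ one uses the creation-operator shift (producing $\bar{u}(t)$), while for $a^{\#}$ one uses the relativistic Weyl identities of proposition~\ref{sec:quantum-theory-8} and the companion bound $\norm{a(\bar{f})\Xi}\leq\norm{f}_2\norm{N_2^{1/2}\Xi}\leq\norm{f}_2\norm{\Xi}_1$, so the same estimate on $\norm{W(t,0)\Phi}_1$ applies verbatim.

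I expect no real obstacle inside this lemma itself: all the analytic difficulty has already been absorbed into proposition~\ref{sec:useful-estimates-3}, namely producing a $\lambda\to 0$ finite bound on $\norm{W(t,s)\Phi}_\delta$. The only points demanding care here are bookkeeping ones—pushing the Weyl operator through the field at the \emph{shifted} argument $u_\lambda(t)$ rather than at $u_\lambda(0)$, and noting that testing against $\psi$ or $a$ costs exactly one power of $N^{1/2}$, which pins the regularity index at $\delta^*=9$.
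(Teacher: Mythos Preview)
Your proposal is correct and follows essentially the same route as the paper: rewrite the amplitude through $W(t,0)$, use the Weyl shift of proposition~\ref{sec:quantum-theory-8} to split off the classical term, bound the quantum remainder by $\lambda\norm{f}_2\norm{W(t,0)\Phi}_1$, and then invoke the corollary of proposition~\ref{sec:useful-estimates-3} with $\delta=1$, $\delta^*=9$. You are in fact slightly more explicit than the paper in flagging the implicit normalisation $\norm{\Phi}=1$ needed for the classical shift to equal $\braket{f}{u(t)}_2$ rather than $\braket{f}{u(t)}_2\norm{\Phi}^2$.
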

\begin{proof}
  We prove the result for $\langle\lambda\psi^\#(\bar{f}^\#)(t)\rangle_{C\Phi}$, the other case being perfectly analogous. Proposition~\ref{sec:quantum-theory-8} yields $\langle\lambda\psi^\#(\bar{f}^\#)(t)\rangle_{C\Phi}=\braket{W(t,0)\Phi}{\lambda\psi^\#(\bar{f}^\#)W(t,0)\Phi}+\braket{f^\#}{u^\#(t)}_2$. Then $\ass{\langle\lambda\psi^\#(\bar{f}^\#)(t)\rangle_{C\Phi}-\braket{f^\#}{u^\#(t)}_2}\leq \lambda\norm{\psi^\#(\bar{f}^\#)W(t,0)\Phi}\leq \lambda\norm{f}_2\norm{W(t,0)\Phi}_{\mathscr{H}^{1}}$. Apply corollary of Proposition~\ref{sec:useful-estimates-3} to obtain the result.
\end{proof}
Since the bound of the lemma above holds for all $f\in L^2(\mathds{R}^3)$, the Lemma of Riesz implies $\langle\lambda \psi^\#(t,x)\rangle_{C\Phi}$, $\langle\lambda a^\#(t,k)\rangle_{C\Phi}\in L^2(\mathds{R}^3)$. Furthermore they satisfy the bounds stated in the theorem. If $\Phi=\Omega$, we can apply proposition~\ref{sec:useful-estimates-4}:
  \begin{lemma}\label{sec:proof-theor-refs}
    Let $f\in L^2(\mathds{R}^3)$, then we have the following bounds:
\begin{align*}
  \ass{\langle\lambda \psi^\#(\bar{f}^\#)(t)\rangle_{C\Omega}-\braket{f^\#}{u^\#(t)}_2}\leq& \lambda^2 \norm{f}_2 K_1 \ass{t}e^{K_2\ass{t}}\\
  \ass{\langle\lambda a^\#(\bar{f}^\#)(t)\rangle_{C\Omega}-\braket{f^\#}{\alpha^\#(t)}_2}\leq& \lambda^2 \norm{f}_2 K_1\ass{t}e^{K_2\ass{t}}
\end{align*}
with $K_1$ and $K_2$ positive constants. 
  \end{lemma}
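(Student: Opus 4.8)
The plan is to reuse the algebraic reduction already performed in the proof of Lemma~\ref{sec:gener-case:-conv-1}: Proposition~\ref{sec:quantum-theory-8} gives
\[
\langle\lambda\psi^\#(\bar f^\#)(t)\rangle_{C\Omega}-\braket{f^\#}{u^\#(t)}_2=\braket{W(t,0)\Omega}{\lambda\psi^\#(\bar f^\#)W(t,0)\Omega}\; ,
\]
so the whole task is to gain one extra power of $\lambda$ in this transition amplitude. First I would pass to the interaction representation: since $W(t,0)=U_0(t)\W(t,0)$ and $U_0(t)\Omega=\Omega$, the amplitude equals $\lambda\braket{\W(t,0)\Omega}{\varphi_t\,\W(t,0)\Omega}$, where $\varphi_t\equiv U_0^*(t)\psi^\#(\bar f^\#)U_0(t)$ is again a field of the form $\varphi(\mathbf{g})$ whose single $L^2(\mathds{R}^3)$ test function has norm $\norm{f}_2$ (free evolution being $L^2$-unitary).

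The extra power of $\lambda$ comes from a cancellation invisible to the generic estimate. By Proposition~\ref{sec:useful-estimates-4} (with $s=0$), the vector $\U_2^*(t,0)\varphi_t\,\U_2(t,0)\Omega$ lies in $\H_{0,1}\oplus\H_{1,0}$, which is orthogonal to $\Omega\in\H_{0,0}$; hence $\braket{\U_2(t,0)\Omega}{\varphi_t\,\U_2(t,0)\Omega}=0$. Writing $\W=\W(t,0)$, $\U_2=\U_2(t,0)$ and subtracting this vanishing term, I would telescope
\[
\braket{\W\Omega}{\varphi_t\W\Omega}=\braket{(\W-\U_2)\Omega}{\varphi_t\W\Omega}+\braket{\U_2\Omega}{\varphi_t(\W-\U_2)\Omega}\; .
\]
Both terms now carry a factor $(\W-\U_2)\Omega$; the remaining fields are estimated by the standard bound $\norm{\varphi_t\Psi}\leq\norm{f}_2\norm{\Psi}_1$, while $\norm{\W\Omega}_1$ and $\norm{\U_2\Omega}_1$ are bounded uniformly in $\lambda$ by Proposition~\ref{sec:useful-estimates-3} and by Proposition~\ref{sec:evol-quant-fluct-6} respectively (in the second term I would move $\varphi_t$ onto $\U_2\Omega$ by passing to the adjoint field, which has the same test-function norm).

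The crux is to show $\norm{(\W(t,0)-\U_2(t,0))\Omega}=O(\lambda)$ — a full power, improving on the $\sqrt{\lambda}$ that the quadratic-form computation of Theorem~\ref{sec:conv-quant-evol-1} yields. For this I would not use that computation but the Duhamel identity
\[
(\W(t,0)-\U_2(t,0))\Omega=-i\int_0^t\de{\tau}\,\U_2(t,\tau)\,\widetilde H_I(\tau)\,\W(\tau,0)\Omega\; ,
\]
exploiting that $\widetilde H_I(\tau)=U_0^*(\tau)H_IU_0(\tau)$ carries an explicit factor $\lambda$. By the corollary of Lemma~\ref{sec:quantum-theory-6}, $\norm{\widetilde H_I(\tau)\W(\tau,0)\Omega}\leq\lambda\norm{f_0}_2\norm{(N_1^2+N_2+1)\W(\tau,0)\Omega}\leq 3\lambda\norm{f_0}_2\norm{\W(\tau,0)\Omega}_4$, and the last factor is $O(1)$ on compact time intervals by Proposition~\ref{sec:useful-estimates-3} (since $\Omega\in\H^{\delta'}$ for every $\delta'$). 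Because $\U_2(t,\tau)$ is bounded on $\H$ (Proposition~\ref{sec:evol-quant-fluct-6}), the integral is $O(\lambda)$, with the $\ass{t}e^{K_2\ass{t}}$ dependence coming from the $\tau$-integration and the exponential growth of the uniform bounds. Combining, each telescoped term is $O(\lambda)\norm{f}_2$, so $\braket{\W\Omega}{\varphi_t\W\Omega}=O(\lambda)\norm{f}_2$ and the amplitude is $O(\lambda^2)\norm{f}_2$; as this holds for every $f\in L^2(\mathds{R}^3)$, the Riesz lemma delivers the stated $L^2$ bounds exactly as in Lemma~\ref{sec:gener-case:-conv-1}.

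The main obstacle is precisely this sharp $O(\lambda)$ control of $\norm{(\W-\U_2)\Omega}$: one must resist invoking the ready-made quadratic-form estimate behind Theorem~\ref{sec:conv-quant-evol-1} (which only gives $\sqrt\lambda$) and instead run the Duhamel argument, verifying the domains are compatible — that $\W(\tau,0)\Omega$ is regular enough (in $\H^4$, guaranteed by Proposition~\ref{sec:useful-estimates-3} because $\Omega$ is regular) for the corollary of Lemma~\ref{sec:quantum-theory-6} to apply inside the integral. Once this is in place, together with the vanishing of the $\U_2$-amplitude from Proposition~\ref{sec:useful-estimates-4}, the improvement from $\lambda$ to $\lambda^2$ follows, and the rest is routine bookkeeping of the uniform-in-$\lambda$ bounds already established.
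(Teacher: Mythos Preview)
Your proposal is correct and follows the paper's proof essentially step for step: translate by the Weyl operator, pass to the interaction picture, kill the $\U_2$-only amplitude via Proposition~\ref{sec:useful-estimates-4}, telescope, and extract the extra $\lambda$ from a Duhamel formula in which only $\widetilde H_I(\tau)$ survives. The paper's telescoping is the mirror of yours ($X_2=\langle(\W-\U_2)\Omega,\psi^\#\U_2\Omega\rangle$, $X_3=\langle\W\Omega,\psi^\#(\W-\U_2)\Omega\rangle$), and it bounds each piece by its own Duhamel rather than first isolating $\norm{(\W-\U_2)\Omega}$, but the content is the same.

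One technical point worth flagging, since you identify domain compatibility as the main obstacle: your Duhamel identity differentiates $\U_2(t,\tau)\W(\tau,0)\Omega$ in $\tau$, which requires $\W(\tau,0)$ to be strongly differentiable in its first argument at $\Omega$. By equation~\eqref{conv.eq:14} this needs $\Omega\in\mathscr D$, i.e.\ $C(u_0/\lambda,\alpha_0/\lambda)\Omega\in D(H_0)$, which is \emph{not} guaranteed for arbitrary $u_0,\alpha_0\in L^2$. The paper sidesteps this by running the Duhamel on $\W^*(\tau,0)\U_2(\tau,0)\Omega$ instead: by~\eqref{conv.eq:14}, $\W^*(\tau,0)$ is differentiable in $\tau$ on $D(N_1^2+N_2)$, and $\U_2(\tau,0)\Omega\in\mathscr H^\delta$ for every $\delta$ by Proposition~\ref{sec:evol-quant-fluct-6}, so no regularity on the initial data is needed. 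Switching to this version (or equivalently noting $\norm{(\W-\U_2)\Omega}=\norm{(1-\W^*\U_2)\Omega}$ by unitarity and expanding the latter) repairs your argument without changing anything else.
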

  \begin{proof}
    As usual we prove the result for $\langle\lambda \psi^\#(\bar{f}^\#)(t)\rangle_{C\Omega}$, the other case being perfectly analogous. Using proposition~\ref{sec:quantum-theory-8} we can write
\begin{equation*}
  \begin{split}
    \ass{\langle\lambda \psi^\#(\bar{f}^\#)(t)\rangle_{C\Omega}-\braket{f^\#}{u^\#(t)}_2}=\lambda\ass{\braket{\W(t,0)\Omega}{U_0^*(t)\psi^\#(\bar{f}^\#)U_0(t)\W(t,0)\Omega}}\; .
  \end{split}
\end{equation*}
Then defining $\tilde{f}\equiv U_{01}^*(t)f$:
\begin{equation*}
    \ass{\langle\lambda \psi^\#(\bar{f}^\#)(t)\rangle_{C\Omega}-\braket{f^\#}{u^\#(t)}_2}=\lambda\ass{\braket{\W(t,0)\Omega}{\psi^\#(\bar{\tilde{f}}^\#)\W(t,0)\Omega}}\; .
\end{equation*}
By equality
\begin{equation*}\begin{split}
  \braket{\W(t,0)\Omega}{\psi^\#(\bar{\tilde{f}}^\#)\W(t,0)\Omega}=\braket{\U_2(t,0)\Omega}{\psi^\#(\bar{\tilde{f}}^\#)\U_2(t,0)\Omega}+\braket{(\W(t,0)-\U_2(t,0))\Omega}{\\\psi^\#(\bar{\tilde{f}}^\#)\U_2(t,0)\Omega}+\braket{\W(t,0)\Omega}{\psi^\#(\bar{\tilde{f}}^\#)(\W(t,0)-\U_2(t,0))\Omega}
\end{split}\end{equation*}
write
\begin{equation*}\begin{split}
    \ass{\langle\lambda \psi^\#(\bar{f}^\#)(t)\rangle_{C\Omega}-\braket{f^\#}{u^\#(t)}_2}\leq\lambda\Bigl(\ass{\braket{\U_2(t,0)\Omega}{\psi^\#(\bar{\tilde{f}}^\#)\U_2(t,0)\Omega}}\\
+\ass{\braket{(\W(t,0)-\U_2(t,0))\Omega}{\psi^\#(\bar{\tilde{f}}^\#)\U_2(t,0)\Omega}}
+\ass{\braket{\W(t,0)\Omega}{\psi^\#(\bar{\tilde{f}}^\#)(\W(t,0)-\U_2(t,0))\Omega}}\Bigr)\\\equiv \lambda\Bigl(X_1+X_2+X_3\Bigr)\; .
\end{split}\end{equation*}
By proposition~\ref{sec:useful-estimates-4} we have that $X_1=0$. Then we bound $X_2$ as follows:
\begin{equation*}
  \begin{split}
    X_2=\ass{\braket{\int_0^t\ide{\tau}\dtot{}{\tau}\W^*(\tau,0)\U_2(\tau,0)\Omega}{\W^*(t,0)\psi^\#(\bar{\tilde{f}}^\#)\U_2(t,0)\Omega}}\leq \lambda\norm{f}_2\norm{f_0}_2\\\ass{\int_0^t\ide{\tau}\norm{\U_2(\tau,0)\Omega}_{\mathscr{H}^4}\norm{\U_2(t,0)\Omega}_{\mathscr{H}^1}}
  \end{split}
\end{equation*}
Using proposition~\ref{sec:evol-quant-fluct-6} we obtain
\begin{equation*}
  \begin{split}
    X_2\leq \lambda\norm{f}_2\norm{f_0}_2\exp\biggl\{\frac{1}{2}\biggl(\ln 3+ 4\sqrt{2}\ass{\int_0^t\ide{\tau}\norm{v_{--}(\tau)}_2}\biggr)\biggr\}\biggl\lvert\int_0^t\ide{\tau}\exp\biggl\{2\biggl(\ln 3\\+ 10\sqrt{2}\ass{\int_0^\tau\ide{\tau'}\norm{v_{--}(\tau')}_2}\biggr)\biggr\}\biggr\rvert\; .
  \end{split}
\end{equation*}
To bound $X_3$ we use a similar method:
\begin{equation*}
  \begin{split}
    X_3\leq \lambda\norm{f}_2\norm{f_0}_2 \Bigl(\lambda\norm{f_0}_2\ass{t}+\mathcal{L}_1(t,0)\Bigr)^{1/2}e^{(\lambda\mathcal{C}_1\ass{t}+\mathcal{L}_2(t,0))/2}\biggl\lvert\int_0^t\ide{\tau}\exp\biggl\{2\biggl(\ln 3\\
+ 10\sqrt{2}\ass{\int_0^\tau\ide{\tau'}\norm{v_{--}(\tau')}_2}\biggr)\biggr\}\biggr\rvert\; .
  \end{split}
\end{equation*}
\end{proof}

\subsection{Proof of Theorem~\ref{sec:class-limit-annih-2}.}
\label{sec:proof-th}

The most difficult case is with $\Theta$ vectors. We write the proof in that case, the other being analogous. We recall the definition of operator $B$, and define transition amplitudes $\langle B \rangle_\Theta(t)$.

    Let $q,r,h,l\in \mathds{N}$, $\delta=q+r+h+l$, $g\in L^2(\mathds{R}^{3(q+r)})\otimes L^2(\mathds{R}^{3(h+l)})\equiv L^2(\mathds{R}^{3\delta})$. Then
\begin{equation*}\begin{split}
  B=\int\de X_q\de Y_r\de K_h\ide{M_l} \bar{g}(x_1,\dotsc,x_q;y_1,\dotsc,y_r;k_1,\dotsc,k_h;m_1,\dotsc,m_l)\\
\psi^*(X_q)\psi(Y_r)a^*(K_h)a(M_l)
\end{split}\end{equation*}
\begin{equation*}
  \langle B \rangle_\Theta(t)\equiv \lambda^\delta\braket{\Theta(t)}{B\Theta(t)}=\lambda^\delta\braket{\Theta}{U^*(t)B U(t)\Theta}\; .  
\end{equation*}
Now let $0\leq d\leq\delta$, $(u_\theta(t),\alpha_\theta(t))$ the $\mathscr{C}^0(\mathds{R},L^2(\mathds{R}^3)\otimes L^2(\mathds{R}^3))$ solution of~\eqref{eq:2} corresponding to initial data $(u_0,\alpha_0(\theta)\equiv\exp\{-i\theta\}\alpha_0)$. We establish the following correspondence:
  \begin{equation*}
    \psi(x)\longleftrightarrow \frac{1}{\lambda}u_\theta(t,x)\; ; \; \psi^*(x)\longleftrightarrow \frac{1}{\lambda}\bar{u}_\theta(t,x)\; ;\; a(k)\longleftrightarrow \frac{1}{\lambda}\alpha_\theta(t,k)\; ;\; a^*(k)\longleftrightarrow \frac{1}{\lambda}\bar{\alpha}_\theta(t,k)\; .
  \end{equation*}
We will call $B^{(d)}$ the operator obtained substituting in any possible way $d$ creation or annihilation operators of $B$ with functions, following the correspondence above. $B^{(d)}$ is the sum of $\binom{\delta}{d}$ operators of type $B$, but with $\delta-d$ creation or annihilation operators.
\begin{lemma}\label{sec:general-remarks-1}
  For any $0\leq d\leq\delta$ exists a function $C_d(t)$, depending on $\norm{u_\theta(t)}_2$ and $\norm{\alpha_\theta(t)}_2$, such that for all $\Phi\in \mathscr{H}^{\delta-d}$ we have the following inequality:
  \begin{equation*}\begin{split}
    \norm{B^{(d)}\Phi}\leq \lambda^{-d}(q+h)^{(\delta-d)/2}C_d(t)\norm{g;L^2(\mathds{R}^{3\delta})}\norm{\Phi}_{\delta-d} \; .
 \end{split} \end{equation*}
\end{lemma}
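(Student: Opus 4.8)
The plan is to reduce the estimate to the single-operator bound of point~\ref{qt.item:10} of Lemma~\ref{sec:quantum-theory-5}, applied term by term. First I would write $B^{(d)}$ explicitly as the sum of its $\binom{\delta}{d}$ monomials: $B^{(d)}=\sum_S\tilde B_S$, where $S$ ranges over the ways of selecting $d$ of the $\delta$ field operators to replace by the corresponding classical datum. If $S$ contains $d_1$ factors $\psi^*$, $d_2$ factors $\psi$, $d_3$ factors $a^*$ and $d_4$ factors $a$ (with $d_1+d_2+d_3+d_4=d$), then $\tilde B_S$ is again an operator of the normal ordered form of $B$, now with $\tilde q=q-d_1$, $\tilde r=r-d_2$, $\tilde h=h-d_3$, $\tilde l=l-d_4$ field operators and a new kernel $\tilde g_S\in L^2(\R^{3(\delta-d)})$ obtained by contracting $g$ against the $d$ classical functions sitting in the slots indexed by $S$. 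Since these functions and the prefactors $1/\lambda$ commute with everything, $\tilde B_S$ stays normal ordered and Lemma~\ref{sec:quantum-theory-5} applies verbatim.

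The first ingredient is a bound on $\norm{\tilde g_S}_{L^2(\R^{3(\delta-d)})}$. Each replacement amounts to integrating $g$ against one of $u_\theta(t)$, $\bar u_\theta(t)$, $\alpha_\theta(t)$, $\bar\alpha_\theta(t)$ in a single variable, together with a factor $1/\lambda$. Contracting an $L^2$ kernel against an $L^2(\R^3)$ function in one slot is bounded, by Cauchy--Schwarz in that variable, by the product of the two $L^2$ norms; iterating $d$ times gives
\begin{equation*}
  \norm{\tilde g_S}_{L^2(\R^{3(\delta-d)})}\leq \lambda^{-d}\norm{u_\theta(t)}_2^{d_1+d_2}\norm{\alpha_\theta(t)}_2^{d_3+d_4}\norm{g}_{L^2(\R^{3\delta})}\; .
\end{equation*}
This already isolates the factor $\lambda^{-d}$ and shows that the final constant depends on the data only through $\norm{u_\theta(t)}_2$ and $\norm{\alpha_\theta(t)}_2$.

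The second ingredient is the number-operator estimate. Point~\ref{qt.item:10} of Lemma~\ref{sec:quantum-theory-5} gives $\norm{\tilde B_S\Phi}\leq\norm{\tilde g_S}_2\,\norm{F_S\Phi}$, with $F_S$ the factorial operator built from $\tilde q,\tilde r,\tilde h,\tilde l$. I would bound $F_S$ above by $(N+1)^{(\delta-d)/2}$ up to a clean factor: the descending ratios $N_1!/(N_1-\tilde r)!$ and $N_2!/(N_2-\tilde l)!$ are bounded by $(N_1+1)^{\tilde r}$ and $(N_2+1)^{\tilde l}$, while the ascending ratios $(N_1+\tilde q-\tilde r)!/(N_1-\tilde r)!$ and $(N_2+\tilde h-\tilde l)!/(N_2-\tilde l)!$ consist of $\tilde q$ and $\tilde h$ factors with particle-number shifts $c\leq q+h$, each of which obeys $N_i+c\leq(q+h)(N_i+1)$ (for $q+h\geq1$; if $q+h=0$ no creation operators are present and the shifts are absent). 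Taking the square root, collecting powers of $N_1+1$ and $N_2+1$ into $(N+1)^{(\delta-d)/2}$, and estimating crudely $\tilde q+\tilde h\leq\delta-d$, yields the operator inequality $F_S\leq(q+h)^{(\delta-d)/2}(N+1)^{(\delta-d)/2}$, whence $\norm{F_S\Phi}\leq(q+h)^{(\delta-d)/2}\norm{\Phi}_{\delta-d}$.

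Finally I would combine the two estimates and sum over the $\binom{\delta}{d}$ choices of $S$: the factors $\lambda^{-d}$, $(q+h)^{(\delta-d)/2}$, $\norm{g}_{L^2(\R^{3\delta})}$ and $\norm{\Phi}_{\delta-d}$ come out uniformly, and what remains defines
\begin{equation*}
  C_d(t)=c\sum_{d_1+d_2+d_3+d_4=d}\binom{q}{d_1}\binom{r}{d_2}\binom{h}{d_3}\binom{l}{d_4}\norm{u_\theta(t)}_2^{d_1+d_2}\norm{\alpha_\theta(t)}_2^{d_3+d_4}\; ,
\end{equation*}
which depends only on $\norm{u_\theta(t)}_2$, $\norm{\alpha_\theta(t)}_2$ (and on the fixed integers $q,r,h,l,d$), as required. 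The main obstacle is the bookkeeping in the third step: one must track the particle-number shifts produced by the surviving creation operators and verify that they can be absorbed uniformly into the single factor $(q+h)^{(\delta-d)/2}$, rather than into operator-valued corrections that would spoil the comparison with $\norm{\Phi}_{\delta-d}$.
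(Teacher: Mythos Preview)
Your proof is correct and follows exactly the route the paper indicates: it is a direct application of point~\ref{qt.item:10} of Lemma~\ref{sec:quantum-theory-5} to each of the $\binom{\delta}{d}$ summands of $B^{(d)}$, with the kernel norms controlled by Cauchy--Schwarz and the factorial weights absorbed into $(q+h)^{(\delta-d)/2}(N+1)^{(\delta-d)/2}$. Your explicit formula for $C_d(t)$ specializes at $d=1$ to $(q+r)\norm{u_\theta(t)}_2+(h+l)\norm{\alpha_\theta(t)}_2$, which is precisely the example the paper gives.
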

\begin{proof}
  The proof is a simple application of lemma~\ref{sec:quantum-theory-5}. $C_d(t)$ would be the sum of products of the $L^2$-norms of $u_\theta(t)$ and $\alpha_\theta(t)$, for example $C_1(t)=(q+r)\norm{u_\theta(t)}_2+(h+l)\norm{\alpha_\theta(t)}_2$.
\end{proof}
For all $\Phi\in\mathscr{H}^\delta$ the following identity holds:
  \begin{equation}\label{eq:7}
    B'\Phi\equiv C^*(u(t)/\lambda,\alpha(t)/\lambda)B C(u(t)/\lambda,\alpha(t)/\lambda)\Phi=\sum_{d=0}^\delta B^{(d)}\Phi\; .
  \end{equation}
For any $h\in \mathscr{C}^0(\mathds{R},L^2(\mathds{R}^3))$, $q\in\mathds{N}$ we define
\begin{equation*}
  h^{\otimes_{q}}(t,X_q)\equiv h(t,x_1)\otimes \dotsm\otimes h(t,x_q)\in  \mathscr{C}^0(\mathds{R},L^2(\mathds{R}^{3q}))\; .
\end{equation*}
Then we can formulate the following lemma.
\begin{lemma}\label{sec:general-remarks-3}
  Let $\varphi(\mathbf{g})$ defined as in Proposition~\ref{sec:useful-estimates-4}. Then for all $\Phi\in\mathscr{H}^\delta$ the following equality holds:
  \begin{equation*}
    B^{(\delta-1)}\Phi=\lambda^{-\delta+1}\varphi(\mathbf{g})\Phi\; ,
  \end{equation*}
with
\begin{align*}
  g_1(x)=& \sum_{\alpha=1}^q \int\de x_1\dotsm\de \hat{x}_\alpha\dotsm \de x_q\de{Y_r}\de{K_h}\ide{M_l}g(\dotsc,x_{\alpha-1},x,x_{\alpha+1},\dotsc)\\
&\bar{u}_\theta^{\otimes_{q-1}}(t,X_q\setminus x_\alpha)u_\theta^{\otimes_r}(t,Y_r)\bar{\alpha}_\theta^{\otimes_h}(t,K_h)\alpha_\theta^{\otimes_l}(t,M_l)\\
  g_2(x)=& \sum_{\alpha=1}^r \int\de{X_q}\de y_1\dotsm \de \hat{y}_\alpha\dots\de y_r\de{K_h}\ide{M_l}g(\dotsc,y_{\alpha-1},x,y_{\alpha+1},\dotsc)\\
&\bar{u}_\theta^{\otimes_{q}}(t,X_q)u_\theta^{\otimes_{r-1}}(t,Y_r\setminus y_\alpha)\bar{\alpha}_\theta^{\otimes_h}(t,K_h)\alpha_\theta^{\otimes_l}(t,M_l)
\end{align*}
\begin{align*}
  g_3(k)=& \sum_{\alpha=1}^h \int\de{X_q}\de{X_r}\de k_1\dotsm\de\hat{k}_\alpha\dotsm \de k_h\ide{M_l}g(\dotsc,k_{\alpha-1},k,k_{\alpha+1},\dotsc)\\
&\bar{u}_\theta^{\otimes_{q}}(t,X_q)u_\theta^{\otimes_{r}}(t,Y_r)\bar{\alpha}_\theta^{\otimes_{h-1}}(t,K_h\setminus k_\alpha)\alpha_\theta^{\otimes_l}(t,M_l)\\
  g_4(k)=& \sum_{\alpha=1}^l \int\de{X_q}\de{Y_r}\de{K_h}\de m_1\dotsm\de \hat{m}_\alpha\dotsm\ide{m_l}g(\dotsc,m_{\alpha-1},k,m_{\alpha+1},\dotsc)\\
&\bar{u}_\theta^{\otimes_{q}}(t,X_q)u_\theta^{\otimes_{r}}(t,Y_r)\bar{\alpha}_\theta^{\otimes_h}(t,K_h)\alpha_\theta^{\otimes_{l-1}}(t,M_l\setminus m_\alpha)\; .
\end{align*}
\end{lemma}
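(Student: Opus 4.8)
The plan is to obtain $B^{(\delta-1)}$ by reading off directly, from the Weyl‑conjugation expansion
\begin{equation*}
  C^*(u_\theta(t)/\lambda,\alpha_\theta(t)/\lambda)\,B\,C(u_\theta(t)/\lambda,\alpha_\theta(t)/\lambda)=\sum_{d=0}^\delta B^{(d)}
\end{equation*}
of equation~\eqref{eq:7}, the term in which exactly one of the $\delta$ field operators of $B$ survives. The key input is the first part of Proposition~\ref{sec:quantum-theory-8}: conjugation by the Weyl operator shifts each operator‑valued distribution by a $c$‑number, so that
\begin{equation*}
  C^*\psi^*(x)C=\psi^*(x)+\tfrac{1}{\lambda}\bar u_\theta(x)\,,\quad C^*\psi(x)C=\psi(x)+\tfrac{1}{\lambda}u_\theta(x)\,,
\end{equation*}
and analogously $C^*a^*(k)C=a^*(k)+\tfrac{1}{\lambda}\bar\alpha_\theta(k)$, $C^*a(k)C=a(k)+\tfrac{1}{\lambda}\alpha_\theta(k)$. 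These shifts are exactly the correspondence used to define the $B^{(d)}$, so expanding the normal‑ordered product $C^*\psi^*(X_q)\psi(Y_r)a^*(K_h)a(M_l)C$ factor by factor and retaining the terms carrying precisely $\delta-1$ of these $c$‑numbers isolates $B^{(\delta-1)}$.

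Next I would enumerate the surviving operator. Since $B$ has $\delta=q+r+h+l$ factors, leaving one untouched means the survivor is one of the $q$ factors $\psi^*(x_\alpha)$, the $r$ factors $\psi(y_\alpha)$, the $h$ factors $a^*(k_\alpha)$ or the $l$ factors $a(m_\alpha)$, and each replacement contributes a factor $\frac{1}{\lambda}$, giving the global prefactor $\lambda^{-(\delta-1)}$. In each case I integrate the kernel $g$ together with the $c$‑number factors over every replaced variable; this leaves the surviving operator smeared against a function of a single variable. Summing the $q$ ways a $\psi^*$ can survive produces $\psi^*(g_1)$ with $g_1$ the contraction of $g$ with $\bar u_\theta^{\otimes(q-1)}u_\theta^{\otimes r}\bar\alpha_\theta^{\otimes h}\alpha_\theta^{\otimes l}$ over the omitted slots, exactly as displayed; likewise the surviving $\psi$, $a^*$, $a$ assemble into $\psi(g_2)$, $a^*(g_3)$, $a(g_4)$ with the stated kernels. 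Adding the four contributions yields $B^{(\delta-1)}\Phi=\lambda^{-\delta+1}\bigl(\psi^*(g_1)+\psi(g_2)+a^*(g_3)+a(g_4)\bigr)\Phi=\lambda^{-\delta+1}\varphi(\mathbf{g})\Phi$, which is the claim.

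Because the shifts are genuine $c$‑numbers, they commute through all the field operators, so the original normal ordering is preserved and no commutators are generated in the expansion: the identity is a purely algebraic rearrangement. I would check that it holds on $\mathscr{H}^\delta$, where by Lemma~\ref{sec:quantum-theory-5} every intermediate term is well defined and the final single field $\varphi(\mathbf{g})$ maps $\mathscr{H}^1\to\mathscr{H}^0$. There is no analytic obstacle here; the only real work is the combinatorial bookkeeping — tracking which index is omitted from each tensor product, using the symmetry of $g$ within each block of variables, and fixing the conjugation conventions that decide whether $g$ or $\bar g$ and $u_\theta$ or $\bar u_\theta$ occupy each slot. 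The substance of the lemma is precisely the structural fact that the subleading term in the shift collapses to a single smeared field, which is what makes Proposition~\ref{sec:useful-estimates-4} applicable to it later.
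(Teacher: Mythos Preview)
Your proposal is correct and is precisely the computation the paper has in mind: the paper states this lemma without proof, treating it as an immediate unwinding of the definition of $B^{(d)}$ (substitute $\delta-1$ of the $\delta$ field factors by the $c$-numbers $\lambda^{-1}u_\theta^\#$, $\lambda^{-1}\alpha_\theta^\#$, sum over the choice of survivor, and integrate out the replaced variables against $\bar g$). Your remark that the shifts are scalars, so no commutators arise and the normal order is preserved, together with the appeal to Lemma~\ref{sec:quantum-theory-5} for well-definedness on $\mathscr{H}^\delta$, covers everything the paper leaves implicit.
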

To improve readability we make the following definitions:
  \begin{equation*}
    L_W(\delta,t,s)=\Bigl(\lambda\norm{f_0}_2\ass{t-s}+\mathcal{L}_1(\delta,t,s)\Bigr)e^{\lambda\mathcal{C}_1(\delta)\ass{t-s}+\mathcal{L}_2(\delta,t,s)}
  \end{equation*}
where the functions and constants on the right hand side are defined in Proposition~\ref{sec:useful-estimates-3}, with $\delta$-dependence made explicit.
\begin{equation*}
  L_U(\delta,t,s)=\exp\biggl\{\frac{\ass{\delta}}{2}\biggl(\ln 3+ \sqrt{2}c_2(\delta)\ass{\int_s^t\ide{\tau}\norm{v_{--}(\tau)}_2}\biggr)\biggr\}\; ,
\end{equation*}
with $c_2(\delta)=\max(4,3^{\ass{\delta}/2}+1)$.
\begin{proposition}
\label{sec:proof-psi-vectors-1}
Two constants $K_j(\Theta)$ with $j=1,2$ exist such that for all $g\in L^2(\mathds{R}^{3\delta})$
\begin{equation*}\begin{split}
\ass{\langle B \rangle_\Theta(t)-\delta_{qr}\int_0^{2\pi}\frac{\de\theta}{2\pi}\braket{g}{\bar{u}^{\otimes_q}_\theta u^{\otimes_r}_\theta\bar{\alpha}^{\otimes_i}_\theta\alpha^{\otimes_j}_\theta(t)}_{L^2(\mathds{R}^{3\delta})}}\leq \delta_{qr}\lambda^2 \norm{g;L^2(\mathds{R}^{3\delta})}K_1(\Theta)\ass{t}e^{K_2(\Theta)\ass{t}}\; .
\end{split}\end{equation*}
\end{proposition}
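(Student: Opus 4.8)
Proof proposal. The plan is to reduce the fixed-particle amplitude to a single phase average of vacuum expectations of the coherent-state dynamics, and then to run, term by term in the expansion \eqref{eq:7}, the argument of lemma~\ref{sec:proof-theor-refs}. First I would observe that $U(t)$ commutes with $N_1$ and that $\Theta\in\mathscr{H}_{i,j}$ has exactly $i$ non-relativistic particles, whereas $B$ shifts $N_1$ by $q-r$; hence $\langle B \rangle_\Theta(t)=\lambda^\delta\braket{\Theta}{U^*(t)BU(t)\Theta}$ vanishes unless $q=r$, which accounts directly for the factor $\delta_{qr}$. Granting $q=r$, I would invoke lemma~\ref{sec:trans-ampl-8} to rewrite the amplitude as an average over $\theta\in[0,2\pi)$ of coherent-state amplitudes built on the rotated relativistic datum $\alpha_0(\theta)=e^{-i\theta}\alpha_0$; the residual $\int_0^{2\pi}\frac{\de\theta}{2\pi}$ is exactly the average in the statement, and lemma~\ref{cl.rem:2} guarantees that this integral of the classical solutions $(u_\theta,\alpha_\theta)$ is well defined. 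The genuinely nontrivial nature of the residue comes from the fact that $A$ is \emph{not} invariant under $\alpha_0\mapsto e^{-i\theta}\alpha_0$, so $(u_\theta,\alpha_\theta)$ depends on $\theta$ in a non-trivial way.

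For each fixed $\theta$ I would use proposition~\ref{sec:quantum-theory-8} and the definition of $W(t,s)$ to pass to the interaction picture $\widetilde{W}(t,s)$ and to conjugate $B$ by the Weyl operator $C(u_\theta(t)/\lambda,\alpha_\theta(t)/\lambda)$. This turns the coherent vector into $\Omega$ and replaces $B$ by $B'=\sum_{d=0}^\delta B^{(d)}$ as in \eqref{eq:7}, the classical data in $B^{(d)}$ being precisely $(u_\theta,\alpha_\theta)$, so that every contribution becomes a vacuum expectation $\braket{\widetilde{W}(t,0)\Omega}{B^{(d)}\widetilde{W}(t,0)\Omega}$ exactly as in the vacuum case of theorem~\ref{sec:class-limit-annih-1}. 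The term $d=\delta$ is the $c$-number equal, after the prefactor $\lambda^\delta$, to the pairing $\braket{g}{\bar{u}_\theta^{\otimes_q}u_\theta^{\otimes_r}\bar{\alpha}_\theta^{\otimes_h}\alpha_\theta^{\otimes_l}(t)}$; since $\widetilde{W}(t,0)$ is unitary it survives the $\theta$-average untouched and produces the claimed limit. It then remains to show that $\sum_{d\leq\delta-1}\lambda^\delta\braket{\widetilde{W}(t,0)\Omega}{B^{(d)}\widetilde{W}(t,0)\Omega}$ is $O(\lambda^2)$, uniformly in $\theta$ and in $t$ on compact intervals.

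The subleading term is handled exactly as in lemma~\ref{sec:proof-theor-refs}. By lemma~\ref{sec:general-remarks-3}, $B^{(\delta-1)}=\lambda^{-\delta+1}\varphi(\mathbf{g})$, so after the prefactor $\lambda^\delta$ it carries a single power of $\lambda$; inserting $\widetilde{U}_2(t,0)$ and splitting into the $\widetilde{U}_2$-term plus the two $\widetilde{W}-\widetilde{U}_2$ differences, the $\widetilde{U}_2$-term vanishes because proposition~\ref{sec:useful-estimates-4} forces $\widetilde{U}_2^*(t,0)\varphi(\mathbf{g})\widetilde{U}_2(t,0)\Omega$ to be a one-particle vector, orthogonal to $\Omega$, while the difference terms gain the extra $\lambda$ through theorem~\ref{sec:conv-quant-evol-1} and the Duhamel estimate of section~\ref{sec:convergence-wt-s} — altogether $O(\lambda^2)$. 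Each term with $d\leq\delta-2$ carries, after the prefactor, a factor $\lambda^{\delta-d}\geq\lambda^2$, and its remaining $\delta-d\geq 2$ field operators are bounded by lemma~\ref{sec:general-remarks-1} together with the $\lambda$-convergent estimate of proposition~\ref{sec:useful-estimates-3} applied with $\Phi=\Omega$, so that the $\mathscr{H}^{\delta'}$-norms are harmless. Summing the $\binom{\delta}{d}$ contributions and testing against arbitrary $g\in L^2(\mathds{R}^{3\delta})$ yields the stated pairing bound, hence by duality the asserted estimate, with $K_1(\Theta),K_2(\Theta)$ absorbing $\norm{f_0}_2$, the classical norms $\norm{u_\theta(t)}_2,\norm{\alpha_\theta(t)}_2$, and the exponentials in $\ass{\int_0^t\ide{\tau}\norm{v_{--}(\tau)}_2}$.

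The main obstacle is the reduction of the first paragraph: one must verify that lemma~\ref{sec:trans-ampl-8} indeed collapses the amplitude in the asymmetric way the statement requires — a Kronecker $\delta_{qr}$ from the conserved non-relativistic number together with a single surviving integral over the rotated relativistic datum — and, crucially, that the $\lambda$-dependent normalisations intrinsic to writing a state of $i=j=\lambda^{-2}$ particles as a (phase) average of coherent vectors, which grow like inverse powers of $\lambda$, are exactly compensated by the $\lambda^{-d}$ of lemma~\ref{sec:general-remarks-1} and the $\lambda^\delta$ prefactor so as to leave genuinely $O(\lambda^2)$ remainders uniformly in $t$. Once the bookkeeping of these $\lambda$-powers and of the single residual phase is under control, the remainder of the argument is a faithful repetition, term by term on the expansion \eqref{eq:7}, of the vacuum estimate of lemma~\ref{sec:proof-theor-refs}.
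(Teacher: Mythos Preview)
Your overall strategy is right in spirit but contains a structural error that makes the argument, as written, fail. You claim that after invoking lemma~\ref{sec:trans-ampl-8} and conjugating by the Weyl operator, ``every contribution becomes a vacuum expectation $\braket{\W(t,0)\Omega}{B^{(d)}\W(t,0)\Omega}$''. This is not what happens. The phase-integral representation of lemma~\ref{sec:trans-ampl-8} is applied to \emph{one} copy of $\Theta$ only; the other side stays as $\Theta$. After inserting $C(u_0/\lambda,\alpha_0(\theta)/\lambda)$ and passing to $W_\theta$, the amplitude becomes
\begin{equation*}
\lambda^\delta d^2_{\lambda^{-2}}\int_0^{2\pi}\frac{\de\theta}{2\pi}\,e^{i\theta/\lambda^2}\braket{C^*(u_0/\lambda,\alpha_0(\theta)/\lambda)\Theta}{W_\theta^*(t,0)B^{(d)}W_\theta(t,0)\Omega}\; ,
\end{equation*}
so the left vector is $\Phi_\theta\equiv C^*(u_0/\lambda,\alpha_0(\theta)/\lambda)\Theta$, not $\Omega$. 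If you tried instead to expand both $\Theta$'s, you would get two angles $\theta_1,\theta_2$ and two different $W$'s that do not combine into a single $B'$, and the reduction collapses.

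This asymmetry has two consequences you have not addressed. First, the prefactor $d^2_{\lambda^{-2}}$ (which grows polynomially in $\lambda^{-1}$) does not cancel against anything in your scheme; in the paper it is killed by lemma~\ref{sec:trans-ampl-7}, namely the bound $\norm{(N_1+1)^{-1/2}(N_2+1)^{-1/2}\Phi_\theta}\leq L_\Theta d_{\lambda^{-2}}^{-2}$, which is why the terms $d\leq\delta-2$ come out $O(\lambda^{\delta-d})$ after pairing against $\sqrt{(N_1+1)(N_2+1)}\,W^*B^{(d)}W\Omega$. You do flag the $\lambda$-bookkeeping as the main obstacle, but the missing ingredient is precisely this lemma. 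Second, for $d=\delta-1$ the $\U_2$-term is \emph{not} zero by orthogonality of a one-particle state to $\Omega$; it is zero because, by the second identity of lemma~\ref{sec:trans-ampl-8}, $(\mathbb{N})_1\Phi_\theta=0$, so $\Phi_\theta$ has no one-particle component to pair with $\U_2^*(t,0)\varphi(\tilde{\mathbf{g}})\U_2(t,0)\Omega$. Once you put these two pieces in place, the rest of your outline --- the Duhamel comparison $\W-\U_2$ for the subleading term and the direct $\W$-bounds from proposition~\ref{sec:useful-estimates-3} for $d\leq\delta-2$ --- matches the paper's proof.
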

\begin{proof}
We need the following lemma, proved applying results proved in~\citet{2011arXiv1103.0948C} and~\citet{MR2530155} to $\mathscr{F}_s(p)$ and $\mathscr{F}_s(n)$.
\begin{lemma}\label{sec:trans-ampl-8}
Let $u_0,\alpha_0\in L^2(\mathds{R}^3)$, such that $\norm{u_0}_2=\norm{\alpha_0}_2=1$ and $d_{x}\equiv\frac{\sqrt{x!}}{e^{-x/2}x^{x/2}}$. Then we obtain the following identities:
\begin{gather*}
\Theta=d^2_{\lambda^{-2}}(\mathbb{N}_1)_{\lambda^{-2}}\int_0^{2\pi}\frac{\de\theta}{2\pi}e^{i\lambda^{-2}\theta}C(u_0/\lambda,\alpha_0(\theta)/\lambda)\Omega=d^2_{\lambda^{-2}}(\mathbb{N}_1)_{\lambda^{-2}}(\mathbb{N}_2)_{\lambda^{-2}}C(u_0/\lambda,\alpha_0/\lambda)\Omega\\
(\mathbb{N})_1C^*(u_0/\lambda,\alpha_0(\theta)/\lambda)\Theta=0\; ,
\end{gather*}
where $(\mathbb{N}_1)_p$ is the orthogonal projector on $\mathscr{H}_p$, $(\mathbb{N}_1)_{p}(\mathbb{N}_2)_{n}$ the projector on $\mathscr{H}_{p,n}$, $(\mathbb{N})_1$ the projector on $\mathscr{H}_{0,1}\oplus\mathscr{H}_{1,0}$.
\end{lemma}
Write $\langle B \rangle_\Theta(t)=\lambda^\delta\braket{\Theta}{U^*(t)B U(t)\frac{\psi^*(u_0)^{\lambda^{-2}}}{\sqrt{\lambda^{-2}!}}\frac{a^*(\alpha_0)^{\lambda^{-2}}}{\sqrt{\lambda^{-2}!}}\Omega}$. Observe that when $q\neq r$ we have $\langle B \rangle_\Theta(t)=0$ since $N_1$ commutes with $H$ and $B$ doesn't preserve the number of non-relativistic particles. So we will set $q=r$ for the rest of the proof. Using equation~\eqref{eq:7} and the first equality of lemma~\ref{sec:trans-ampl-8} we obtain
\begin{equation*}\begin{split}
    \langle B \rangle_\Theta(t)=\lambda^\delta d^2_{\lambda^{-2}}\int_0^{2\pi}\frac{\de\theta}{2\pi}e^{i\theta/\lambda^2}\braket{C^*(u_0/\lambda,\alpha_0(\theta)/\lambda)\Theta}{W_\theta^*(t,0)B' W_\theta(t,0)\Omega}\; ,
\end{split}\end{equation*}
where $W_\theta$ is the operator $W$ defined above with solution $(u_\theta(t),\alpha_\theta(t))$ instead of $(u(t),\alpha(t))$. Since throughout the proof we will use $W_\theta$ instead of $W$, we will omit the index $_\theta$. First of all consider $B^{(\delta)}$:
\begin{equation*}
  \begin{split}
    \langle B^{(\delta)} \rangle_\Theta(t)=\int_0^{2\pi}\frac{\de\theta}{2\pi}\braket{g}{\bar{u}^{\otimes_q}_\theta u^{\otimes_r}_\theta\bar{\alpha}^{\otimes_i}_\theta\alpha^{\otimes_j}_\theta(t)}_{L^2(\mathds{R}^{3\delta})}\braket{\Theta}{d^2_{\lambda^{-2}}e^{i\theta/\lambda^2} C(u_0/\lambda,\alpha_0(\theta)/\lambda)\Omega}
  \end{split}
\end{equation*}
then since $\Theta\in\mathscr{H}_{\lambda^{-2},\lambda^{-2}}$ and $\alpha_0(\theta)=\exp\{-i\theta\}\alpha_0$, the second equality of lemma~\ref{sec:trans-ampl-8} yields
\begin{equation*}
  \begin{split}
    \langle B^{(\delta)} \rangle_\Theta(t)=\int_0^{2\pi}\frac{\de\theta}{2\pi}\braket{g}{\bar{u}^{\otimes_q}_\theta u^{\otimes_r}_\theta\bar{\alpha}^{\otimes_i}_\theta\alpha^{\otimes_j}_\theta(t)}_{L^2(\mathds{R}^{3\delta})}\; .
  \end{split}
\end{equation*}
Then
\begin{equation*}\begin{split}
 \langle B \rangle_\Theta(t)-\int_0^{2\pi}\frac{\de\theta}{2\pi}\braket{g}{\bar{u}^{\otimes_q}_\theta u^{\otimes_r}_\theta\bar{\alpha}^{\otimes_i}_\theta\alpha^{\otimes_j}_\theta(t)}_{L^2(\mathds{R}^{3\delta})}=\sum_{d=0}^{\delta-1}\lambda^{\delta-d} d^2_{\lambda^{-2}}\int_0^{2\pi}\frac{\de\theta}{2\pi}e^{i\theta/\lambda^2}\\
\braket{\frac{1}{\sqrt{(N_1+1)(N_2+1)}}C^*(u_0/\lambda,\alpha_0(\theta)/\lambda)\Theta}{\sqrt{(N_1+1)(N_2+1)}W^*(t,0)B^{(d)}W(t,0)\Omega}.\end{split}
\end{equation*}
The following lemma is proved in~\citet{2011JMP} using sharp estimates of Laguerre polynomials obtained by~\citet{MR2168916}.
\begin{lemma}\label{sec:trans-ampl-7}
Let $u_0,\alpha_0\in L^2(\mathds{R}^3)$ such that $\norm{u_0}_2=\norm{\alpha_0}_2=1$. Then there is a constant $L_\Theta$ independent of $\lambda$ and $\theta$ such that:
  \begin{gather*}
    \norm{(N_1+1)^{-1/2}(N_2+1)^{-1/2}C^*(u_0/\lambda,\alpha_0(\theta)/\lambda)\Theta}\leq L_\Theta d_{\lambda^{-2}}^{-2}\; .
  \end{gather*}
\end{lemma}
Using it we obtain
\begin{equation*}
  \begin{split}
    \ass{ \langle B \rangle_\Theta(t)-\int_0^{2\pi}\frac{\de\theta}{2\pi}\braket{g}{\bar{u}^{\otimes_q}_\theta u^{\otimes_r}_\theta\bar{\alpha}^{\otimes_i}_\theta\alpha^{\otimes_j}_\theta(t)}_{L^2(\mathds{R}^{3\delta})}}\leq\sum_{d=0}^{\delta-2}\lambda^{\delta-d}L_\Theta\int_0^{2\pi}\frac{\de\theta}{2\pi}\\\norm{W^*(t,0)B^{(d)}W(t,0)\Omega}_{\mathscr{H}^{2}}
+\lambda^{\delta}d^2_{\lambda^{-2}}\int_0^{2\pi}\frac{\de\theta}{2\pi}\Bigl\lvert\braket{C^*(u_0/\lambda,\alpha_0(\theta)/\lambda)\Theta}{\\W^*(t,0)B^{(\delta -1)}W(t,0)\Omega}\Bigr\rvert\; .
  \end{split}
\end{equation*}

Since we are interested in the region where $\lambda<1$, $\lambda^a\leq\lambda^2$ for any $a\geq 2$. Furthermore $\norm{\Omega}_\delta=1$ for any $\delta$. We can apply two times the corollary of proposition~\ref{sec:useful-estimates-3}, lemma~\ref{sec:quantum-theory-5} and lemma~\ref{sec:general-remarks-1} to obtain: 
\begin{equation*}
  \begin{split}
    \norm{W^*(t,0)B^{(d)}W(t,0)\Omega}_{\mathscr{H}^2}\leq(q+h+1)^{\frac{15+\delta-d}{2}}C_d(t)L_W(2,0,t)L_W(\delta-d+2,t,0)\\
\norm{g;L^2(\mathds{R}^{3\delta})}
  \end{split}
\end{equation*}
So we can write:
\begin{equation*}
  \begin{split}
    \ass{ \langle B \rangle_\Psi(t)-\braket{g}{\bar{u}^{\otimes_q}u^{\otimes_q}\bar{\alpha}^{\otimes_i}\alpha^{\otimes_j}(t)}_{L^2(\mathds{R}^{3\delta})}}\leq \lambda^2 K_1\ass{t}e^{K_2\ass{t}}\norm{g;L^2(\mathds{R}^{3\delta})}\\
+\lambda^{\delta}d^2_{\lambda^{-2}}\int_0^{2\pi}\frac{\de\theta}{2\pi}\ass{\braket{C^*(u_0/\lambda,\alpha_0/\lambda)\Theta}{W^*(t,0)B^{(\delta -1)}W(t,0)\Omega}}\; ,
  \end{split}
\end{equation*}
with
\begin{equation*}
  \begin{split}
    K_1\ass{t}e^{K_2\ass{t}}\geq \sum_{d=0}^{\delta-2}(q+h+1)^{\frac{15+\delta-d}{2}}L_\Theta\int_0^{2\pi}\frac{\de\theta}{2\pi}C_d(t) L_W(2,0,t)L_W(\delta-d+2,t,0)\; .
  \end{split}
\end{equation*}

We have to use a different approach to estimate the last term of the inequality above, namely
\begin{equation*}\begin{split}
X\equiv \lambda^{\delta}d^2_{\lambda^{-2}}\int_0^{2\pi}\frac{\de\theta}{2\pi}\ass{\braket{C^*(u_0/\lambda,\alpha_0/\lambda)\Theta}{W^*(t,0)B^{(\delta -1)}W(t,0)\Omega}}\; .
\end{split}\end{equation*}
By Lemma~\ref{sec:general-remarks-3} and passing to the interaction representation:
\begin{equation*}
  X=\lambda^{\delta}d^2_{\lambda^{-2}}\int_0^{2\pi}\frac{\de\theta}{2\pi}\ass{\braket{C^*(u_0/\lambda,\alpha_0/\lambda)\Theta}{\W^*(t,0)\varphi(\tilde{\mathbf{g}})\W(t,0)\Omega}}\; ,
\end{equation*}
with $\tilde{g}_1(x)= U_{01}^*(t)g_1(x)$; $\tilde{g}_2(x)= U_{01}(t)g_2(x)$; $\tilde{g}_3(k)=U_{02}^*(t)g_3(k)$ and $\tilde{g}_4(k)= U_{02}(t)g_4(x)$. By the following identity:
\begin{equation*}
  \begin{split}
    \braket{\Phi}{\W^*(t,0) \varphi(\tilde{\mathbf{g}})\W(t,0)\Omega}=\braket{\Phi}{\U_2^*(t,0)\varphi(\tilde{\mathbf{g}})\U_2(t,0)\Omega}+\braket{\Phi}{(\W^*(t,0)\\-\U^*_2(t,0)) \varphi(\tilde{\mathbf{g}})\W(t,0)\Omega}+\braket{\Phi}{\U_2^*(t,0) \varphi(\tilde{\mathbf{g}})(\W(t,0)-\U_2(t,0))\Omega}\; ;
  \end{split}
\end{equation*}
with $\Phi=C^*(u_0/\lambda,\alpha_0/\lambda)\Theta$ we obtain using lemma~\ref{sec:trans-ampl-8}
\begin{equation*}\begin{split}
X\leq\lambda d^2_{\lambda^{-2}}\int_0^{2\pi}\frac{\de\theta}{2\pi}\Bigl(\ass{\braket{\Phi}{(\W^*(t,0)-\U^*_2(t,0)) \varphi(\tilde{\mathbf{g}})\W(t,0)\Omega}}+\Bigl\lvert\braket{\Phi}{\U_2^*(t,0) \varphi(\tilde{\mathbf{g}})(\W(t,0)\\-\U_2(t,0))\Omega}\Bigr\rvert\Bigr)\equiv \lambda d^2_{\lambda^{-2}}\int_0^{2\pi}\frac{\de\theta}{2\pi} (X_1+X_2)\; .
\end{split}\end{equation*}
We define $\norm{\mathbf{g}}_2=\norm{g_1}_2+\norm{g_2}_2+\norm{g_3}_2+\norm{g_4}_2$. Bound $X_1$, the integral making sense as strong Riemann integral on $\mathscr{H}$:
\begin{align*}
    X_1&\leq L_\Theta d^{-2}_{\lambda^{-2}}\biggl\lvert\int_0^t\ide{\tau}\Bigl\lVert \W^*(\tau,0)U_0^*(\tau)H_I U_0(\tau)\U_2(\tau,0)\U^*_2(t,0)\varphi(\tilde{\mathbf{g}})\W(t,0)\Omega\Bigr\rVert_{\mathscr{H}^2}\biggr\rvert\; .
\end{align*}
We remark that $\norm{\tilde{\mathbf{g}}}_2=\norm{\mathbf{g}}_2\leq C_{\delta-1}(t)\norm{g;L^2(\mathds{R}^{3\delta})}$, with $C_{\delta-1}(t)$ defined in Lemma~\ref{sec:general-remarks-1}. Then $\int_0^{2\pi}\frac{\de\theta}{2\pi}  X_1\leq \lambda d^{-2}_{\lambda^{-2}} K_1'\ass{t}e^{K'_2\ass{t}}\norm{g;L^2(\mathds{R}^{3\delta})}$ with
\begin{equation*}
  \begin{split}
    K_1'\ass{t}e^{K'_2\ass{t}} \geq L_\Theta 2^{17}\norm{f_0}_2 \int_0^{2\pi}\frac{\de\theta}{2\pi} C_{\delta-1}(t) L_U(19,0,t)L_W(20,t,0)\biggl\lvert\int_0^t\ide{\tau}L_W(2,0,\tau)\\L_U(19,\tau,0)\biggr\rvert
  \end{split}
\end{equation*}
Analogously bound $X_2$: $\int_0^{2\pi}\frac{\de\theta}{2\pi} X_2\leq \lambda d^{-2}_{\lambda^{-2}} K_1''\ass{t}e^{K''_2\ass{t}}\norm{g;L^2(\mathds{R}^{3\delta})}$ with
\begin{equation*}
  \begin{split}
    K_1''\ass{t}e^{K''_2\ass{t}}\geq L_\Theta 2^{65}\norm{f_0}_2 \int_0^{2\pi}\frac{\de\theta}{2\pi} C_{\delta-1}(t) L_U(2,0,t)L_W(3,t,0)\biggl\lvert\int_0^t\ide{\tau}L_W(21,0,\tau)\\L_U(133,\tau,0)\biggr\rvert
  \end{split}
\end{equation*}
\end{proof}
\begin{acknowledgments}
  The author wishes to express all his gratitude and give many thanks to Professor Giorgio Velo for illuminating advice and helpful discussions.
\end{acknowledgments}

\end{document}